\documentclass[11pt]{article}
\usepackage{amssymb}
\usepackage[margin=1.4in]{geometry}
\usepackage{graphicx}
\usepackage{color, soul} % soul for highlight
\usepackage{amsmath}
\usepackage{eurosym}

\newtheorem{theorem}{Theorem}

\newtheorem{definition}[theorem]{Definition}
\newtheorem{example}[theorem]{Example}

\newtheorem{proposition}[theorem]{Proposition}
\newtheorem{remark}[theorem]{Remark}

\newenvironment{proof}[1][Proof]{\textbf{#1.} }{\ \rule{0.5em}{0.5em}}

\def\essinf{\operatorname{ess.\!inf}}

\def\seuro{\operatorname{\mbox{\small \euro}}}

\begin{document}

\title{Capturing cash non-additivity and horizon risk\\
via BSDEs and generalized shortfall}
\author{Giulia Di Nunno\thanks{Department of Mathematics,
University of Oslo, P.O. Box 1053 Blindern, 0316 Oslo Norway.
Email: giulian@math.uio.no
and NHH - Norwegian School of Economics, Helleveien 30, 5045 Bergen, Norway.}
\thanks{This research is carried out in the frame of the SURE-AI Center, grant nr. 357482 of the Research Council of Norway.}
  \and Emanuela Rosazza Gianin\thanks{Department of Statistics and Quantitative Methods,
University of Milano-Bicocca, via Bicocca degli Arcimboldi 8, 20126 Milano Italy.
Email:  emanuela.rosazza1@unimib.it. This author is member of GNAMPA-INdAM, Italy.}}
\date{March 13, 2026}

\maketitle
\vspace{-0.7cm}
\begin{abstract}
Whenever dealing with horizons of different times scales, risk evaluation of losses may incur in both interest rate uncertainty and horizon risk as introduced in \cite{DNRG1}. With the goal to capture both effects, we work with cash subadditive fully-dynamic risk measures.
In this work we consider such measures obtained via the BSDE and the shortfall approaches. We stress that we consider BSDEs both with Lipschitz and quadratic drivers. We then introduce the hq-entropic risk measure on losses as an effective example of fully-dynamic risk measure serving the scope. 
Shortfall risk measures are extended to capture cash non-additivity. For our newly introduced h-generalized shortfall risk measures we provide a dual representation and we connect them to fully-dynamic certainty equivalent. 
To conclude, we can see that the hq-entropic risk measures on losses belong to the family h-generalized shortfall, but they are not of certainty equivalent type. We note that the classical entropic risk measure, besides being generated by a BSDE, is also both a shortfall and a certainty equivalent.

\vspace{2mm}\noindent
\textbf{Keywords:} horizon risk, relative entropy, cash non-additivity, interest rate uncertainty, dynamic risk measures, Lipschitz and quadratic BSDE 

\vspace{2mm}\noindent
\textbf{MSC2020:} 60H10, 60H20, 91B70, 91G70
\end{abstract}

\section{Introduction}

Horizon risk is defined as the use of a risk measure designed for long term positions when evaluating short term investments, and vice versa.
This risk is particularly critical, in the context of, e.g., pensions and health insurance, where long-term claims are to be expected and hedged.
In this context, for instance, we know that the use of outdated mortality rate, or the choice of an incorrect cohort may lead to wrong premia evaluations with a consequent impact on capital requirements.
In \cite{DNRG1}, horizon risk has been identified using fully-dynamic risk measures and introducing the notion of {\it horizon longevity} or {\it h-longevity}, in short, as an index of quantification.
Indeed, a {\it fully-dynamic risk measure} is a family $(\rho_{tu})_{t,u}\triangleq (\rho_{tu})_{0 \leq t \leq u \leq T}$ of risk measures indexed by two time parameters, which naturally takes care of the horizon in an explicit form, see~\cite{bion-nadal-di-nunno}.
To understand horizon risk, we must recall the {\it restriction property} (see~\cite{bion-nadal-di-nunno}):
\begin{equation}
\label{eq: restriction}
\rho_{tu}(X) = \rho_{tv}(X), \qquad \text{for }  v \geq u, \text{ and } X \; \text{$\mathcal{F}_u$-measurable}, \quad
\end{equation}
with respect to the information flow $(\mathcal{F}_t)_{t\in [0,T]}$.
Roughly speaking, this means that a risky position $X$ at a short horizon $u$ is evaluated as if it was happening at a longer horizon $v$.
On the contrary, whenever the restriction is lifted, we have an open possibility to quantify horizon risk. For this, the notion of {\it h-longevity} has been proposed in~\cite{DNRG1} in terms of a correction term:
  \begin{equation}
  \label{eq: time-value}
  \gamma(t,u,v,X) \triangleq  \rho_{tv}(X) - \rho_{tu} (X) \geq 0,  \quad   t \leq u \leq v.
  \end{equation}

When working with different time-scales of short and long time horizons, we must take the value of money into account via interest rates, which may be uncertain, see e.g. \cite{filipovic, EK-rav} and the discussion in~\cite{farkas}.
The explicit use of interest rates leads directly to the introduction of cash subadditive risk measures as noted in~\cite{EK-rav}.
We recall that {\it cash additivity} (also called {\it $\mathcal{F}_t$-translation invariance}) refers to:
\begin{equation}
\label{eq: CA}
\rho_{tu} (X + m) = \rho_{tu} (X ) - m, 
\end{equation}
for $\mathcal{F}_u$-measurable $X$ and  $\mathcal{F}_t$-measurable $m$.
When \eqref{eq: CA} does not hold, the risk measure is called {\it cash non-additive}. In particular, it is called {\it cash subadditive} (see~\cite{EK-rav} and also~\cite{mastrogiacomo-rg, wang-csa}) when
\begin{equation*}
%\label{eq: CSA}
\rho_{tu} (X + m) \geq \rho_{tu} (X ) - m, \qquad m \geq 0.
\end{equation*}

\vspace{2mm}
To explain the interaction with interest rate uncertainty, we must make explicit the quantities in terms of unit of money. Thus we call $\seuro_u$ the unit of money at time $u$. Hence a financial investment available at $u$ is denoted $X\seuro_u$, where $X$ represents the size of the investment.
Also, let $(D_{tu})_{t,u}$ be the family of discount factors $D_{tu}$ on the time interval $(t,u]$. The unit of measurement for $D_{tu}$ is then $1/ \seuro_u$.
It is then assumed that $0 < d_{tu} \leq D_{tu} \seuro_u \leq 1$, for some (stochastic) lower bound $d_{tu}$.
Then, for any cash additive fully-dynamic risk measure $(\phi_{tu})_{t,u}$, we define
\begin{equation*}
\rho_{tu}(X) \triangleq \phi_{tu} (D_{tu} X \seuro_u) , \qquad X \;\text{$\mathcal{F}_u$-measurable},
\end{equation*}
which is cash subadditive. In fact, since $D_{tu}\seuro_u \leq 1$, by monotonicity and cash additivity of $\phi_{tu}$, we have
\begin{eqnarray*}
\rho_{tu}(X+m) &=&  \phi_{tu} (D_{tu} (X + m) \seuro_u  ) \\
&\geq & \phi_{tu} (D_{tu} X \seuro_u + m )  = \phi_{tu} (D_{tu} X \seuro_u)- m =  \rho_{tu}(X)-m,
\end{eqnarray*}
for any $X$ $\mathcal{F}_u$-measurable and $m\geq 0$ $\mathcal{F}_t$-measurable.
This observation triggers the interest in cash subadditive and more generally cash non-additive risk measures, which will be discussed in this work, with the goal of providing a family of fully-dynamic risk measures satisfying both horizon longevity and cash non-additivity.

\vspace{2mm}
In our previous work \cite{DNRG1}, we have extended the classical entropic risk measure to its h-entropic version, so to capture horizon risk. Indeeed, the \textit{entropic risk measure} $\rho_{tu}(X)$ is generated by a BSDE of type:
\begin{equation*}
-dY_t = \frac{b}{2}\vert Z_t \vert^2 dt - Z_t dB_t, \quad Y_u = -X,
\end{equation*}
with $(B_t)_{t\geq 0}$ d-dimensional Brownian motion and risk aversion $b>0$, leading to
\begin{equation}
\label{eq: entropica-expl}
\rho_{tu}(X)= Y_t=\frac{1}{b}\ln E  \left[\left. \exp(-bX) \right| \mathcal{F}_t \right].
\end{equation}

In \cite{DNRG1}, we have introduced the \textit{h-entropic risk measure} generated by a BSDE with driver
\begin{equation*}
 g^h(t,y,z) = \frac{b}{2}\vert z \vert^2 +a(t) , \quad t \leq u, z\in \mathbb{R^d},
 \end{equation*}
where $a$ is a non-negative real function. The risk measure is then given by
\begin{equation}
\label{eq: hentropica-expl}
\rho^h_{tu}(X)= Y_t = \frac{1}{b}\ln E  \left[\left. \exp\Big(-bX+ \int_t^u ba(s) \,ds\Big) \right| \mathcal{F}_t \right].
\end{equation}
To simplify the notation, we set $b=1$ in the sequel.

At this stage, the two measures \eqref{eq: entropica-expl} and \eqref{eq: hentropica-expl} are both cash additive.
Inspired by the works of \cite{ma-tian, tian}, we aim at defining a class for risk measures of entropic type both cash non-additive and able to capture horizon risk.
Specifically, we are interested in the evaluation of losses exceeding a given level of severity $\beta \geq 0$.
For this, we introduce the \textit{hq-entropic risk measure on losses} induced by the BSDE
\begin{equation*}
    -dY_t = \Big[ \frac{q}{2} \, \frac{\vert Z_t \vert ^2}{1+ (1-q) Y_t} + a(t) \Big] dt - Z_t dB_t, \qquad Y_u = (X+\beta)^- + \alpha_q,  
\end{equation*}
for some $q\in (0,1)$ and $\alpha_q \geq \frac{1}{q-1}$, representing the risk attitude.
The risk measure has the following explicit formulation
\begin{equation}\label{eq: hq-entropic-intro}
    \rho^{hq}_{tu}(X)= Y_t =\ln_q E  \left[\left. \exp_q\Big((X+\beta)^- +\alpha_q+ \int_t^u a(s) \,ds\Big) \right| \mathcal{F}_t \right],
\end{equation}
in terms of the generalized logarithm and exponential functions ($\ln_q$ and $\exp_q$)\footnote{
From~\cite{tsallis1}, recall that $\ln_q(x) \to \ln x$ and $\exp_q (x) \to \exp (x)$, for $q \to 1$, where
\vspace{-2mm}
$$\exp_q(X)\triangleq [1+(1-q)x]^{\frac{1}{1-q}} \,\Bigg\{
\begin{array}{l}
\mbox{for } x\geq \frac{1}{q-1}, q \in (0,1)\\
\mbox{for }x< \frac{1}{q-1}, q>1
\end{array}
; \qquad
\ln_q(x) \triangleq \frac{x^{1-q}-1}{1-q}
\,\Bigg\{
\begin{array}{l}
\mbox{for } x\geq 0, q \in (0,1)\\
\mbox{for }x>0, q>1
\end{array}
$$}.
See Tsallis relative entropy~\cite{tsallis1, tsallis2}.

 The first part of this work is dedicated to formalizing the concepts above. We detail the properties of h-longevity and cash non-additivity in the BSDE approach both with Lipschitz and quadratic drivers, so to accommodate the extensions of entropic risk measures we will introduce.
 
\vspace{2mm}
The classical entropic risk measure can also be regarded as a \textit{shortfall risk measure}, that is
\begin{equation*}
    \rho_{tu} (X) = \essinf \big\{m_t \; \mathcal{F}_t-\mbox{measurable} : \; E \big[ U(X+m_t) \vert \mathcal{F}_t \big] \geq B \big\}
\end{equation*}
for $X$ $\mathcal{F}_u$-measurable, with $U$ utility function and $B$ a fixed target level. 
See \cite{follmer-schied-book}.
Motivated by this connection, we study the extension of shortfall risk measures to capture both cash non-additivity and horizon risk.
This leads to the concept of \emph{h-generalized shortfall risk measure} as 
 \begin{equation*}
    \rho^{U,f,B}_{tu} (X) = \essinf \big\{m_t \; \mathcal{F}_t-\mbox{measurable} : \; E \big[ U_u(f_u(X,m_t)) \vert \mathcal{F}_t \big] \geq B_{tu} \big\}
\end{equation*}
 by means of the families of concave utility functions $(U_u)_u$, real aggregator functions $(f_u)_u$, and deterministic target levels $(B_{tu})_{t,u}$.
 Even further, we have studied the relationship between the h-generalized shortfall and the so-called \emph{certainty equivalent risk measure}, which we present in the fully-dynamic version:
\begin{equation*} 
\tilde{U}_u (-\rho^{\tilde U_u}_{tu}(X))=E[\tilde{U}_u (X) |\mathcal{F}_t], 
\end{equation*}
for $(\tilde{U}_u)_{u }$ of concave functions. 
Indeed our results show that a h-generalized shortfall is a certainty equivalent under specific relationship between parameters $(U_u,f_u, B_{tu})_{t,u}$ and $(\tilde U_u)_u$.

Winding back to our hq-entropic risk measure \eqref{eq: hq-entropic-intro}, we investigate whether it is indeed a h-generalized shortfall and a fully-dynamic certainty equivalent risk measure. 
We see that while it is true that \eqref{eq: hq-entropic-intro} can be represented as a h-generalized shortfall by means of a clever selection of parameters, it is however not true that it is of certainty equivalent type.

\smallskip
Our work is organized as follows.
In Section~\ref{Sec: BSDE}, we present the BSDE approach to the construction of fully-dynamic risk measures. We consider both the Lipschitz  and the quadratic drivers, bearing in mind to work with the entropic risk measures. We focus on cash non-additivity and characterize h-longevity. In the Lipschitz case we achieve an explicit formula for the h-longevity index.
To complement the study we also suggest to construct fully-dynamic risk measure from a family of BSDEs instead of a single one. This approach exacerbates the role of horizon risk.
In Section \ref{Sec3} we introduce and study the hq-entropic risk measure on losses.
Section \ref{Sec5} is proposing the generalized shortfall representation (static and dynamic) to embed both cash non-additivity and h-longevity. These measures are quasi-convex. We study their dual representation. 
In Section \ref{Sec CE} we consider the relationship between h-generalized shortfall and fully-dynamic certainty equivalent risk measures. 
In the last section, we recognize the hq-entropic risk measure on losses as a generalized shortfall. We note, however, that it is not a fully-dynamic certainty equivalent.

 %%%%%%%%%%%%%%%%%%%%

\section{BSDEs' approach}
\label{Sec: BSDE}

On a complete probability space $(\Omega, \mathcal{F},P)$, let $(B_t)_{t \in [0,T]}$ be a $d$-dimensional Brownian motion and $(\mathcal{F}_t)_{t \in [0,T]}$ its $P$-augmented natural filtration. We restrict our attention on $L^2$ spaces.
We focus on fully-dynamic risk measures $(\rho_{tu})_{t,u}$ associated to BSDEs of the following form
\begin{equation} \label{eq: BSDE}
Y_t= X+ \int_t^u g(s,Y_s,Z_s) \, ds - \int_t^u Z_s \, dB_s \quad (u\in [0,T]),
\end{equation}
whose solution $(Y_t,Z_t)_t \triangleq (Y_t,Z_t)_{t \in [0,u]}$ can be seen as a nonlinear operator depending on the driver $g$ and evaluated at the final condition $X \in L^2( \mathcal{F}_u)$. See~\cite{peng97} for the Lipschitz case and \cite{bahlali-et-al} for the linear-quadratic case. About this, see also \cite{bahlali-tangpi} and the seminal paper by \cite{kolylanski} for bounded terminal conditions. In Peng's terminology,
$\mathcal{E}^g \left(X \vert \mathcal{F}_t \right)$ denotes the $Y$-component of the solution $(Y_t,Z_t)$ at time $t$ of the BSDE above, called conditional $g$-expectation of $X$ at time $t$.

The existence of a solution to \eqref{eq: BSDE} depends on the properties of the parameters $(X,g)$.
Typically, the literature calls {\it standard assumptions} $X\in L^2(\mathcal{F}_u)$ and
a progressively measurable driver $g: \Omega \times [0,u]\times \Bbb R \times \Bbb R^d \to \Bbb R$
satisfying:
%\vspace{-1mm}
\begin{itemize}
%\item adapted,
\item uniformly Lipschitz, i.e. there exists a constant $C>0$ such that, $dP \times dt$-a.e.,
\begin{equation*}
\vert g(\omega,t, y_1, z_1) - g(\omega, t , y_2, z_2)\vert \leq C (\vert y_1 - y_2 \vert + \vert z_1 - z_2\vert ),
\end{equation*}
for any $y_1, y_2 \in \Bbb R, \, z_1, z_2 \in \Bbb R^d$,
where $\vert \cdot \vert$ denotes the Euclidean norm in $\Bbb R^k$;
\item $E\left[\int_0^u \vert g(t,0,0)\vert ^2 \, dt \right]<+\infty$.
\end{itemize}
Such standard assumptions guarantee that equation~\eqref{eq: BSDE} admits a unique solution $(Y_t,Z_t)_{t}$, with $(Y_t)_{t} \in \Bbb H^2_{[0,u]} (\Bbb R)$ and $(Z_t)_{t} \in \Bbb H^2_{[0,u]} (\Bbb R^d)$, where
\begin{equation*}%\label{eq: def H2}
\Bbb H^2_{[a,b]} (\Bbb R^k) \hspace{-1mm} \triangleq \hspace{-1mm} \Big\{
\mbox{adapted } \Bbb R^k\mbox{-valued processes } (\eta_t)_{t \in [a,b]}: \hspace{-1mm} E\Big[\hspace{-1mm}\int_a^b \hspace{-2mm} \vert \eta_t \vert^2 \, dt \Big] \hspace{-1mm} < \hspace{-1mm} \infty
\Big\}.
\end{equation*}
The quadratic case is more delicate. Large part of the literature considers $X\in L^\infty(\mathcal{F}_u)$ and drivers $g$ that are continuous in $(y,z)$ and satisfying
$$
\vert g(t,y,z) \vert \leq a_t + b_t \vert y \vert +c_t \vert z \vert + f(y) \vert z\vert^2
$$
with different assumptions on $a,b,c$ and $f$.
See \cite{kolylanski, bahlali-et-al,bahlali-tangpi}. While the existence of the solution is established, uniqueness is yet another challenge. The context of $X\in L^2(\mathcal{F}_u)$ is analyzed in \cite{bahlali-et-al}, where existence and uniqueness are proved for $g(t,y,z) = f(y) \vert z \vert^2$, with $f$ globally integrable.
Their argument can be easily extended as follows.
\begin{proposition} \label{prop: quadexistence}
For $X\in L^2(\mathcal{F}_u)$ and 
\begin{equation} \label{eq: gquad}
    g(t,y,z) = a_t + f(y) \vert z \vert^2
\end{equation}
with $a\in L^1([0,u])$ and $f\in L^1(\mathbb{R)}$, the BSDE \eqref{eq: BSDE} has a unique solution in $ \Bbb H^2_{[0,u]} (\Bbb R) \times \Bbb H^2_{[0,u]} (\Bbb R^d)$.
Moreover, for non-negative $a$ and $f$, the $Y$-component of the unique solution is non-negative. 
\end{proposition}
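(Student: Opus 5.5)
The plan is to reduce the BSDE with driver $g(t,y,z)=a_t+f(y)\vert z\vert^2$ to the case $g=f(y)\vert z\vert^2$ treated in \cite{bahlali-et-al}. Since $a$ is deterministic and integrable, set $A_t \triangleq \int_t^u a_s\,ds$. A pair $(Y,Z)$ solves \eqref{eq: BSDE} with driver \eqref{eq: gquad} if and only if $\tilde Y_t \triangleq Y_t - A_t$, together with the same $Z$, solves
\begin{equation*}
\tilde Y_t = X + \int_t^u \tilde f_s(\tilde Y_s)\vert Z_s\vert^2\,ds - \int_t^u Z_s\,dB_s, \qquad \tilde f_s(y)\triangleq f(y+A_s).
\end{equation*}
Because $A$ is deterministic and bounded, $\tilde Y\in\Bbb H^2$ if and only if $Y\in\Bbb H^2$. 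Each $\tilde f_s$ is a translate of $f$, so $\Vert \tilde f_s\Vert_{L^1}=\Vert f\Vert_{L^1}$ uniformly in $s$.

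Next I would rerun the argument of \cite{bahlali-et-al} with a time-dependent space transform. Define
\begin{equation*}
u_s(y)\triangleq \int_0^y \exp\Big(2\int_0^x \tilde f_s(r)\,dr\Big)dx .
\end{equation*}
This function is $C^1$ in $y$ with a locally absolutely continuous derivative, and it satisfies $u_s''=2\tilde f_s u_s'$. Its derivative is squeezed between $e^{-2\Vert f\Vert_1}$ and $e^{2\Vert f\Vert_1}$, so $u_s$ is bi-Lipschitz uniformly in $s$.

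A simpler route avoids the time dependence altogether. Apply the transform $u(y)=\int_0^y \exp(2\int_0^x f)$ directly to $Y$, using the Itô--Krylov formula from \cite{bahlali-et-al}. Then $\bar Y=u(Y)$ satisfies
\begin{equation*}
\bar Y_t = u(X) + \int_t^u u'(u^{-1}(\bar Y_s))\,a_s\,ds - \int_t^u u'(Y_s)Z_s\,dB_s .
\end{equation*}
This BSDE has driver $\bar g(s,\bar y)=a_s\,u'(u^{-1}(\bar y))$, which does not depend on $\bar z$. The terminal value satisfies $u(X)\in L^2$ by the bi-Lipschitz bound. The driver is bounded by $C a_s$ and is continuous in $\bar y$.

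The main obstacle is that $\bar g$ need not be Lipschitz in $\bar y$, since $f$ is only integrable. I would handle it in two steps.
\begin{itemize}
\item \emph{Existence.} Use monotone/comparison results for drivers that are continuous and of linear growth, as in Lepeltier--San Martín, or simply reuse the approximation scheme of \cite{bahlali-et-al}.
\item \emph{Uniqueness.} This should come out exactly as in \cite{bahlali-et-al}. Given two solutions, transform both by the same $u$. The difference then satisfies a linear equation in which $a\ge0$ enters only through $\bar g$. If uniqueness for $\bar g$ fails directly, I would instead use the time-dependent $u_s$ from the first reduction; it has the extra term $\partial_s u_s$, which is again controlled by $a_s\Vert f\Vert_\infty$-free bounds because $\partial_s u_s(y)$ involves $a_s\,\tilde f_s$ integrated against $u'_s$. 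Inverting the transform gives $Z=\bar Z/u'(Y)\in\Bbb H^2$.
\end{itemize}

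For the sign statement, assume $a,f\ge0$ and use the shifted form. Take conditional expectations in \eqref{eq: BSDE}, after localizing the stochastic integral, which is a true martingale since $Z\in\Bbb H^2$. This gives
\begin{equation*}
Y_t = E\Big[X+\int_t^u \big(a_s+f(Y_s)\vert Z_s\vert^2\big)ds\,\Big|\,\mathcal F_t\Big].
\end{equation*}
The integrand is non-negative, so $Y_t\ge E[X\vert\mathcal F_t]$. The statement as written has no sign assumption on $X$, so I would read non-negativity as holding for $X\ge0$, which is the case of interest, e.g. $(X+\beta)^-+\alpha_q$. Then $Y\ge0$. The alternative is a comparison with the solution for $a=f=0$, namely $E[X\vert\mathcal F_t]$.
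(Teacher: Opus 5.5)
Your first step is the paper's: it also shifts by $\int a$ (setting $\eta_t=Y_t+\int_0^t a_s\,ds$ and $\tilde f(x)=f(x-\int_0^t a_s\,ds)$). It then gets existence and uniqueness at once from \cite[Prop.~3.1]{bahlali-et-al}, and the sign from an adjusted comparison theorem. You are right that the shifted coefficient $f(\cdot+A_s)$ depends on $s$, so that citation does not apply verbatim. Write $\psi(y)=\int_0^y e^{2F(x)}dx$, with $F(y)=\int_0^y f$, for your time-independent transform. Your computation for $\bar Y=\psi(Y)$ is correct: its driver is $a_s\phi(\bar y)$ with $\phi=\psi'\circ\psi^{-1}$. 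Existence along this route is plausible after routine adaptations. These are Lepeltier--San Mart\'{\i}n with a growth bound $e^{2\Vert f\Vert_{L^1}}|a_s|$ that is only integrable in time, and It\^o's formula for $\psi^{-1}$ to return to $(Y,Z)$. Your sign argument via conditional expectations is correct and simpler than the paper's. You are also right that $X\ge0$ must be assumed: for $a=f=0$ and $X=-1$ one gets $Y\equiv-1$.

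The genuine gap is uniqueness. For two solutions, $\delta\bar Y=\psi(Y^1)-\psi(Y^2)$ solves $\delta\bar Y_t=\int_t^u a_s\kappa_s\,\delta\bar Y_s\,ds-\int_t^u\delta\bar Z_s\,dB_s$ with $\kappa_s=(\phi(\bar Y^1_s)-\phi(\bar Y^2_s))/\delta\bar Y_s$. This equation is linear only formally. Since $\phi'=2f\circ\psi^{-1}$, $\kappa$ is bounded only when $f\in L^\infty$, which is exactly what is not assumed. In the case $a,f\ge0$ you have $a\kappa\ge0$, which is the wrong sign. Tanaka then yields only $(\delta\bar Y_t)^+\le E[\int_t^u a_s\kappa_s(\delta\bar Y_s)^+ds\,|\,\mathcal F_t]$, and Gronwall needs $a\kappa$ bounded. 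Equivalently, $e^{\int_0^t a_s\kappa_s ds}\,\delta\bar Y_t$ is merely a local martingale with zero terminal value. Continuity and boundedness of $\bar g$ are not a uniqueness criterion on their own. Nothing in your proposal uses the remaining structure, namely $\phi\ge e^{-2\Vert f\Vert_{L^1}}>0$.

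Your fallback with $u_s$ changes nothing. One computes $u_s(\tilde Y_s)=e^{-2F(A_s)}\big(\psi(Y_s)-\psi(A_s)\big)$, a deterministic affine function of $\psi(Y_s)$, so the same non-Lipschitz drift $a_s\psi'(Y_s)$ reappears. The underlying obstruction is that the $s$-dependent shift prevents killing the $|z|^2$-term and the $a$-term simultaneously. In \cite{bahlali-et-al} both can be killed, so uniqueness is trivial there: $\psi(Y_t)=E[\psi(X)\,|\,\mathcal F_t]$.

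As it stands, your argument proves uniqueness only under an extra hypothesis. One option is $f\in L^1\cap L^\infty$, where $\bar g$ is Lipschitz in $\bar y$ with integrable constant $2\Vert f\Vert_\infty|a_s|$. Another is $f\le0\le a$, where $\bar g$ is non-increasing in $\bar y$. For general $f\in L^1$ you need a genuinely new estimate, or an extension of \cite[Prop.~3.1]{bahlali-et-al} to the time-dependent coefficient $f(\cdot+A_s)$. The latter is what the paper's proof tacitly takes for granted.
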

\begin{proof}
    Define $\tilde{f}(x) = f(x-\int_0^t a_s ds)$, $x\in \mathbb{R}$, and $\tilde{X} = X + \int_0^u a_s ds$.
    The BSDE
    $$
   \eta_t= \tilde X+ \int_t^u \tilde f(\eta_s)\vert \zeta_s \vert^2 \, ds - \int_t^u \zeta_s \, dB_s 
    $$
    has a unique solution $(\eta_t, \zeta_t)_t$ (see \cite[Prop.~3.1]{bahlali-et-al}).
    We easily see that $Y_t = \eta_t - \int_0^t a_s ds$ and $Z_t = \zeta_t$ satisfy \eqref{eq: BSDE} with \eqref{eq: gquad}.

Assume now that $a$ and $f$ are non-negative. By an immediate adjustment of the Comparison Theorem in \cite[Prop. 3.2]{bahlali-et-al} to include the term $a$, we see that the solution is non-negative.
\end{proof}

\vspace{2mm}
In the sequel, we consider BSDEs with $X\in L^2(\mathcal{F}_u)$ and drivers $g$ both Lipschitz and quadratic of type \eqref{eq: gquad}.

\vspace{4mm}
For the well-known relationship among BSDEs, nonlinear expectations and dynamic risk measures in the Brownian setting,  we refer to~\cite{barrieu-el-karoui, EK-rav, jiang, laeven-et-al, peng97, rg} (under the standard assumptions and on the Brownian setting) and to~\cite{barrieu-el-karoui} (for the non-Lipschitz case).
%to~\cite{royer, quenez-sulem, laeven-stadje} (for a non Brownian setting).
In particular,
\begin{equation*} %\label{eq: rho_BSDEg}
\rho_{tu} (X)= \mathcal{E}^g \left(-X \vert \mathcal{F}_t \right), \quad X \in L^{2} (\mathcal{F}_u),
\end{equation*}
is a fully-dynamic risk measure, which is convex if $g$ is convex in $(y,z)$. 
Observe that, by~\cite[Prop.~7.3]{EK-rav}, if $g(t,y,z)$ is convex in $(y,z)$ and decreasing in $y$, then the corresponding fully-dynamic risk measure is cash subadditive. Indeed, also the converse implication holds true (see~\cite[Prop.~20]{laeven-et-al}). 
Also, if $g$ does not depend on $y$, the risk measure is cash additive.

Hereafter we discuss restriction and normalization of the fully-dynamic risk measures induced by the BSDEs above considered. The quadratic case in item \textit{b)} below is covered by \cite[Thm. 4.2]{zheng-li}.

\begin{proposition} \label{prop: normal-restric} \hspace{14cm}\\
a) $(\rho_{tu})_{t,u}$ has the restriction property \eqref{eq: restriction} if and only if $g(t,y,0)=0$ $dP\times dt$-a.e., for any~$y \in \mathbb{R}$.\\
b) Assume that $g$ is Lipschitz or quadratic of type \eqref{eq: gquad} with $f$ continuous in $y$. The fully-dynamic risk measure $(\rho_{tu})_{t,u}$ is normalized if and only if $g(t,0,0)=0$ $dP\times dt$-a.e..
\end{proposition}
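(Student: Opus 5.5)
For part a), I will argue through the characterization of constants and time-consistency in the BSDE framework. For the ``if'' direction, take $X\in L^2(\mathcal{F}_u)$ and $v\ge u$. Let $(Y,Z)$ be the solution on $[0,u]$ with terminal value $-X$. Extend it to $[0,v]$ by setting $Y_s=Y_u=-X$ and $Z_s=0$ for $s\in[u,v]$.

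Because $g(s,y,0)=0$, the extended pair solves the BSDE on $[0,v]$ with terminal value $-X$ at time $v$. Indeed, on $[u,v]$ the integrand $g(s,-X,0)$ vanishes and the stochastic integral is zero. Uniqueness then gives $\rho_{tv}(X)=Y_t=\rho_{tu}(X)$. Uniqueness comes from the standard theory in the Lipschitz case and from Proposition~\ref{prop: quadexistence} in the quadratic case.

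For the ``only if'' direction, apply restriction with $X=-y$ constant, which is $\mathcal{F}_u$-measurable for every $u$. Then $\rho_{tv}(-y)=\rho_{uv}(-y)$ is forced, which should mean the $Y$-component on $[u,v]$ equals $y$ at every intermediate time. To get this, take $t=u$: we have $\rho_{uv}(-y)=\rho_{uu}(-y)=y$, so $Y_s\equiv y$ on $[u,v]$.

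Plugging this into the BSDE on $[u,v]$, the martingale part $\int Z\,dB$ must equal $\int_u^v g(s,y,Z_s)\,ds$, a finite-variation term. Hence $Z=0$, and then $\int_u^v g(s,y,0)\,ds=0$ for all $u\le v$, which yields $g(s,y,0)=0$ $dP\times dt$-a.e.

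Passing from fixed $y$ to all $y$ simultaneously needs a countable dense set plus continuity of $g$ in $y$. Continuity holds under both the Lipschitz and quadratic assumptions. Strictly, the identity holds for rational $y$ outside a null set, and continuity then gives it for all $y$.

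For part b), normalization means $\rho_{tu}(0)=0$ for all $t\le u$. If $g(t,0,0)=0$, then $(Y,Z)=(0,0)$ solves the BSDE with terminal value $0$, and uniqueness gives $\rho_{tu}(0)=0$. Conversely, if $\rho_{tu}(0)=0$ for all $t\le u$, then $Y\equiv 0$ on $[0,u]$. The same finite-variation versus martingale argument gives $Z=0$ and then $\int_t^u g(s,0,0)\,ds=0$ for all $t$, hence $g(s,0,0)=0$ a.e.

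In the quadratic case of type \eqref{eq: gquad}, this reduces directly to $a_t=0$ a.e., since $g(t,0,0)=a_t$. Here the continuity of $f$ is only needed so that $g$ is well defined pointwise at $(y,z)=(0,0)$. Alternatively, one can cite \cite[Thm. 4.2]{zheng-li}.

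The main obstacle is justifying uniqueness and the identification $Z=0$ in the quadratic $L^2$ setting. There I would rely on Proposition~\ref{prop: quadexistence} and on uniqueness of the semimartingale decomposition of $Y$: a continuous martingale of finite variation is constant.
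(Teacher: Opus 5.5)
Your proof is correct. Part a) follows the paper's route. The ``if'' direction is the same extension-plus-uniqueness argument, and the ``only if'' direction rests on the same fact that a continuous martingale of finite variation is constant. Restricting to constant terminal values $X=-y$ is cleaner than the paper's version with general $X\in L^2(\mathcal{F}_u)$. Since $-y$ is $\mathcal{F}_s$-measurable for every $s$, restriction with $t=u=s$ gives $Y^v_s=\rho_{ss}(-y)=y$ on all of $[0,v]$. The paper instead asserts $Y^u_u=Y^v_u$ for all $u\le v$ with a fixed $X$, and then evaluates an a.e.\ identity at the endpoint $v$.

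One placement issue: the upgrade to a single null set for all $y$ (countable dense set plus continuity of $g(t,\cdot,0)$) is really needed in the ``if'' direction. That is where you insert the random value $-X$ into $g(s,\cdot,0)$.

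Part b) genuinely departs from the paper. The paper recovers $g(t,0,0)$ from the generator representation $g(t,y,z)=\lim_{\varepsilon\to0^+}\varepsilon^{-1}\bigl[\rho_{t,t+\varepsilon}(-y-z\cdot(B_{t+\varepsilon}-B_t))-y\bigr]$ of the converse comparison theorem \cite{BCHMPeng,jiang} in the Lipschitz case, and from \cite[Thm.~4.2]{zheng-li} in the quadratic case. You instead note that normalization forces $Y^u\equiv0$ on $[0,u]$ and rerun the semimartingale argument. Your route is more elementary, treats both cases at once, and needs normalization only at the single horizon $u=T$. It also never uses continuity of $f$. Your remark that continuity is needed to define $g(t,0,0)$ is off: $g(t,0,0)=a_t$ regardless, and the hypothesis is there only for the appeal to Zheng--Li. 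The paper's heavier tool, in turn, recovers $g$ at arbitrary $(y,z)$, not just $g(\cdot,y,0)$. That is what the paper needs later, e.g.\ to compare the drivers $g_u$ and $g_v$ of a family of BSDEs.
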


\begin{proof}
a) Assume now that the restriction property is satisfied. Then, for any $t,u,v$ such that $0\leq t \leq u \leq v \leq T$ and for any $X \in L^{2}(\mathcal{F}_u)$, the relations $Y_t^u=\rho_{tu} (X)=\rho_{tv} (X)=Y_t^v$ hold, where
\begin{eqnarray*}
Y_t^u=\rho_{tu} (X)&=&-X+ \int_t^u g(s,Y^u_s, Z^u_s) ds - \int_t^u Z^u_s dB_s.
\end{eqnarray*}
A similar representation holds for $Y_t^v=\rho_{tv} (X)$.
In particular, we have $Y_u^u=-X=Y_u^v$, for all $u \leq v$ and all $X \in L^2(\mathcal{F}_u)$, in view of the restriction property.
Taking the difference $Y_u^v - Y_u^u$, we obtain that
\begin{equation}\label{fv=mart}
\int_u^v g(s, Y_s^v, Z_s^v)ds = \int_u^v Z_s^v dB_s, \qquad \text{for all } 0 \leq u \leq v.
\end{equation}
Denote $M^{(v)}(u) \triangleq \int_0^u Z_s^v dB_s$ the martingale in $u\in [0,v]$, and
$f^{(v)}(u) \triangleq \int_0^u g(s, Y_s^v, Z_s^v)ds$ the process of finite variation
in $u\in [0,v]$. 
Then, by~\cite[Prop.~1.2 in Ch.~4]{revuz-yor}, a continuous martingale and a process of finite variation can be equal only if the martingale is constant. Hence \eqref{fv=mart}  implies that, for any $t$, the martingale $M^{(v)} $, starting from $0$ is equal to $0$ $dP\times du$-a.e..
Hence $Z_u^v= 0$, for a.a. $u \in [0,v]$, $dP$-a.s..
Going back to \eqref{fv=mart}, we also have $f^{(v)}(u) = f^{(v)}(v)$ for all $u \in [0,v]$, $dP$-a.s..
Taking the derivative with respect to $u$, we have then that $g(u, Y^v_u, 0) = 0$ for a.a. $u \in [0,v]$, $dP$-a.s..
In particular, $0 = g(v, Y^v_v, 0) = g(v, -X, 0)$ for all $X\in L^2(\mathcal{F}_u)$, for a.a. $v$, $dP$-a.s.. Finally we obtain $g(v, y, 0) = 0$ for a.a. $v \in [0,T], y \in \mathbb{R}$.

The converse implication was proved for cash additive risk measures in~\cite{DNRG1}. 
Assume that $g(t,y,0)=0$ for a.a. $t \in [0,T]$ and all $y \in \mathbb{R}$. For any $X \in L^{2}(\mathcal{F}_u)$ and $v \geq u$, consider
\begin{eqnarray*}
\rho_{tu} (X)&=&-X+ \int_t^u g(s,Y^u_s, Z^u_s) ds - \int_t^u Z^u_s dB_s
\end{eqnarray*}
and similarly for $\rho_{tv} (X)$,
where $(Y^{X,u}_r,Z^{X,u}_r)$ (resp. $(Y^{X,v}_r,Z^{X,v}_r)$) denotes the solution corresponding to $\rho_{tu} (X)$ (resp. $\rho_{tv} (X)$) at time $r \leq u$. Since $(Y^{X,v}_r,Z^{X,v}_r)$ with
$$
Y^{X,v}_r= \left\{
\begin{array}{rl}
Y^{X,u}_r;& r \leq u \\
-X;& u < r \leq v
\end{array}
\right. \qquad
Z^{X,v} (r,s)= \left\{
\begin{array}{rl}
Z^{X,u}_r;& s \leq u \\
0;& u < s \leq v
\end{array}
\right.
$$
is a solution of $\rho_{tv} (X)$ when $g(t,y,0)=0$ for a.a. $t \in [0,T]$ and all $y \in \mathbb{R}$, the restriction property follows by the uniqueness of the solution.

\vspace{2mm}
\noindent
b) Here we distinguish the cases of risk measures generated by BSDEs with Lipschitz or quadratic drivers.
We start with the Lipschitz case. Assume that $\rho_{tu}(0)=0$ for any $0\leq t \leq u \leq T$.
As shown in the proof of the Converse Comparison Theorem~\cite[Thm.~4.1]{BCHMPeng} and~\cite[Lemma~2.1]{jiang}, we have
$$
g(t,y,z)=\lim _{\varepsilon \to 0^+} \frac{\rho_{t,t+\varepsilon} \left(-y-z \cdot (B_{t + \varepsilon} -B_{t}) \right)-y}{\varepsilon}
$$
with convergence in $L^p$ with $p \in [1,2)$, for any $y \in \mathbb{R}, z \in \Bbb R^d$, and a.a. $t \leq u$.
By extracting a subsequence the convergence is $P$-a.s.
% and passing to the limit as $\varepsilon \to 0^+$,
for which
$
g(t,0,0)=\lim _{\varepsilon \to 0^+} \frac{\rho_{t, t+\varepsilon} \left(0 \right)}{\varepsilon} =0
$
$dP\times dt$-a.e., since $\rho_{t,t+\varepsilon}$ is normalized. \smallskip
The converse implication is immediate.\\
The quadratic case \eqref{eq: gquad} with $f$ continuous in $y$ is covered by \cite[Thm. 4.2]{zheng-li}.
\end{proof}

\vspace{2mm}
We stress that in the Lipschitz case $g(t,y,0)=0$ together with the convexity of $g$ in $(y,z)$ implies that $g$ is independent of $y$.
See~\cite{BCHMPeng} Remark after Lemma 4.5. In other words, fully-dynamic risk measures with restriction property (hence all dynamic risk measures) and induced by a BSDE with Lipschitz driver are necessarily cash additive. For this reason, to study horizon risk we consider non-restricted and non-normalized fully-dynamic risk measures.

%%%%%%%%%%%%%%

\vspace{-2mm}
\subsection{Cash non-additivity and h-longevity}

We now investigate under which conditions on the driver h-longevity holds. Note that we implicitly improve the corresponding result established in~\cite{DNRG1} in the cash additive case, in fact we prove a necessary and sufficient condition and not only a sufficient one.

Let $t \leq u \leq v$ and $X \in L^p (\mathcal{F}_u)$. Then
\begin{eqnarray*}
\rho_{tu}(X)&=& -X+ \int_t^u g(s,Y_s^u,Z_s^u) dv - \int_t^u Z_s ^u dB_s
\end{eqnarray*}
and similarly for $\rho_{tv}(X)$.
Define now the processes
\begin{equation}\label{eq: ext}
\bar{Y}_s^u=\left\{
\begin{array}{rl}
Y_s^u;& s \leq u \\
-X;& u<s \leq v
\end{array}
\right. ;
\quad \bar{Z}_s^u=\left\{
\begin{array}{rl}
Z_s^u;& s \leq u \\
0;& u<s \leq v
\end{array}
\right. .
\end{equation}

\begin{theorem} \label{prop: longevity-BSDE-nca}
Assume that $g$ is Lipschitz in $(y,z)$.
H-longevity holds if and only if $g(s,y,0) \geq 0$ for any $s \in [0,T], y \in \mathbb{R}$.
Furthermore, in this case,  for any $t,v \in [0,T]$ with $t \leq v$, the h-longevity $\gamma$ in \eqref{eq: time-value}
$$
\gamma(t,u,v,X)=E_{\widetilde{Q}_X} \Big[ e^{\int_t^v \Delta_y g(s) ds} \int_u^v g(s,-X,0) ds | \mathcal{F}_t \Big], \quad t\leq  u\leq v, X\in L^p(\mathcal{F}_u),
$$
\noindent where $\widetilde{Q}_X$ is a probability measure on $\mathcal{Q}_{tv}$ depending on $X$ equivalent to $P$, with density
\begin{equation*}
\frac{d \widetilde{Q}_X}{dP}= \exp \left\{ - \frac 12 \int_t^v \vert\Delta_z g(s)\vert^2 ds + \int_t^v \Delta_z g(s) dB_s \right\},
\end{equation*}
with $\Delta_z g(s) = (\Delta_z^i g(s))_{i=1,...,d}$ being defined as
\begin{equation}\label{eq: deltaz}
\Delta_z^i g(s) \triangleq \frac{g(s,\bar{Y}_s^u,Z_s^v)- g(s,\bar{Y}_s^u,\bar{Z}_s^u)}{d(Z_s^{v,i}- \bar{Z}_s^{u,i})} 1_{\{Z_s^{v,i} \neq \bar{Z}_s^{u,i}\}},
\end{equation}
while
\begin{equation}\label{eq: deltay}
\Delta_y g(s) \triangleq \frac{g(s,Y_s^v,Z_s^v)- g(s,\bar{Y}_s^u, Z_s^v)}{Y_s^v- \bar{Y}_s^u} 1_{\{Y_s^v \neq \bar{Y}_s^u\}}.
\end{equation}
\end{theorem}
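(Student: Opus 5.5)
The plan is to compare, on the common interval $[t,v]$, the solution $(Y^v,Z^v)$ of the BSDE with terminal time $v$ and terminal value $-X$ with the extended pair $(\bar Y^u,\bar Z^u)$ from \eqref{eq: ext}. The first step is to check that $(\bar Y^u,\bar Z^u)$ solves, on $[t,v]$, the BSDE with terminal value $-X$ at time $v$ and with driver $\bar g(s,y,z)=g(s,y,z)\mathbf 1_{\{s\le u\}}$. Equivalently, it solves the original BSDE with driver $g$ up to the additive perturbation $-g(s,-X,0)\mathbf 1_{(u,v]}(s)$. Indeed, on $(u,v]$ we have $\bar Y^u_s=-X$ constant and $\bar Z^u_s=0$.

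Setting $\delta Y=Y^v-\bar Y^u$ and $\delta Z=Z^v-\bar Z^u$ and subtracting the two equations gives
\[
\delta Y_s=\int_s^v\big[g(r,Y^v_r,Z^v_r)-g(r,\bar Y^u_r,\bar Z^u_r)\big]dr-\int_s^v\delta Z_r\,dB_r+\int_s^v g(r,-X,0)\mathbf 1_{(u,v]}(r)\,dr .
\]
We then split the driver difference telescopically, first in $y$ with $z=Z^v$ fixed and then in $z$ with $y=\bar Y^u$ fixed. This produces $\Delta_y g(r)\,\delta Y_r+\Delta_z g(r)\cdot\delta Z_r$ with the coefficients \eqref{eq: deltay} and \eqref{eq: deltaz}; the $z$-increment is taken componentwise, which requires the usual successive substitution of coordinates. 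By the Lipschitz property these coefficients are bounded and progressively measurable. Hence $\delta Y$ solves a linear BSDE with zero terminal value and inhomogeneity $\varphi_r=g(r,-X,0)\mathbf 1_{(u,v]}(r)$.

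The standard explicit formula for linear BSDEs (El Karoui--Peng--Quenez), with adjoint process $\Gamma_s=\exp\{\int_t^s\Delta_y g\,dr+\int_t^s\Delta_z g\,dB_r-\frac12\int_t^s|\Delta_z g|^2dr\}$, then yields
\[
\gamma(t,u,v,X)=\delta Y_t=E\Big[\int_t^v\Gamma_r\varphi_r\,dr\,\Big|\,\mathcal F_t\Big].
\]
Boundedness of $\Delta_z g$ makes the stochastic exponential a true martingale (Novikov), so it defines $\widetilde Q_X\sim P$. Rewriting under $\widetilde Q_X$ gives the stated formula.

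I expect the main obstacle to be the placement of the $y$-discount factor. The statement has $e^{\int_t^v\Delta_y g}$ outside the inner integral, whereas the linear formula gives $e^{\int_t^r\Delta_y g}$ inside. On $(u,v]$ one would need to argue that these coincide up to the stated form, or else present the formula with the factor inside the integral. I would first try to verify this directly; failing that, the result should be written with the factor inside, since positivity, which is what matters, is unaffected.

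For the equivalence, sufficiency follows from this representation. If $g(s,y,0)\ge0$, then $\varphi\ge0$ and $\Gamma>0$, so $\gamma\ge0$. For necessity, assume h-longevity and take $X=-y$ deterministic and $t=u$, so that $\bar Y^u=y$ on $[u,v]$. Then $\gamma(u,u,v,-y)=E_{\widetilde Q}[\int_u^v\Gamma_r g(r,y,0)dr\mid\mathcal F_u]\ge0$ for all $v>u$. Dividing by $v-u$ and letting $v\downarrow u$ gives $g(u,y,0)\ge0$ for a.e. $u$, using $\Gamma_r\to1$ and a Lebesgue-differentiation argument in $L^1$ along a subsequence, in the spirit of the limit argument in Proposition~\ref{prop: normal-restric} b). Alternatively, one can invoke directly $g(u,y,0)=\lim_{\varepsilon\downarrow 0}\big(\rho_{u,u+\varepsilon}(-y)-\rho_{uu}(-y)\big)/\varepsilon\ge0$. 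Continuity of $g$ in $y$ then extends the inequality from a countable dense set to all $y\in\mathbb R$.
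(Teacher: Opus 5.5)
Your linearization and Girsanov argument is the same as the paper's. Your suspicion about the discount factor is correct, and it cannot be "verified directly": the displayed formula is false in general.

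Take $g(s,y,z)=|y|+1$, which is Lipschitz with $g(s,y,0)\ge 1$, and set $X=0$, $t=u=0$, $v=1$.
\begin{itemize}
\item Since $Z^v=\bar Z^u=0$, we have $\widetilde Q_X=P$.
\item $\rho_{00}(0)=0$, while $Y^v_s=e^{1-s}-1$ solves $\dot Y=-(|Y|+1)$, $Y_1=0$. Hence $\gamma(0,0,1,0)=e-1$.
\item Since $\bar Y^u_s=0<Y^v_s$ on $(0,1)$, we get $\Delta_y g\equiv 1$ there.
\end{itemize}
The stated formula then returns $e\int_0^1 g(s,0,0)\,ds=e$. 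Your version
\[
\gamma(t,u,v,X)=E_{\widetilde Q_X}\Big[\int_u^v e^{\int_t^r\Delta_y g(s)\,ds}\,g(r,-X,0)\,dr\,\Big|\,\mathcal F_t\Big]
\]
returns $\int_0^1 e^r\,dr=e-1$, which is the true value.

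The paper gets the factor $e^{\int_t^v\Delta_y g}$ by treating $\int_u^v g(s,-X,0)\,ds$ as a terminal datum at time $v$ of the linear BSDE for $\widetilde Y$. At running time $s\in(u,v]$, however, only $\int_s^v g(r,-X,0)\,dr$ appears. The term is therefore the running inhomogeneity $g(r,-X,0)\mathbf 1_{(u,v]}(r)\,dr$. The two formulas agree only when $\Delta_y g$ vanishes on $(u,v]$, as in the cash-additive case of \cite[Prop.~3.3]{DNRG1}, where $g$ does not depend on $y$. So drop the attempt to verify the stated form and state the corrected one. The equivalence is unaffected, since only the sign of $\gamma$ is used.

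There are two smaller corrections.
\begin{itemize}
\item Do not claim your coefficients are the ones in the statement. For $d\ge2$, the stated $\Delta^i_z g$ divides a common numerator by $d(Z^{v,i}_s-\bar Z^{u,i}_s)$. It gives $\Delta_z g\cdot\widetilde Z_s=\frac kd\,[g(s,\bar Y^u_s,Z^v_s)-g(s,\bar Y^u_s,\bar Z^u_s)]$, where $k$ is the number of coordinates in which $Z^v_s$ and $\bar Z^u_s$ differ, and it is not bounded. Your successive-substitution coefficients are the right ones, bounded by the Lipschitz constant, and $\widetilde Q_X$ must be defined with them.
\item In the necessity step you write $E_{\widetilde Q}[\int\Gamma_r\cdots]$, but your $\Gamma_r$ already contains the density. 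Either keep $\Gamma_r$ and take the expectation under $P$, or use the weight $e^{\int_u^r\Delta_y g}$ under $\widetilde Q_X$.
\end{itemize}

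For necessity, your route differs from the paper's main argument. The paper builds auxiliary linear BSDEs $\eta^F$, $R^F$ under $\widetilde Q_X$ and applies the converse comparison theorem to their drivers. You instead use constant data $X=-y$, $u=t$, and the generator representation $g(t,y,0)=\lim_{\varepsilon\downarrow0}(\rho_{t,t+\varepsilon}(-y)-y)/\varepsilon$ from \cite{BCHMPeng} and \cite{jiang}. This is the argument the paper uses for quadratic drivers and points to in the subsequent Remark. It is shorter and avoids $X$-dependent coefficients. Your step through a countable dense set of $y$ plus continuity also correctly handles the $y$-dependent exceptional null set, which the paper leaves implicit. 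Cite the representation directly rather than redoing the Lebesgue-differentiation argument with the extra weight, which would amount to reproving it.
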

Probability ${\widetilde{Q}_X}$ can be interpreted as an {\it h-longevity premium measure} (see~\cite{DNRG1}).

\smallskip
\noindent
\begin{proof}
Assume that $g(s,y,0) \geq 0$ for any $s \in [0,T]$ and $y \in \mathbb{R}$. We follow similar arguments as in~\cite{DNRG1}, however adapted to the cash non-additive case.
With the same notation above, let us also consider
\begin{equation*}
 \widetilde{Y}_s= Y_s^v - \bar{Y}_s^u; \quad \widetilde{Z}_s= Z_s^v - \bar{Z}_s^u.
\end{equation*}
Then
\begin{eqnarray}
\widetilde{Y}_t&&=\rho_{tv}(X)-\rho_{tu}(X) \notag\\
&&=  \int_t^v \hspace{-1mm}[g(s,Y_s^v, Z_s^v)- g(s,\bar{Y}_s^u, \bar{Z}_s^u)] ds
+ \hspace{-1mm} \int_u^v \hspace{-1mm}g(s,\bar{Y}_s^u, \bar{Z}_s^u) ds
- \hspace{-1mm}\int_t^v \hspace{-1mm} [Z_s ^v- \bar{Z}_s^u] dB_s
- \hspace{-1mm}\int_u^v \hspace{-1mm} \bar{Z}_s ^u dB_s \notag \\
&&=  \int_t^v [g(s,Y_s^v, Z_s^v)- g(s,\bar{Y}_s^u, \bar{Z}_s^u)] ds - \int_t^v \widetilde{Z}_s dB_s+ \int_u^v g(s,-X,0) ds  \notag \\
&&=  \int_t^v [\Delta_y g(s) \cdot \widetilde{Y}_s + \Delta_z g(s) \cdot \widetilde{Z}_s] ds - \int_t^v \widetilde{Z}_s dB_s+ \int_u^v g(s,-X,0) ds .\label{eq: bsde-longevity-1-nca}
\end{eqnarray}
By applying the Girsanov Theorem,~\eqref{eq: bsde-longevity-1-nca} becomes
\begin{equation}
\rho_{tv}(X)-\rho_{tu}(X)
= \int_t^v \Delta_y g(s) \cdot \widetilde{Y}_s  ds- \int_t^v \widetilde{Z}_s dB^{\widetilde{Q}_X}_s+ \int_u^v g(s,-X,0) ds, \label{eq: long-linear}
\end{equation}
where $B^{\widetilde{Q}_X}_s \triangleq B_s - B_t - \int_{t}^s \Delta_z g(r) \, dr$, $s \in [t,v]$, is a $\widetilde{Q}_X$-Brownian motion.
It is then well-known (see, e.g.,~\cite[Ex.~7.2]{EK-rav}) that
\begin{eqnarray*}
\gamma(t,u,v,X)=\rho_{tv}(X)-\rho_{tu}(X)
&=&E_{\widetilde{Q}_X} \left[ e^{\int_t^v \Delta_y g(s) ds} \int_u^v g(s,-X, 0) ds \Big\vert \mathcal{F}_t\right].
\end{eqnarray*}
By assumption on $g(\cdot, y, 0)$, it follows that $\gamma(t,u,v,X)\geq 0$.

Assume now that h-longevity holds. Proceeding as before (see \eqref{eq: long-linear}),
\begin{equation*}
\widetilde{Y}_t= \int_u^v g(s,-X,0) ds + \int_t^v \Delta_y g(s) \cdot \widetilde{Y}_s  ds- \int_t^v \widetilde{Z}_s dB^{\widetilde{Q}_X}_s,
\end{equation*}
hence, by longevity,
\begin{equation*}
\widetilde{Y}_t =E_{\widetilde{Q}_X} \left[ e^{\int_t^v \Delta_y g(s) ds} \cdot \int_u^v g(s,-X, 0) ds \Big\vert \mathcal{F}_t\right] \geq 0
\end{equation*}
for any $t \leq u \leq v$ and $X \in L^p (\mathcal{F}_u)$.
Set now
\begin{eqnarray*}
\eta^F_t &=& F+ \int_t^v \left[g(s,-X,0) 1_{[u,v]}+ \Delta_y g(s) \cdot \eta^F_s \right] ds- \int_t^v Z_s^{\eta} dB^{\widetilde{Q}_X}_s \\
R^F_t &=& F+ \int_t^v  \Delta_y g(s) \cdot R^F_s  ds- \int_t^v Z_s^{R} dB^{\widetilde{Q}_X}_s
\end{eqnarray*}
for $t \leq u \leq v$ and $F \in L^p (\mathcal{F}_u)$. Consequently,
\begin{eqnarray*}
\eta^F_t &=&E_{\widetilde{Q}_X} \left[ e^{\int_t^v \Delta_y g(s) ds} \left(F+ \int_u^v g(s,-X, 0) ds \right) \Big\vert \mathcal{F}_t\right]  \\
R^F_t &=&E_{\widetilde{Q}_X} \left[ e^{\int_t^v \Delta_y g(s) ds} \cdot F \Big\vert \mathcal{F}_t\right]
\end{eqnarray*}
and, by longevity, $\eta^F_t=\widetilde{Y}_t + R^F_t \geq R_t^F$, for any $F \in L^p (\mathcal{F}_u)$.
By the Converse Comparison Theorem of BSDEs (see~\cite{BCHMPeng, jiang}), we have
\begin{eqnarray*}
&&g(s,-X,0) 1_{[u,v]}+ \Delta_y g(s) \cdot \eta^F_s \geq \Delta_y g(s) \cdot \eta^F_s \\
&&g(s,-X,0) 1_{[u,v]} \geq 0
\end{eqnarray*}
for any $\eta_t, X$ and $s\leq u \leq v$. Hence, $g(s,y,0) \geq 0$ for any $s \in [0,T]$ and $y \in \mathbb{R}$.
\end{proof}

\medskip
\noindent
The previous result extends~\cite[Prop. 3.3]{DNRG1}, which deals with the cash additive case where the driver $g$ does not depend on $y$.

\begin{example}
Consider the driver
\begin{equation*}
g(t,y,z)= r_t y^- +z, \quad t \in [0,T], y \in \mathbb{R}, z \in \mathbb{R}^d,
\end{equation*}
where $r_t$ can be interpreted as a positive interest rate depending on time $t$.
It is immediate to see that $g(\cdot,y,z)$ is decreasing in $y$, convex and Lipschitz in $(y,z)$. The corresponding fully-dynamic risk measure satisfies cash subadditivity, normalization (since $g(t,0,0)=0$, see item b) in Proposition~\ref{prop: normal-restric}) and h-longevity (by $g(t,y,0) \geq 0$ for any $t,y$, see Theorem~\ref{prop: longevity-BSDE-nca}).

Instead, for $\bar{g}(t,y,z)= r_t y^- +z +1$, with $y \in \mathbb{R}$, $z \in \mathbb{R}^d$,
we obtain a fully-dynamic risk measure not normalized and without restriction.
\end{example}

\begin{proposition}
    Assume $g$ is quadratic of type \eqref{eq: gquad} with $f$ continuous in $y$. 
    H-longevity implies that $g(t, y, 0) \geq 0$ for a.a. $t\in [0,T]$ and all $y\in \mathbb{R}$.
\end{proposition}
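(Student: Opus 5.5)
The idea is to test h-longevity on deterministic constant positions, for which the quadratic BSDE can be solved explicitly. For a driver of type \eqref{eq: gquad} we have $g(t,y,0)=a_t$ for every $y$. So the claim reduces to showing $a_t\geq 0$ for a.a. $t\in[0,T]$, and it suffices to produce enough positions $X$ for which $\gamma(t,u,v,X)$ can be computed in closed form in terms of $a$.

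\emph{Step 1: explicit solution for constant positions.} Fix $y\in\mathbb{R}$ and take $X=-y$, a constant, hence in $L^2(\mathcal{F}_u)$ for every $u$. We guess the solution on $[0,u]$ as
\begin{equation*}
Y_s=y+\int_s^u a_r\,dr,\qquad Z_s=0 .
\end{equation*}
This pair is deterministic, so it lies in $\Bbb H^2_{[0,u]}(\Bbb R)\times \Bbb H^2_{[0,u]}(\Bbb R^d)$ because $a\in L^1$. It solves \eqref{eq: BSDE} with terminal value $-X=y$: since $Z\equiv 0$, the quadratic term $f(Y_s)|Z_s|^2$ vanishes whatever $f$ is, and the stochastic integral is zero. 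The uniqueness part of Proposition~\ref{prop: quadexistence} then identifies this pair as \emph{the} solution. Hence
\begin{equation*}
\rho_{tu}(-y)=y+\int_t^u a_r\,dr .
\end{equation*}
The same argument on $[0,v]$ gives $\rho_{tv}(-y)=y+\int_t^v a_r\,dr$.

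\emph{Step 2: use h-longevity.} Plugging these into \eqref{eq: time-value} gives, for all $t\leq u\leq v$,
\begin{equation*}
0\leq \gamma(t,u,v,-y)=\int_u^v a_r\,dr .
\end{equation*}
So $\int_u^v a_r\,dr\geq 0$ for all $0\leq u\leq v\leq T$. The function $A(s)=\int_0^s a_r\,dr$ is therefore nondecreasing. By the Lebesgue differentiation theorem, $a_s=A'(s)\geq 0$ for a.a. $s$. Consequently $g(s,y,0)=a_s\geq 0$ for a.a. $s$ and all $y$, and the exceptional null set does not depend on $y$, since $g(s,y,0)$ does not depend on $y$.

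\emph{Main obstacle.} The only delicate point is Step 1: we must know that the explicit deterministic pair is the solution that defines $\rho$. In the quadratic setting uniqueness is not automatic, and this is exactly what Proposition~\ref{prop: quadexistence} supplies for $a\in L^1$ and $f\in L^1(\mathbb{R})$. The continuity of $f$ is not needed here.

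If $a$ were allowed to be an adapted process rather than a deterministic function, constant positions would no longer give a deterministic solution. In that case I would instead take $X$ constant, write $\gamma(t,u,v,X)$ through the BSDE on $[u,v]$, where the solution started from the constant has $Z$ driven only by $a$, and localize on events $\{a<0\}\cap\mathcal{F}_t$. Under the standing assumption $a\in L^1([0,u])$, however, the deterministic argument above suffices.
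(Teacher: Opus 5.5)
Your proof is correct, and it takes a more elementary route than the paper. The paper computes nothing explicitly. It invokes the generator representation theorem for quadratic BSDEs, \cite[Cor.~3.7]{zheng-li}, namely $g(t,y,0)=\lim_{\varepsilon\to0^+}\frac{\rho_{t,t+\varepsilon}(-y)-y}{\varepsilon}$, and then uses only $\rho_{t,t+\varepsilon}(-y)\geq\rho_{tt}(-y)=y$. You instead use the special structure of \eqref{eq: gquad}: $g(t,y,0)=a_t$ is deterministic and independent of $y$. This lets you solve the BSDE in closed form for constant positions, identify that solution with $\rho$ through the uniqueness in Proposition~\ref{prop: quadexistence}, and conclude by Lebesgue differentiation. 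Both arguments test h-longevity on the same constant positions $-y$. You could even restrict to $u=t$, i.e.\ compare with $\rho_{tt}(-y)=y$, as the paper does. Your formula $\rho_{tu}(-y)=y+\int_t^u a_r\,dr$ is exactly what the paper's difference quotient converges to at Lebesgue points of $a$. Your route is self-contained and avoids the external representation theorem. It does not need the continuity of $f$, which the paper requires only to make Zheng--Li applicable. It also gives the exact index $\gamma(t,u,v,-y)=\int_u^v a_r\,dr$ on constants. The paper's route never uses that $g(\cdot,y,0)$ is deterministic or $y$-independent. It therefore carries over unchanged to adapted $a$, or to quadratic generators whose $z=0$ section depends on $y$, whenever the representation theorem applies, mirroring the Lipschitz case of Theorem~\ref{prop: longevity-BSDE-nca}. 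Your closing remark on adapted $a$ is only a sketch and would essentially need that kind of tool; under the standing deterministic assumption on $a$ it is not needed.
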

\begin{proof}
    From \cite[Cor. 3.7]{zheng-li}, we have
    $$
    g(t,y,0) = \lim_{\varepsilon \to 0^+} \frac{\rho_{t, t+\varepsilon}(-y)-y}{\varepsilon} \geq 0, \qquad dP\times dt \mbox{-a.s.},
    $$
where we have used h-longevity $\rho_{t,t+\varepsilon} (-y) \geq \rho_{tt}(-y)$ and $\rho_{tt}(-y)=y$.
\end{proof}

\begin{remark}
    The argument of the proof here above could also be used for the corresponding Lipschitz case of Theorem \ref{prop: longevity-BSDE-nca}, with adjustments in line with the proof of the first part of point b) in Proposition \ref{prop: normal-restric}.
\end{remark}

\subsection{Risk measures generated by a family of BSDEs}

Now we consider general fully-dynamic risk measures induced by a family of BSDEs of type~\eqref{eq: BSDE} with drivers $\mathcal{G}=(g_u)_{u \in [0,T]}$ depending on the time horizon $u$ of $\rho_{tu}$ and convex in $(y,z)$.
To be more precise, assume that, for any $t \leq u$,
\begin{equation} \label{eq: rho-from-bsde-gt}
\rho_{tu}(X)=\rho_{tu}^{\mathcal{G}}(X) \triangleq \mathcal{E}^{g_u} (-X \vert \mathcal{F}_t), \mbox{ for any } X \in L^{2} (\mathcal{F}_u).
\end{equation}
Then $(\rho^{\mathcal{G}}_{tu})_{t,u}$ satisfies monotonicity, convexity, and continuity from above/below.
As in the cash additive case (see~\cite{DNRG1}), if $g_u(v,0,0)=0$ for any $v \leq u \leq T$, then $\rho^{\mathcal{G}}_{tu} (0)=0$ for any $t \leq u$. In general, however, this \textit{does not} guarantee the restriction property.
The following result characterizes when the restriction property holds.

\begin{proposition} %\label{prop: restriction-BSDE-family-nca}
Let $g_u$ be constant in $u$ and $g_u(t,y,0)=0$ for a.a. $t \leq u $ and for all $y \in \mathbb{R}$, then the fully-dynamic risk measure $ (\rho_{tu})_{t,u}$ satisfies the restriction property.
Vice versa, let $ (\rho_{tu})_{t,u}$ be generated by a BSDE with Lispchitz drivers or with a quadratic drivers \eqref{eq: gquad} with continuous $f$. If it has the restriction property, then $g_u$ is constant in $u$ and $g_u(t,y,0)=0$ for a.a. $t \leq u $ and for all $y \in \mathbb{R}$.
\end{proposition}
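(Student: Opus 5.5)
The sufficiency direction reduces to Proposition~\ref{prop: normal-restric}\,a). If $g_u\equiv g$ does not depend on $u$, then $\rho^{\mathcal{G}}_{tu}(X)=\mathcal{E}^{g}(-X\vert\mathcal{F}_t)$ for every $u$. The fully-dynamic risk measure is therefore generated by the single BSDE with driver $g$, and $g(t,y,0)=0$ is assumed. Concretely, for $X\in L^2(\mathcal{F}_u)$ and $v\ge u$ I would take the extension $(\bar Y^u,\bar Z^u)$ of \eqref{eq: ext}: $\bar Y^u=-X$ and $\bar Z^u=0$ on $(u,v]$. Because $g(s,-X,0)=0$, this extension solves the BSDE on $[0,v]$ with terminal value $-X$. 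Uniqueness of solutions then gives $\rho_{tv}(X)=\rho_{tu}(X)$. Uniqueness holds by the standard theory in the Lipschitz case and by Proposition~\ref{prop: quadexistence} in the quadratic case.

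For necessity, assume the restriction property. The plan has three steps.

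\textbf{Step 1: $g_v(t,y,0)=0$.} Fix a horizon $v$. For $X\in L^2(\mathcal{F}_u)$ with $u\le v$, restriction gives $Y^{v}_t=\mathcal{E}^{g_v}(-X\vert\mathcal{F}_t)=\mathcal{E}^{g_u}(-X\vert\mathcal{F}_t)$ for $t\le u$. In particular, taking $t=u$ gives $Y^v_u=-X$. I would then repeat the argument of Proposition~\ref{prop: normal-restric}\,a) with the single driver $g_v$ on $[u,v]$. One has $\int_u^v g_v(s,Y^v_s,Z^v_s)\,ds=\int_u^v Z^v_s\,dB_s$ for all $u\le v$. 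A continuous martingale of finite variation is constant, so $Z^v=0$. It follows that $g_v(s,Y_s^v,0)=0$ a.e. Letting $X=-y$ range over constants yields $g_v(t,y,0)=0$ for a.a. $t\le v$ and all $y$. An alternative route uses the limit representations $g(t,y,z)=\lim_{\varepsilon\to0^+}\varepsilon^{-1}(\rho_{t,t+\varepsilon}(-y-z\cdot(B_{t+\varepsilon}-B_t))-y)$, which hold in both the Lipschitz and quadratic settings by \cite{BCHMPeng, jiang, zheng-li}.

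\textbf{Step 2: independence of $u$.} Take $u\le v$ and $t<u$. For any $y,z$ and small $\varepsilon$ with $t+\varepsilon\le u$, the position $X=-y-z\cdot(B_{t+\varepsilon}-B_t)$ is $\mathcal{F}_{t+\varepsilon}$-measurable. Restriction then gives $\mathcal{E}^{g_u}(-X\vert\mathcal{F}_t)=\rho_{t,t+\varepsilon}(X)=\mathcal{E}^{g_v}(-X\vert\mathcal{F}_t)$. Hence both $g_u$ and $g_v$ are given at time $t$ by the same limit in the representation formula. This yields $g_u(t,y,z)=g_v(t,y,z)$ for $dP\times dt$-a.a. $t<u$ and all $(y,z)$.

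\textbf{Main obstacle.} The delicate point is the representation formula itself. The expression $\rho_{t,t+\varepsilon}$ must be read as generated by a fixed driver $g_w$, and the conclusion must hold simultaneously for all $(y,z)$ despite exceptional null sets. In the Lipschitz case I would first restrict to rational $(y,z)$, extract a.s.\ convergent subsequences, and then extend to all $(y,z)$ by continuity of $g$. In the quadratic case I would invoke \cite[Cor.~3.7]{zheng-li}, whose hypotheses (continuity of $f$) are exactly those assumed, and proceed in the same way.
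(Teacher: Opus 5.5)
Your proposal is correct and follows essentially the paper's route. Sufficiency goes through Proposition~\ref{prop: normal-restric}\,a) (extension plus uniqueness). Necessity combines restriction with the converse-comparison limit representation (\cite{BCHMPeng, jiang} in the Lipschitz case, \cite[Cor.~3.7]{zheng-li} in the quadratic case), together with the martingale/finite-variation argument of Proposition~\ref{prop: normal-restric}\,a) for $g_v(t,y,0)=0$.

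Your reordering, which proves $g_v(\cdot,y,0)=0$ first, is what justifies reading $\rho_{tw}(X)$, for $\mathcal{F}_{t+\varepsilon}$-measurable $X$, as the $g_w$-evaluation on $[t,t+\varepsilon]$. The paper leaves this step implicit, so you should state explicitly in Step 2 that Step 1 is being used there.
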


\noindent
\begin{proof}
If $g_u$ is constant in $u$, the restriction property follows directly by Prop.~\ref{prop: normal-restric} \textit{a)}.

Conversely, assume that the restriction property holds. We study separately the Lipschitz and quadratic cases.\\
Assume that $g$ is Lipschitz.
Proceeding as in the proof of the Converse Comparison Theorem of~\cite[Thm.~ 4.1]{BCHMPeng} and~\cite[Lemma~ 2.1]{jiang} and by the restriction property,
\begin{equation*}
g_u(t,y,z)=\lim _{\varepsilon \to 0^+} \frac{\rho_{tu} \left(-y-z \cdot (B_{t + \varepsilon} -B_{t})  \right)-y}{\varepsilon}=\lim _{\varepsilon \to 0^+} \frac{\rho_{t,t+\varepsilon} \left(-y-z \cdot (B_{t + \varepsilon} -B_t ) \right)-y}{\varepsilon}
\end{equation*}
with convergence in $L^p$ with $p \in [1,2)$, for any $y \in \mathbb{R}, z \in \Bbb R^d$, a.a. $t \leq u$.
%By extracting a subsequence to obtain convergence $P$-a.s. and passing to the limit as $\varepsilon \to 0^+$, in our context it holds that
Taking a subsequence, we obtain that
\begin{equation*}
\frac{\rho_{tu} \left(-y-z \cdot (B_{t + \varepsilon} -B_{t }) \right)-y}{\varepsilon} \longrightarrow g_u(t,y,z), \quad
\varepsilon \to 0^+, \quad  P-a.s.
\end{equation*}
We proceed similarly for $g_v$.
By restriction, we have
\begin{eqnarray*}
g_u(t,y,z)&=&\lim _{\varepsilon \to 0^+} \frac{\rho_{tu} \left(-y-z \cdot (B_{t + \varepsilon} -B_{t }) \right)-y}{\varepsilon} \\
&= & \lim _{\varepsilon \to 0^+} \frac{\rho_{tv} \left(-y-z \cdot (B_{t + \varepsilon} -B_{t }) \right) -y}{\varepsilon}
= g_v(t,y,z)
\end{eqnarray*}
for a.a. $u\leq v$, with $P$-a.s. convergence. This proves that $g_u$ is constant in $u$.
By Prop.~\ref{prop: normal-restric} \textit{a)}, the condition $g_u(t,y,0)=0$ should hold for a.a. $t \leq u \leq T$ and $y \in \mathbb{R}$.

Assume $g$ is quadratic with $f$ continuous. We follow the same argument above by applying \cite[Cor. 3.7]{zheng-li} which gives directly convergence $dP\times dt$-a.e..
\end{proof}

\vspace{2mm}
Similarly to the cash additive case h-longevity is related to the monotonicity of $\mathcal{G}$. We recall that by \textit{increasing family} $\mathcal{G}=(g_u)_{u \in [0,T]}$ it is meant, for any $u \leq v$,
\begin{equation*}
g_u(s,y,z) \leq g_v(s,y,z), \quad  \mbox{for a.a. } s \in [0,u] \mbox{ and all } y \in \Bbb R, z \in \Bbb R^d.
\end{equation*}

\begin{proposition} %\label{thm: sub-tc-bsde-gt-nca}
Let $(\rho_{tu})_{t,u}$ be the fully-dynamic risk measure~\eqref{eq: rho-from-bsde-gt} generated by Lipschitz drivers.
$\mathcal{G}$ is increasing and $g_u (s,y,0)\geq 0$ for any $u \in [0,T]$, a.a. $s\in [0,u]$, and all $y\in \mathbb{R}$, if and only if  $(\rho_{tu})_{t,u}$ satisfies h-longevity.
\end{proposition}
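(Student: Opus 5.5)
For the sufficiency direction I would split the h-longevity index into two pieces:
\[
\rho_{tv}(X)-\rho_{tu}(X)=\big[\mathcal{E}^{g_v}(-X\vert\mathcal{F}_t)-\mathcal{E}^{g_u}(-X\vert\mathcal{F}_t)\big].
\]
I would insert the intermediate quantity $\mathcal{E}^{g_v}$ evaluated on the horizon-$u$ problem, namely the solution of the BSDE with driver $g_v$ on $[t,u]$ with terminal value $-X$ at time $u$. This gives
\[
\gamma(t,u,v,X)=\big[\mathcal{E}^{g_v}_{t,v}(-X)-\mathcal{E}^{g_v}_{t,u}(-X)\big]+\big[\mathcal{E}^{g_v}_{t,u}(-X)-\mathcal{E}^{g_u}_{t,u}(-X)\big].
\]
The first bracket is exactly the h-longevity index of the single-driver risk measure generated by $g_v$. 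Since $g_v(s,y,0)\ge 0$, Theorem~\ref{prop: longevity-BSDE-nca} applied with the fixed Lipschitz driver $g_v$ shows this bracket is non-negative. Its explicit form via $\widetilde Q_X$ is not even needed.

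The second bracket compares two BSDEs on the same interval $[t,u]$ with the same terminal condition $-X$ and drivers $g_u\le g_v$ on $[0,u]$. By the classical comparison theorem for Lipschitz BSDEs, this bracket is non-negative. Adding the two brackets gives $\gamma\ge0$.

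For necessity, assume h-longevity, i.e.\ $\rho_{tv}(X)\ge\rho_{tu}(X)$ for $t\le u\le v$.

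\textbf{Step 1: sign of $g_u(\cdot,y,0)$.} Fix $u$ and take $t<u$ with $t+\varepsilon\le u$. Using the Jiang / Briand--Coquet--Hu--M\'emin--Peng representation, as in the proof of Proposition~\ref{prop: normal-restric} b),
\[
g_u(t,y,0)=\lim_{\varepsilon\to0^+}\frac{\mathcal{E}^{g_u}_{t,t+\varepsilon}(-y)-y}{\varepsilon}.
\]
Here the small-horizon evaluation uses the driver $g_u$, which is not $\rho_{t,t+\varepsilon}$ itself. This is the delicate point. Instead I would work directly with the identity $\mathcal{E}^{g_u}_{t,u}(-y)\ge$ the value obtained by solving with $g_u$ on the shorter horizon. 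Alternatively, and more simply, I would use h-longevity for the pair $(t,t)$ and $(t,t+\varepsilon)$, together with the representation of $\rho_{t,t+\varepsilon}$ via $g_{t+\varepsilon}$. This yields $g_{t+\varepsilon}(t,y,0)\ge0$ up to a vanishing error, and then $g_u(s,y,0)\ge0$ for a.a.\ $s<u$ by letting the horizon approach $s$ from above, exploiting monotonicity in $u$ once Step 2 is established.

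\textbf{Step 2: monotonicity of $\mathcal{G}$.} For $s<u\le v$, I would take the terminal variable $-y-z\cdot(B_{s+\varepsilon}-B_s)$, which is $\mathcal{F}_u$-measurable. Using the limit representation for both $g_u$ and $g_v$ on the full horizons $u$ and $v$ (the same device as in the proof of the previous proposition, where the restriction property was replaced by the inequality $\rho_{sv}\ge\rho_{su}$), the difference quotients satisfy
\[
\frac{\rho_{sv}(\cdot)-y}{\varepsilon}\ge\frac{\rho_{su}(\cdot)-y}{\varepsilon}.
\]
Passing to an a.s.\ convergent subsequence then gives $g_v(s,y,z)\ge g_u(s,y,z)$.

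\textbf{Main obstacle.} The key difficulty is that the converse-comparison limit formula is stated for $\rho_{t,t+\varepsilon}$ with the single driver. Here I would need the version where the horizon stays fixed at $u$ while the terminal variable is $\mathcal{F}_{t+\varepsilon}$-measurable. The driver $g_u$ continues to act on $[t+\varepsilon,u]$ with $Z=0$, contributing $\int_{t+\varepsilon}^u g_u(r,\cdot,0)\,dr$. That contribution does not vanish, so I would first need to control or subtract this term, for instance by differencing two terminal values or using the linearization of Theorem~\ref{prop: longevity-BSDE-nca}, before taking limits.
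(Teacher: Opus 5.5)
Your sufficiency argument is correct, and it is organized differently from the paper's. Inserting $\mathcal{E}^{g_v}_{t,u}(-X)$ splits $\gamma$ into two pieces. The first is the h-longevity index of the single driver $g_v$, which is nonnegative by the sufficiency half of Theorem~\ref{prop: longevity-BSDE-nca} and needs only $g_v(\cdot,y,0)\ge0$ on $[u,v]$. The second is a same-horizon comparison of $g_u\le g_v$ on $[t,u]$. The paper instead linearizes $Y^v-\bar Y^u$ once with respect to $g_v$ and treats $g_v-g_u$ along $(\bar Y^u,\bar Z^u)$ as a nonnegative source term. The substance is the same, but your version is cleaner. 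The paper's write-up ends with the source $g_u(s,-X,0)$ on $[u,v]$ and invokes $g_v\ge g_u$ there, so it uses $g_u$ outside $[0,u]$, where neither $g_u$ nor the hypotheses are given. The correct source on $[u,v]$ is $g_v(s,-X,0)$, exactly as in your first bracket.

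The necessity half has a genuine gap: the one you flag yourself and leave open. Step~2 rests on
\[
g_u(s,y,z)=\lim_{\varepsilon\to0^+}\frac{\rho_{su}\big(-y-z\cdot(B_{s+\varepsilon}-B_s)\big)-y}{\varepsilon}
\]
with the horizon frozen at $u$. Without the restriction property the numerator converges to $\rho_{su}(-y)-y$, which is nonzero in general. The quotient therefore blows up, and nothing about $g_v-g_u$ can be read off. The paper's proof of this direction uses exactly this fixed-horizon formula. It even writes $\lim_{\varepsilon\to0^+}\varepsilon^{-1}(\rho_{tu}(-y)-y)$, whose numerator does not depend on $\varepsilon$, so it does not contain the missing idea either.

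In fact the gap cannot be closed, because the ``only if'' direction is false. Take the deterministic drivers $g_u(s,y,z)=2s-u$, $s\in[0,u]$, which are Lipschitz and convex in $(y,z)$. Then $\rho_{tu}(X)=E[-X\vert\mathcal{F}_t]+t(u-t)$, so $\gamma(t,u,v,X)=t(v-u)\ge0$ and h-longevity holds. Yet $g_v-g_u=u-v<0$ on $[0,u]$, so $\mathcal{G}$ is decreasing, and $g_u(s,y,0)<0$ for $s<u/2$. Keeping the sign condition does not rescue monotonicity either: $g_u\equiv1/(1+u)$ gives $\gamma=(v-u)(1+t)/\big((1+u)(1+v)\big)\ge0$ with $\mathcal{G}$ strictly decreasing.

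What h-longevity actually controls is the integrated comparison $\mathcal{E}^{g_v}_{t,v}(-X)\ge\mathcal{E}^{g_u}_{t,u}(-X)$. Through $\rho_{t,t+\varepsilon}(-y)\ge y$ it also gives a sign condition on $g_{t+\varepsilon}$ near the diagonal $s=t$; in the example, $g_{t+\varepsilon}(t)=t-\varepsilon\to t\ge0$. Your Step~1 would be fine given Step~2. Comparison gives $\mathcal{E}^{g_u}_{t,t+\varepsilon}(y)\ge\mathcal{E}^{g_{t+\varepsilon}}_{t,t+\varepsilon}(y)\ge y$, and the fixed-driver limit then yields $g_u(t,y,0)\ge0$. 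Since Step~2 cannot hold, only the sufficiency direction of the statement is provable.
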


\noindent\begin{proof}
    Assume h-longevity holds.
By the the Converse Comparison Theorem of~\cite[Thm.~ 4.1]{BCHMPeng} and~\cite[Lemma~ 2.1]{jiang}, h-longevity yields
\begin{eqnarray*}
g_v(t,y,z)&=&\lim _{\varepsilon \to 0^+} \frac{\rho_{tv} \left(-y-z \cdot (B_{t + \varepsilon} -B_{t})  \right)-y}{\varepsilon}\\
&\geq &\lim _{\varepsilon \to 0^+} \frac{\rho_{tu} \left(-y-z \cdot (B_{t + \varepsilon} -B_{t})  \right)-y}{\varepsilon} 
= g_u(t,y,z)
\end{eqnarray*}
with convergence in $L^p$ with $p \in [1,2)$, for any $y \in \mathbb{R}, z \in \Bbb R^d$, and a.a. $t \leq u$. Hence, we can see that $\mathcal{G}$ is increasing.
Similarly, using h-longevity
\begin{eqnarray*}
g_u(t,y,0)&=&\lim _{\varepsilon \to 0^+} \frac{\rho_{tu} (-y)  -y}{\varepsilon}
\geq \lim _{\varepsilon \to 0^+} \frac{\rho_{tt} (-y)  -y}{\varepsilon} = 0.
\end{eqnarray*}

Vice versa, assume $\mathcal{G}$ is increasing and $g_u (s,y,0)\geq 0$ for any $u \in [0,T]$, a.a. $s\in [0,u]$, and all $y\in \mathbb{R}$. 
We adopt the notation of the previous section, we consider the extension \eqref{eq: ext} of the solution to the BSDE with driver $g_u$ to $[0,v]$. Correspondingly, we have $ \widetilde{Y}_s= Y_s^v - \bar{Y}_s^u$ and  $\widetilde{Z}_s= Z_s^v - \bar{Z}_s^u$. Also, we consider \eqref{eq: deltaz}-\eqref{eq: deltay} for $g_v$.
For $u\leq v$ and $X\in L^2(\mathcal{F}_u)$, we can write
\begin{eqnarray*}
\rho_{tv}(X)-\rho_{tu}(X)\hspace{-5mm} &&=  \int_t^v \hspace{-1mm}[g_v(s,Y_s^v, Z_s^v)- g_u(s,\bar{Y}_s^u, \bar{Z}_s^u)] ds
+ \hspace{-1mm} \int_u^v \hspace{-1mm}g_u(s,\bar{Y}_s^u, \bar{Z}_s^u) ds \\
&&\quad - \int_t^v \hspace{-1mm} [Z_s ^v- \bar{Z}_s^u] dB_s
- \hspace{-1mm}\int_u^v \hspace{-1mm} \bar{Z}_s ^u dB_s \notag \\
&&\geq \int_u^v g_u(s,-X,0)ds +  \int_t^v [\Delta_y g_v(s) \cdot \widetilde{Y}_s + \Delta_z g_v(s) \cdot \widetilde{Z}_s] ds - \int_t^v \widetilde{Z}_s dB_s, 
\end{eqnarray*}
by $g_v(s,\bar{Y}_s^u, \bar{Z}_s^u) - g_u(s,\bar{Y}_s^u, \bar{Z}_s^u) \geq 0$ thanks to $\mathcal{G}$ increasing.
The right-hand side here above is a linear BSDE with unique solution and 
$$
\widetilde{Y}_t= E_{\widetilde{Q}_X} \Big[ e^{\int_t^v \Delta_y g_v(s) ds} \int_u^v g_u(s,-X,0) ds | \mathcal{F}_t \Big],
$$
where
\begin{equation*}
\frac{d \widetilde{Q}_X}{dP}= \exp \left\{ - \frac 12 \int_t^v \vert\Delta_z g_v(s)\vert^2 ds + \int_t^v \Delta_z g_v(s) dB_s \right\}.
\end{equation*}
Since $\widetilde{Y}_t \geq 0$, we conclude that $\rho_{tv}(X)-\rho_{tu}(X)\geq 0$.
\end{proof} 

\vspace{2mm}
\noindent
This result improves ~\cite[Prop.~3.10]{DNRG1}, introducing the cash non-additivity, weakening the sufficient condition, and proving the necessary side.

%%%%
\section{A q-entropic risk measure on losses} \label{Sec3}

We consider the generalized entropy studied in \cite{tsallis1, tsallis2} based on the generalization of the logarithmic and exponential functions. 
This has found recent application to pricing and risk measures in \cite{ma-tian, tian}.

Here below we study the connection of the generalized entropy with BSDE~\eqref{eq: BSDE} with driver
\begin{equation} \label{eq: driver - entr general}
g_q(t,y,z)= \frac{q}{2} \frac{\vert z\vert^2}{1+(1-q) y}, \quad \mbox{for } q>0,
\end{equation}
which is of type \eqref{eq: gquad}.
Note that when $q=1$, we reduce to the classical BSDE associated to the entropic case.
Observe that, for $q\in (0,1)$, the driver $g_q$ is convex in $(y,z)$ for $y> \frac{1}{q-1}$ (concave for $y< \frac{1}{q-1}$).
For $q>1$, the situation is reversed. 

\vspace{2mm}
We choose to work with $q\in (0,1)$ and we consider the driver \eqref{eq: driver - entr general} for $y>\frac{1}{q-1}$. By Proposition \ref{prop: quadexistence}, the solution exists, is unique, and non-negative for the terminal condition $X\in L^2(\mathcal{F}_u)$.
We can also see that the solution has representation
\begin{equation} \label{eq: q-entropy representation}
Y_t= \ln_q E  \left[\left. \exp_q(X) \right| \mathcal{F}_t \right].
\end{equation}
In fact, we can consider
\begin{equation}\label{eq: calY}
\mathcal{Y}_t = \exp_q(X) - \int_t^u \mathcal{Z}_s dB_s.
\end{equation}
We apply It\^o formula to $d\ln_q \mathcal{Y}_t$, we see that
$$
-d\ln_q \mathcal{Y}_t = \frac{q}{2} \mathcal{Y}_t^{q-1} (\mathcal{Y}_t^{-q} \mathcal{Z}_t)^2 dt -  \mathcal{Y}_t^{-q} \mathcal{Z}_t dB_t.
$$
Taking $ Y_t = \ln_q \mathcal{Y}_t$ and $Z_t = \mathcal{Y}_t^{-q} \mathcal{Z}_t$, we see that the equation above corresponds to
$$
-dY_t = \frac{q}{2} \frac{\vert Z_t\vert^2}{1+(1-q) Y_t} dt - Z_t dB_t.
$$
Hence, from \eqref{eq: calY} we obtain \eqref{eq: q-entropy representation}.
This representation is already presented in \cite[Thm.~4.2]{ma-tian}, under more restrictive assumptions.

\vspace{2mm}
In this work, we are interested in risk evaluation of losses exceeding a given level of severity $\beta\geq 0$, representing a buffer of acceptability. Furthermore, we consider short and long term horizons triggering the need of considering interest rate uncertainty. For this reason, we construct a fully-dynamic risk measure convex and cash subadditive exploiting the BSDE approach using \eqref{eq: driver - entr general} with solution \eqref{eq: q-entropy representation}. 
For this purpose, we define the {\it q-entropic risk measure on losses}:
\begin{equation} \label{eq: g-entropic losses}
\rho_{tu}^{q}(X) \triangleq \ln_q E  \left[\left. \exp_q((X+\beta)^- +\alpha_q) \right| \mathcal{F}_t \right]
\end{equation}
with $\alpha_q \geq \frac{1}{q-1}$, which is indeed fulfilling all properties requested.

Within the family \eqref{eq: g-entropic losses}, we can choose how conservative the risk evaluation should be, by selecting the values of $\alpha_q $ and $q$.
In particular, we see that, for a fixed $q$, the higher is the value $\alpha_q$, the more conservative is the corresponding risk measure $\rho^q_{tu}(X)$. In fact,
$$
\alpha_q^1 \leq \alpha_q^2 \quad \Longrightarrow \quad 
\ln_q E  \left[\left. \exp_q((X+\beta)^- +\alpha_q^1) \right| \mathcal{F}_t \right] \leq 
\ln_q E  \left[\left. \exp_q((X+\beta)^- +\alpha_q^2) \right| \mathcal{F}_t \right].
$$
Furthermore, the sensitivity with respect to $q \in (0,1)$ also reinforces how conservative is $\rho^q_{tu}(X)$.
\begin{proposition}
For any $X \in L^2(\mathcal{F}_u), \beta \in \mathbb{R}$, the q-entropic risk measure on losses $\rho^q_{tu}$ is increasing in $q$ with
\begin{equation*}
E [ (X+\beta)^- + \alpha_0 \vert \mathcal{F}_t]  = \rho_{tu}^0(X)
\leq \rho_{tu}^q(X) \leq \rho_{tu}^1(X) =  \ln E  \big[\exp( (X+\beta)^- + \alpha_1 ) \vert \mathcal{F}_t\big].
\end{equation*}
for $-1\leq \alpha_0 \leq \alpha_q \leq \alpha_1$.
\end{proposition}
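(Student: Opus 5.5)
Write $W \triangleq (X+\beta)^- \geq 0$ and $\phi_q(x) \triangleq \ln_q E[\exp_q(x + W)\mid\mathcal{F}_t]$ for the map appearing in \eqref{eq: g-entropic losses}. The claim has two ingredients: monotonicity of the generalized certainty equivalent $\ln_q E[\exp_q(\cdot)\mid\mathcal{F}_t]$ in $q$ (for a fixed shift), and monotonicity in the shift $\alpha_q$, already noted before the statement. Combining them, for $\alpha_0\le\alpha_q\le\alpha_1$ we get $\rho^0\le \ln_q E[\exp_q(W+\alpha_0)\mid\mathcal{F}_t]\cdot$(monotonicity in $q$)$\le \rho^q$ (monotonicity in $\alpha$) and similarly on the upper side. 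So the plan is first to identify the endpoints, then prove the $q$-monotonicity for a fixed random variable $V=W+\alpha$ in the admissible range, and finally chain the inequalities.

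For the endpoints, at $q=0$ we have $\exp_0(x)=1+x$ and $\ln_0(x)=x-1$, so $\ln_0E[\exp_0(V)\mid\mathcal{F}_t]=E[V\mid\mathcal{F}_t]$, which gives $\rho^0_{tu}(X)=E[W+\alpha_0\mid\mathcal{F}_t]$. The constraint $\alpha_0\ge \frac{1}{0-1}=-1$ is exactly $\alpha_0\ge -1$, consistent with the statement. At $q=1$, the limits recalled from \cite{tsallis1} give the classical entropic formula $\ln E[\exp(V)\mid\mathcal{F}_t]$.

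The core step is to show that, for a fixed $V$ with $1+(1-q)V\ge 0$, the map $q\mapsto \ln_q E[\exp_q(V)\mid\mathcal{F}_t]$ is increasing. We set $p=1-q\in(0,1]$ and write $\exp_q(V)=(1+pV)^{1/p}$, $\ln_q(y)=(y^p-1)/p$. Then
\begin{equation*}
\ln_q E[\exp_q(V)\mid\mathcal{F}_t]=\frac{\big(E[(1+pV)^{1/p}\mid\mathcal{F}_t]\big)^{p}-1}{p}.
\end{equation*}
We plan to compare $p_1<p_2$ via a Jensen/power-mean argument. Put $U_i=(1+p_iV)^{1/p_i}$, so that $U_2=\psi(U_1)$ with $\psi(x)=(1+\frac{p_2}{p_1}(x^{p_1}-1))^{1/p_2}$, i.e.\ $\psi=\exp_{q_2}\circ\ln_{q_1}$. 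We then check that $\psi$ is concave on its domain when $p_1<p_2$. Conditional Jensen gives $E[U_2\mid\mathcal{F}_t]\le\psi(E[U_1\mid\mathcal{F}_t])$, and applying the increasing map $\ln_{q_2}$ yields $\ln_{q_2}E[\exp_{q_2}V]\le \ln_{q_1}E[\exp_{q_1}V]$. Since $p_2>p_1$ means $q_2<q_1$, this is monotone increasing in $q$ as desired.

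We expect the main obstacle to be verifying the concavity of $\psi$ and the domain issues. A direct computation gives $\psi'(x)=x^{p_1-1}\big(1+\tfrac{p_2}{p_1}(x^{p_1}-1)\big)^{1/p_2-1}$. Concavity reduces to checking that $\log\psi'$ is decreasing, i.e.\ that $(p_1-1)/x+(1-p_2)\,x^{p_1-1}/(1+\frac{p_2}{p_1}(x^{p_1}-1))\le 0$. This simplifies to $(p_1-p_2)(1-\ldots)$-type terms whose sign follows from $p_1<p_2$ and the positivity of the base. Integrability is handled by $L^2$ and the growth of $\exp_q$, which is polynomial of degree $1/p$; if needed we truncate $W$ and pass to the limit by monotone convergence. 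The endpoint $q=1$ is obtained by letting $q\uparrow1$, using continuity in $q$ together with monotone or dominated convergence.
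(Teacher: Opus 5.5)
Your route differs from the paper's and can be made to work, but the key step fails as stated. You claim that $\psi=\exp_{q_2}\circ\ln_{q_1}$ is concave ``on its domain'', and that the sign of $(\log\psi')'$ ``follows from $p_1<p_2$ and the positivity of the base''. Multiply your expression for $(\log\psi')'$ by the positive factor $x\big(1+\tfrac{p_2}{p_1}(x^{p_1}-1)\big)$. You get exactly
\[
\frac{p_1-p_2}{p_1}\,\big(x^{p_1}-(1-p_1)\big).
\]
So for $p_1<p_2$, $\psi$ is concave precisely where $x^{p_1}\ge 1-p_1$. With $x=(1+p_1V)^{1/p_1}$, this means $V\ge -1$. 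On the part of the domain corresponding to $V\in[-1/p_2,-1)$, $\psi$ is convex. Your core lemma (monotonicity in $q$ for every $V$ with $1+(1-q)V\ge 0$) is therefore false. For $V$ uniform on $\{-2,-1\}$ one finds $\ln_{3/4}E[\exp_{3/4}(V)]\approx -1.361<-1.293\approx\ln_{1/2}E[\exp_{1/2}(V)]$.

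\textbf{Repair.} State the lemma only for $V\ge -1$, and apply Jensen on $[(1-p_1)^{1/p_1},\infty)$, which contains $U_1$ and its conditional mean. In your chain $V=(X+\beta)^-+\alpha$ with $\alpha\in\{\alpha_0,\alpha_q\}$, so $V\ge -1$ by the hypothesis $-1\le\alpha_0\le\alpha_q$. Thus $\alpha_0\ge-1$ is not merely the domain condition for $\exp_0$. It is the same threshold that makes $\partial_q g_q(t,y,z)=\frac{|z|^2(1+y)}{2(1+(1-q)y)^2}$ nonnegative for $y\ge -1$ in the paper's proof. It is genuinely needed, since the definition of $\rho^q$ alone allows $\alpha_q\in[\frac{1}{q-1},-1)$.

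Two minor points. First, $X\in L^2$ does not make $\exp_q(V)$ integrable when $\frac{1}{1-q}>2$. This is harmless: a concave nondecreasing $\psi$ is dominated by an affine function, so the inequalities hold with values in $[-1,+\infty]$. Second, for $q=1$ you need no limit. The same argument works with $\psi=\exp_{q_2}\circ\ln$, whose concavity condition is again $\ln x\ge-1$, i.e.\ $V\ge-1$.

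\textbf{Comparison with the paper.} The paper argues at the BSDE level: the sign of $\partial_q g_q$ on $\{y\ge-1\}$, plus the comparison theorem for quadratic BSDEs and the ordering of the terminal values $(X+\beta)^-+\alpha_q$. This is shorter and shows transparently where $-1$ comes from. However, it rests on a comparison result whose applicability to the singular driver $g_q$ (coefficient $\frac{q}{2(1+(1-q)y)}$, not globally integrable) needs care. Your corrected argument uses only the explicit representation \eqref{eq: g-entropic losses} and conditional Jensen. It is elementary and self-contained, and it handles the endpoints $q=0$ and $q=1$ explicitly.
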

Namely, when considered on losses, the classical entropic risk measure $\rho_{tu}^1(X)$ (with $\alpha_1=0)$ is more conservative than any q-entropic one with $-1\leq \alpha_0 \leq \alpha_q \leq \alpha_1= 0$.

\vspace{2mm}
\noindent
\begin{proof}
It is easy to check that
\begin{equation*}
\frac{\partial g_q}{\partial q}(t,y,z)=\frac{1}{2} \frac{z^2 (1+y)}{(1+(1-q) y)^2} \geq 0 \mbox{ for any } t \in [0,T], y \geq -1, z \in \mathbb{R}.
\end{equation*}
The increasing monotonicity of $\rho_{tu}^q(X)$ in $q$ follows from the Comparison Theorem of BSDEs \cite[Prop. 3.2]{bahlali-et-al} since $-1\leq \alpha_0 \leq \alpha_q \leq \alpha_1$.
\end{proof}

\vspace{3mm}
We remark that the q-entropic risk measure on losses $\rho^q_{tu}$ is normalized if and only if $\alpha_q = 0$. The restriction properties always holds, see Proposition \ref{prop: normal-restric} \textit{a)}.

In line with our previous discussion, we suggest a modified version of the q-entropic risk measure on losses to capture horizon risk.
We then suggest the fully-dynamic \textit{hq-entropic risk measure on losses} 
\begin{equation}\label{eq: hq-entropic}
\rho_{tu}^{hq}(X) \triangleq \mathcal{E}^{{g}^{hq}}( (X+\beta)^-+\alpha_q \vert \mathcal{F}_t)= \ln_q E  \left[\left. \exp_q \left((X+\beta)^- +\alpha_q + A(t,u) \right) \right| \mathcal{F}_t \right]
\end{equation}
with $A(t,u) = \int_t^u a(s) ds$, induced by the BSDE with driver:
\begin{equation*} %\label{eq: ghq}
{g}^{hq}(t,y,z)= \frac{q}{2} \frac{z^2}{1+(1-q) y} + a(t)
\end{equation*}
for some deterministic $a(t)\geq 0$, for all $t$. 
It is easy to see that \eqref{eq: hq-entropic} satisfies h-longevity since $a(t) \geq 0$ for all $t$.

Clearly, definition \eqref{eq: hq-entropic} holds for general real non-negative $A(t,u)$, $t\leq u$, in this case loosing the direct connection with the BSDE approach.

%%%%%%%%%%%%%%%%%%%%

\section{Generalized shortfall approach} \label{Sec5}

Keeping our intent to provide families of risk measures capturing horizon risk and cash non-additivity and motivated by the fact that the classical entropic risk measure can be regarded as a shortfall risk measure (see \cite[Ex. 4.114]{follmer-schied-book}), we study this approach and provide the necessary generalizations to achieve the goal. 

\medskip

We recall that a {\it dynamic shortfall risk measure} $\rho^{U,B}_{t}$, for $t \leq u$, is given by
\begin{equation*} 
\rho^{U,B}_{t} (X)=\essinf \{ m_t \in L^p(\mathcal{F}_t): E[\left. U(X+m_t)\right| \mathcal{F}_t] \geq B_t \}, \quad X \in L^p(\mathcal{F}_u), 
\end{equation*}
associated to a concave non-decreasing utility function $U: [-\infty, +\infty) \longrightarrow [-\infty, +\infty)$ and a target $B_t$ $\mathcal{F}_t$-measurable random variable.
Here we set $\essinf \, \emptyset = + \infty$ and $\essinf \, L^p(\mathcal{F}_t) = - \infty$. 
Observe that whenever $X\in L^p(\mathcal{F}_u)$, then 
$$
 -\infty \leq E[U(X)\vert \mathcal{F}_t] \leq U(E[X\vert \mathcal{F}_t]) <+\infty, 
$$
by Jensen inequality. 
See \cite{follmer-schied-book} for the static case on $L^\infty$ and see \cite{doldi-fritt} for the dynamic case on $L^p$ (in the context of systemic risk).
Dynamic shortfall risk measures can be equivalently formulated in terms of acceptance sets
$$\mathcal{A}^{U,B}_{t} \triangleq \{X \in L^p(\mathcal{F}_u): E[\left. U(X)\right| \mathcal{F}_t] \geq B_t \}$$
as
\begin{equation*}
\rho^{U,B}_{t} (X)=\essinf \{ m_t \in L^p(\mathcal{F}_t): X+m_t \in \mathcal{A}^{U,B}_{t} \}, \quad X \in L^p(\mathcal{F}_u).
\end{equation*}
 
\vspace{2mm}
Note that, in the sequel, we will assume that the target $B_t$ is \textit{a priori} not related to $U$ and it is deterministic.
To tackle horizon risk we introduce the following definition.
\begin{definition}\label{def: fully shortfall}
    A fully-dynamic shortfall risk measure is the family $(\rho^{U,B}_{tu})_{t,u}$:
    \begin{equation} \label{eq: shortfall-u}
\rho^{U,B}_{tu} (X)=\essinf \{ m_t \in L^p(\mathcal{F}_t): E[\left. U_u(X+m_t)\right| \mathcal{F}_t] \geq B_{tu} \}, \quad X \in L^p(\mathcal{F}_u),
\end{equation}
    generated by the families of real numbers $(B_{tu})_{t,u}$ and of mappings $(U_u)_u $ with concave and non-decreasing 
    $U_u: [-\infty, +\infty) \longrightarrow [-\infty, +\infty)$.
\end{definition}
Here $U_u$ depends also on the time horizon while the deterministic target $B_{tu}$ both on the time horizon $u$ and on the evaluation time $t$. 
The corresponding acceptance sets are given by
\begin{equation} \label{eq: shortfall-acceptance-set}
\mathcal{A}^{U,B}_{tu} \triangleq \{X \in L^p(\mathcal{F}_u): E[\left. U_u(X)\right| \mathcal{F}_t] \geq B_{tu} \}
\end{equation}
leading to the equivalent representation
\begin{equation} \label{eq: shortfall-class-acceptance}
\rho^{U,B}_{tu} (X)=\essinf \{ m_t \in L^p(\mathcal{F}_t): X+m_t \in \mathcal{A}^{U,B}_{tu} \}, \quad X \in L^p(\mathcal{F}_u).
\end{equation}

\noindent
 Hereafter, we see and example of \eqref{eq: shortfall-u} including the horizon perspective. 

 \begin{example}[Fully-dynamic h-entropic risk measure]
 Consider $U_u(x)=1-e^{- x+\int_0^u a(s) ds}$, $x\in \mathbb{R}$, with $a$ non-negative integrable real function and 
 \begin{equation*}
 B_{tu}=1- e^{\int_0^t a(s) ds}.
 \end{equation*}
 By direct computation, Definition \ref{def: fully shortfall} gives
 \begin{equation*}
 %\label{eq:h-entropic}
 \rho_{tu} (X)= \ln \left(\frac{E [e^{-X +\int_0^u a(s) ds}\vert \mathcal{F}_t]}{1-B_{tu}} \right)= \ln E [e^{-X +\int_t^u a(s) ds}\vert \mathcal{F}_t],
 \end{equation*}
 which we recognize as the h-entropic risk measure \eqref{eq: hentropica-expl} $(b=1)$.
 \end{example}
We note that $(U_u-B_{tu})_{t,u}$ is non-increasing in $u$ in the above example. This fact holds in more generality.

\begin{proposition}
    \label{prop: longevity-shortfall-u}
If $(U_u - B_{tu})_{t,u}$ is non-increasing in $u$ for all $t$, then the h-longevity of $(\rho_{tu})_{t,u}$ holds. 
Whenever both $U_{u}$ and $B_{tu}$ are constant in $u$,  restriction holds.
\end{proposition}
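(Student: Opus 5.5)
The proof goes through the acceptance sets \eqref{eq: shortfall-acceptance-set} and the representation \eqref{eq: shortfall-class-acceptance}.

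We read the hypothesis pointwise: for $t\le u\le v$ and all $x$,
$$U_v(x)-B_{tv}\le U_u(x)-B_{tu}.$$
Here $X\in L^p(\mathcal{F}_u)$ is regarded also as an element of $L^p(\mathcal{F}_v)$.

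\textbf{Step 1: the acceptance sets shrink with the horizon.} We show that $X+m_t\in\mathcal{A}^{U,B}_{tv}$ implies $X+m_t\in\mathcal{A}^{U,B}_{tu}$. Since $X+m_t$ is $\mathcal{F}_u$-measurable, it suffices to check
$$E[U_u(X+m_t)\mid\mathcal{F}_t]-B_{tu}\;\ge\; E[U_v(X+m_t)\mid\mathcal{F}_t]-B_{tv}\;\ge\;0 .$$
The first inequality follows by applying the pointwise inequality at $x=X+m_t$ and taking conditional expectations. This is legitimate because the targets $B_{tu},B_{tv}$ are deterministic, so they pass through the conditional expectation. Some care is needed since the $U$'s may take the value $-\infty$ and are only bounded above. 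Conditional expectations of random variables bounded above are well defined in $[-\infty,\infty)$ and are monotone, so the inequality survives.

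\textbf{Step 2: h-longevity.} The admissible set of $m_t$ in \eqref{eq: shortfall-u} for horizon $v$ is contained in the one for horizon $u$. The essential infimum over a smaller set is larger, so $\rho_{tv}(X)\ge\rho_{tu}(X)$, which is $\gamma(t,u,v,X)\ge 0$. The conventions $\essinf\emptyset=+\infty$ and $\essinf L^p(\mathcal{F}_t)=-\infty$ are compatible with this monotonicity, so the degenerate cases need no separate treatment.

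\textbf{Step 3: restriction.} If $U_u$ and $B_{tu}$ do not depend on $u$, the two admissible sets coincide for every $\mathcal{F}_u$-measurable $X$. Hence $\rho_{tu}(X)=\rho_{tv}(X)$, which is \eqref{eq: restriction}.

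The only delicate point is the measure-theoretic one in Step 1: justifying monotonicity of conditional expectation for utilities that may take $-\infty$. This is handled by working with random variables bounded above, which the paper's Jensen remark already guarantees for $X\in L^p$. Everything else is a set-inclusion argument.
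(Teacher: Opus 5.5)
Your proof is correct and follows the paper's route. The paper cites \cite[Prop.~2.8 a)]{DNRG1} to reduce h-longevity to the inclusion $\mathcal{A}^{U,B}_{tv}\cap L^p(\mathcal{F}_u)\subseteq\mathcal{A}^{U,B}_{tu}$ and reads this off from the hypothesis; you prove that inclusion and the resulting comparison of essential infima directly, and you also spell out the restriction case, which the paper leaves implicit. One wording fix: $U_u(X+m_t)$ need not be bounded above by a constant, but by concavity it is dominated by an affine function of $X+m_t\in L^p$, so its positive part is integrable, which is all you need for the conditional expectations to be well defined and monotone.
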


\noindent
\begin{proof}
By \cite[Prop.~2.8 a)]{DNRG1}, h-longevity is equivalent to $\mathcal{A}^{U,B}_{tu} \supseteq \mathcal{A}^{U,B}_{tv} \cap L^p(\mathcal{F}_u)$. The form of the acceptance sets \eqref{eq: shortfall-acceptance-set} allows to conclude.
\end{proof}
\smallskip

The following example generalizes Value at Risk to capture h-longevity. It is well-known that Value at Risk is a shortfall type risk measure (see, e.g., \cite{follmer-schied-book, BLR}), though it is not induced by BSDEs.

\begin{example}[h-Value at Risk]
The shortfall representation of Value at Risk is hereafter, for $\alpha \in (0,1)$,
\begin{equation} \label{eq: VaR-shortfall}
VaR_{\alpha}(X)=\inf\{m \in \mathbb{R}: E[U^{\alpha}(X+m)] \geq 0\},
\end{equation}
with $B=0$ and
\begin{equation*}
U^{\alpha}(x)=(\alpha-1) 1_{\{x<0\}}+\alpha 1_{\{x \geq 0\}}, \quad x \in \mathbb{R}.
\end{equation*}
Consider now the dynamic version of \eqref{eq: VaR-shortfall} in terms of \eqref{eq: shortfall-u} by taking
\begin{equation*}
U^{\alpha}_u(x)=(\alpha_u-1) 1_{\{x<0\}}+\alpha_u 1_{\{x \geq 0\}}, \quad x \in \mathbb{R},
\end{equation*}
with a deterministic $\alpha_u \in (0,1)$ depending on the time horizon $u$.
Then, for any $X \in L^p(\mathcal{F}_u)$,
\begin{eqnarray*}
\rho_{tu} (X)&=&\essinf \{ m_t \in L^p(\mathcal{F}_t): E[\left. U^{\alpha}_u(X+m_t)\right| \mathcal{F}_t] \geq B_{tu} \} \\
&=&\essinf \{ m_t \in L^p(\mathcal{F}_t): E[\left. (\alpha_u-1) 1_{\{X+m_t<0\}}+\alpha_u 1_{\{X+m_t \geq 0\}} \right| \mathcal{F}_t] \geq B_{tu} \}\\
&=&\essinf \{ m_t \in L^p(\mathcal{F}_t): E[\left. \alpha_u- 1_{\{X+m_t<0\}} \right| \mathcal{F}_t] \geq B_{tu} \}\\
&=&\essinf \{ m_t \in L^p(\mathcal{F}_t): \alpha_u -E[\left.  1_{\{X+m_t<0\}} \right| \mathcal{F}_t] \geq B_{tu} \}\\
&=&\essinf \{ m_t \in L^p(\mathcal{F}_t): \alpha_u-1  +E[\left.  1_{\{X+m_t\geq 0\}} \right| \mathcal{F}_t] \geq B_{tu} \}\\
&=&\essinf \{ m_t \in L^p(\mathcal{F}_t): \alpha_u-1  +P(\left. X+m_t\geq 0 \right| \mathcal{F}_t) \geq B_{tu} \}.
\end{eqnarray*}
Taking $B_{tu} \equiv 0$, it follows that
\begin{equation*} 
\rho_{tu} (X)=\essinf \{ m_t \in L^p(\mathcal{F}_t): P(\left. X+m_t\geq 0 \right| \mathcal{F}_t) \geq 1-\alpha_u \} \triangleq VaR_{t,u,\alpha_u}(X),
\end{equation*}
similar to the one introduced in \cite{cheridito-stadje}. 
By Proposition \ref{prop: longevity-shortfall-u}, h-longevity is guaranteed when $\alpha_u$ is decreasing in $u$. 
\end{example}

%%%%%%%%%%%%%%%%%%%%%%%%%%%%%%%%%

\subsection{Generalized shortfall risk measures}

Inspired by shortfall risk measures related to a utility function $U$ and a target $B$ that are naturally cash additive, we aim now at introducing a generalized notion of shortfall leading to cash non-additive risk measures. We further extend the definition to include h-longevity in a second stage.

\begin{definition}
Given a concave non-decreasing (non-trivial) utility $U: [-\infty, +\infty)\longrightarrow [-\infty, +\infty)$, a deterministic target $B \in \mathbb{R}$, and a function $f:\mathbb{R}^2 \to [-\infty, +\infty)$, we define and call \emph{generalized shortfall risk measure}
\begin{equation} \label{eq: generalized shortfall}
\rho^{U,f,B}(X) \triangleq \inf\{m \in \mathbb{R}:E[U(f(X,m))]\geq B\}, \quad X \in L^{p}(\mathcal{F}_T) \quad (p\in [1,\infty]).
\end{equation}
\end{definition}
The function $f$ can be seen as an aggregator between the cash amount $m$ and the random variable $X$. This financial interpretation is in the same spirit of systemic risk measures (see, among others, \cite{biagini-etal-1, biagini-etal-2, doldi-fritt}) where an aggregation among the different units is needed.

\vspace{4mm}
For the arguments that come in the sequel, we recall that a cash additive risk measure can be represented in terms of an acceptance set $\mathcal{A}$ as
$$
\rho(X) = \inf \{ m \in \mathbb{R}: \: X+m \in \mathcal{A} \}.
$$
In \cite[Thm.~1]{drapeau-kupper}, it is proved that the representation of cash non-additivity quasi-convex\footnote{Quasi-convex means that $\rho(\lambda X + (1-\lambda) Y) \leq \max\{ \rho(X); \rho(Y) \}$ for any $\lambda \in [0,1]$. Any convex risk measure is also quasi-convex. See, e.g., \cite{cerreia}.} risk measures is in the form
\begin{equation}
    \label{eq: dk-acceptance}
    \rho(X) = \inf \{m\in \mathbb{R}: \: X \in \mathcal{A}^m \}
\end{equation}
by means of a monotonically increasing {\it family of acceptance sets} $(\mathcal{A}^m)_{m\in \mathbb{R}}$ 
that are convex and monotone, that is
$$
X \leq Y , \; X\in \mathcal{A}^m \quad \Longrightarrow \quad Y \in \mathcal{A}^m.
$$
Naturally, we have that
\begin{equation}
    \label{eq: am}
    \mathcal{A}^m = \{ X:\: \rho (X) \leq m \}.
\end{equation}

\vspace{4mm}
The generalized shortfall risk measure \eqref{eq: generalized shortfall} can be also written in terms of the family of acceptance set $(\mathcal{A}^{U,f,B;m})_m$:
\begin{equation} \label{eq: def-generalized-shortfall}
\rho^{U,f,B}(X) = \inf\{m \in \mathbb{R}: X \in \mathcal{A}^{U,f,B;m}\}, \quad X \in L^{p}(\mathcal{F}_T),
\end{equation}
with 
\begin{equation}\label{eq: accept-generalized}
\mathcal{A}^{U,f,B;m} = \{Y \in L^{p}(\mathcal{F}_T): E[U(f(Y,m))]\geq B\}.
\end{equation}

We stress that, for $f(x,m)=x+m$, the generalized shortfall risk measure reduces to a classical shortfall, see \eqref{eq: shortfall-u}.

\begin{proposition} \label{prop: properties generalized short}
Let Let $f$ be non-constant, $U: [-\infty, +\infty)\longrightarrow [-\infty, +\infty)$ be a concave, non-decreasing (non-trivial) function, and let  $(\mathcal{A}^{U,f,B;m})_{m \in \mathbb{R}}$ be defined as in \eqref{eq: accept-generalized}.\medskip

\noindent 1) \emph{Properties of acceptance sets:}

\noindent 1a) If $f(y,m)$ is non-decreasing in $m\in \mathbb{R}$, then the family $(\mathcal{A}^{U,f,B;m})_{m \in \mathbb{R}}$ is non-decreasing.

\noindent 1b) If $f(y,m)$ is non-decreasing in $y\in \mathbb{R}$, then $\mathcal{A}^{U,f,B;m}$ is monotone for any $m \in \mathbb{R}$.

\noindent 1c) If $(U \circ f) (y,m)$ is concave in $y$, then $\mathcal{A}^{U,f,B;m}$ is convex for any $m \in \mathbb{R}$.
\medskip

\noindent 2) \emph{Properties of the generalized shortfall risk measure:}

\noindent 2a) If $f(y,m)$ is non-decreasing in $m$ and in $y$, and $(U \circ f) (y,m)$ is concave in $y$, then  $\rho^{U,f,B}$ is quasi-convex.

Furthermore, under the assumptions in 2a):

\noindent 2b) If $(U \circ f) (y,m)$ is concave in $(y,m)$, then $\rho^{U,f,B}$ is convex.

\noindent 2c) If $f(y,k)\leq f(y-m,k+m)$ for any $y,k \in \mathbb{R}$ and $m>0$, then $\rho^{U,f,B}$ is cash subadditve.
\end{proposition}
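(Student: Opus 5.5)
The plan is to prove the acceptance-set properties first and then derive everything in part 2) from the Drapeau--Kupper representation \eqref{eq: dk-acceptance}, except convexity and cash subadditivity, which we check directly.

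\emph{Step 1a).} Let $m\le m'$ and $Y\in\mathcal{A}^{U,f,B;m}$. Since $f(Y,m)\le f(Y,m')$ pointwise and $U$ is non-decreasing, we get $E[U(f(Y,m'))]\ge E[U(f(Y,m))]\ge B$. Hence $\mathcal{A}^{U,f,B;m}\subseteq\mathcal{A}^{U,f,B;m'}$.

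\emph{Step 1b).} If $X\le Y$ with $X\in\mathcal{A}^{U,f,B;m}$, the same monotonicity argument in the first variable gives $Y\in\mathcal{A}^{U,f,B;m}$.

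\emph{Step 1c).} For $X,Y\in\mathcal{A}^{U,f,B;m}$ and $\lambda\in[0,1]$, concavity of $y\mapsto U(f(y,m))$ holds pointwise in $\omega$. Taking expectations,
$E[U(f(\lambda X+(1-\lambda)Y,m))]\ge \lambda E[U(f(X,m))]+(1-\lambda)E[U(f(Y,m))]\ge B$.
The only care needed is integrability: values in $[-\infty,+\infty)$ are allowed. We interpret expectations in the extended sense, using the positive part bounded via concavity, as in the paper's Jensen remark.

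\emph{Step 2a).} By 1a)--1c), $(\mathcal{A}^{U,f,B;m})_m$ is an increasing family of convex, monotone sets, and $\rho^{U,f,B}$ has the form \eqref{eq: def-generalized-shortfall}. Quasi-convexity can then be checked directly, without invoking the full converse of \cite[Thm.~1]{drapeau-kupper}. Take $m>\max\{\rho(X),\rho(Y)\}$. By the infimum definition and 1a), there are $m_1,m_2\le m$ with $X\in\mathcal{A}^{m_1}\subseteq\mathcal{A}^m$ and $Y\in\mathcal{A}^{m_2}\subseteq\mathcal{A}^m$. Convexity of $\mathcal{A}^m$ gives $\lambda X+(1-\lambda)Y\in\mathcal{A}^m$, so $\rho(\lambda X+(1-\lambda)Y)\le m$. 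Letting $m\downarrow\max\{\rho(X),\rho(Y)\}$ concludes. The case where the maximum is $+\infty$ is trivial.

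\emph{Step 2b).} Take $m_1>\rho(X)$ and $m_2>\rho(Y)$ with $X\in\mathcal{A}^{m_1}$ and $Y\in\mathcal{A}^{m_2}$; this uses 1a) to pass from near-infimal levels. Joint concavity of $U\circ f$ gives
$E[U(f(\lambda X+(1-\lambda)Y,\lambda m_1+(1-\lambda)m_2))]\ge B$.
Hence $\rho(\lambda X+(1-\lambda)Y)\le\lambda m_1+(1-\lambda)m_2$, and taking infima yields convexity.

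\emph{Step 2c).} Fix $m>0$ and any $k$ with $X+m\in\mathcal{A}^{k}$. Applying the hypothesis with $y=X+m$ gives $f(X+m,k)\le f(X,k+m)$. By monotonicity of $U$, $X\in\mathcal{A}^{k+m}$, so $\rho(X)\le k+m$. Taking the infimum over such $k$ gives $\rho(X)\le\rho(X+m)+m$, which is cash subadditivity.

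The main obstacle is the measure-theoretic care in 1c) and 2b), where $U$ may take the value $-\infty$ and expectations of concave compositions must be well defined. The remaining steps are order and inclusion arguments. Non-constancy of $f$ and non-triviality of $U$ only serve to avoid degenerate sets, where the acceptance sets would be either empty or everything.
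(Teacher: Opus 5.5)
Your proof is correct and follows the paper's route. Items 1a)--1c) are the same pointwise monotonicity and concavity checks. Item 2b) uses the same joint-concavity computation, and 2c) uses the same inclusion $\mathcal{A}^{U,f,B;k}\subseteq\mathcal{A}^{U,f,B;k+m}+m$ derived from $f(y,k)\le f(y-m,k+m)$. The only difference is that you prove directly the easy sufficiency directions of the results from \cite{drapeau-kupper} (Thm.~1, Prop.~1(i), Prop.~3) that the paper cites, which makes your argument self-contained.
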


\noindent
\begin{proof}
1a), 1b), 1c) can be checked easily from the definition of $(\mathcal{A}^{U,f,B;m})_{m \in \mathbb{R}}$.

\noindent 2a) follows from the items above and by \cite[Thm.~1]{drapeau-kupper}.

\noindent 2b) By \cite[Prop.~1(i)]{drapeau-kupper}, the convexity of $\rho^{U,f,B}$ is verified if and only if $\alpha \mathcal{A}^{U,f,B;m_1}+(1-\alpha)\mathcal{A}^{U,f,B;m_2} \subseteq \mathcal{A}^{U,f,B;\alpha m_1 + (1-\alpha) m_2}$ for any $m_1,m_2 \in \mathbb{R}$ and $\alpha \in [0,1]$.
Consider $(U \circ f) (y,m)$ concave in $(y,m)$ and take arbitrary $Y_1 \in \mathcal{A}^{U,f,B;m_1}, Y_2 \in \mathcal{A}^{U,f,B;m_2}$ and $\alpha \in [0,1]$. It then follows that
\begin{eqnarray*}
&&E[U(f(\alpha Y_1 + (1-\alpha) Y_2,\alpha m_1 + (1-\alpha) m_2))] \\
&\geq & \alpha E[U(f(Y_1 ,m_1 ))] + (1-\alpha) E[U(f(Y_2 ,m_2 ))] \geq B,
\end{eqnarray*}
hence also $\alpha Y_1 + (1-\alpha) YX_2 \in \mathcal{A}^{U,f,B;\alpha m_1 + (1-\alpha) m_2}$. Convexity of $\rho^{U,f,B}$ then follows.\smallskip

\noindent 2c) By \cite[Prop.~3]{drapeau-kupper}, cash subadditivity holds if and only if $\mathcal{A}^{U,f,B;k} \subseteq \mathcal{A}^{U,f,B;k+m}+m$ for any $m>0$ and $k \in \mathbb{R}$.
In our case, we have that, for any $m>0$ and $k \in \mathbb{R}$,
\begin{eqnarray*}
\mathcal{A}^{U,f,B;k+m}+m &=& \{Y \in L^{p}(\mathcal{F}_T): E[U(f(Y,k+m))]\geq B\}+m \\
&=& \{\bar{Y} \in L^{p}(\mathcal{F}_T): E[U(f(\bar{Y}-m,k+m))]\geq B\} \\
&\supseteq & \{\bar{Y} \in L^{p}(\mathcal{F}_T): E[U(f(\bar{Y},k))]\geq B\}=\mathcal{A}^{U,f,B;k},
\end{eqnarray*}
where the inclusion is due to the assumption on $f$.
\end{proof}
\medskip

\begin{remark} 
It is easy to check that the assumption on $f$ in item 2c) is equivalent to requiring that the ratio of the increment of $f(y,k)$ in $y$ and in $k$ satisfies the following:
\begin{equation*}
\frac{\Delta_k f(y,k)}{\Delta k} \geq \frac{\Delta_y f(y,k)}{\Delta y} \quad \mbox{ for any }\Delta k=\Delta y=m >0,
\end{equation*}
where $\Delta_k f(y,k) \triangleq f(y,k + \Delta k)$ and similarly for $\Delta_y f(y,k)$.
\end{remark}

\begin{example} \hspace{10cm}\\
i) The function $f(y,m)= \beta y+m$, with $\beta \in (0,1]$, satisfies the assumption in item 2c). The case where $\beta=1$ corresponds to the classical shortfall risk measure.

\noindent
ii) The function $f(y,k)=1-\exp(-\gamma y - k)$, with $\gamma \in (0,1)$, is increasing in $y$ and $k$, concave in $y$ (so also $(U \circ f) (y,m)$ is concave in $y$) and such that $\frac{\partial f}{\partial y} \leq \frac{\partial f}{\partial k}$. By the arguments above, it then generates a generalized shortfall risk measure that is quasi-convex and cash subadditve.
\end{example}

\subsubsection{Dual representation of generalized shortfall risk measures}  

We recall that the cash non-additive quasi-convex $\rho(X)$ has the following dual representation
\begin{equation*}
  %  \label{eq: dk-dual}
    \rho(X) = \sup_{Q\in \mathcal{Q}} \; R(E_Q [-X],Q), \quad X\in L^p(\mathcal{F}_T),
\end{equation*}
where $\mathcal{Q}$ is the set of probability measures whose $\frac{dQ}{dP}$ belong to the dual topological space of $L^p(\mathcal{F}_T)$ and the functional $R$ is
$$
R(x,Q) = \inf \{ m \in \mathbb{R}: \; x \leq c_{min}(m,Q) \}
$$
with 
\begin{equation} \label{eq: alpha-min-qco}
 c_{min}(m,Q) = \sup_{Y \in \mathcal{A}^m} E_Q[-Y].  
\end{equation}
These arguments are detailed in \cite[pp. 38-39]{drapeau-kupper}.

We provide now a dual representation of the generalized shortfall risk measure.

\begin{proposition} \label{prop: dual repr- gener shortfall}
Let $f$ be non-constant and let $U: [-\infty, +\infty)\longrightarrow [-\infty, +\infty)$ be a concave, non-decreasing (non-trivial) function, and let $(\mathcal{A}^{U,f,B;m})_{m \in \mathbb{R}}$ be defined as in \eqref{eq: accept-generalized}.

If $f(y,m)$ is increasing in $m\in \mathbb{R}$ and in $y\in \mathbb{R}$ and $(U \circ f) (y,m)$ is concave in $y$, then $\rho^{U,f,B}$ is lower semi-continuous and has the following dual representation:
\begin{equation} \label{eq: dual repr gener short}
\rho^{U,f,B}(X)= \sup_{Q \in \mathcal{Q}} R\big(E_Q[-X],Q \big), \quad  X \in L^p(\mathcal{F}_T),
\end{equation}
where, for any $x \in \mathbb{R}$ and $Q \in \mathcal{Q}$,
\begin{equation} \label{eq: R-dual repr gener short}
R(x,Q)= \inf\Big\{ m \in \mathbb{R}: \sup_{Y \in \mathcal{A}^{U,f,B;m}} E_Q[-Y] \geq x \Big\}.
\end{equation}
\end{proposition}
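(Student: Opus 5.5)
The plan is to reduce the statement to the Drapeau--Kupper duality for quasi-convex cash non-additive risk measures recalled just before the proposition. That representation $\rho(X)=\sup_{Q}R(E_Q[-X],Q)$, with $R$ built from $c_{min}(m,Q)=\sup_{Y\in\mathcal{A}^m}E_Q[-Y]$, holds for any risk measure of the form \eqref{eq: dk-acceptance}. The requirements are that the acceptance sets $(\mathcal{A}^m)_m$ form a non-decreasing family of convex, monotone sets, and that the resulting $\rho$ is lower semi-continuous, equivalently that it has closed sublevel sets. Then \eqref{eq: R-dual repr gener short} is exactly $R$ with $c_{min}$ from \eqref{eq: alpha-min-qco} for the family \eqref{eq: accept-generalized}. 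So the work is to check these structural hypotheses for $\mathcal{A}^{U,f,B;m}$ and to establish lower semi-continuity.

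First step: invoke Proposition \ref{prop: properties generalized short}. Since $f$ is increasing in $m$, item 1a) gives that $(\mathcal{A}^{U,f,B;m})_m$ is non-decreasing. Since $f$ is increasing in $y$, item 1b) gives monotonicity of each set. Concavity of $U\circ f$ in $y$ gives convexity by 1c), and 2a) gives quasi-convexity of $\rho^{U,f,B}$. By \eqref{eq: am}, the sublevel sets are $\{\rho^{U,f,B}\le m\}=\bigcap_{m'>m}\mathcal{A}^{U,f,B;m'}$. This equality holds because the family is non-decreasing and $\rho$ is an infimum over $m$.

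Second step, which I expect to be the main obstacle: lower semi-continuity, i.e.\ closedness of each $\mathcal{A}^{U,f,B;m}$ in $L^p$. Closedness of every $\mathcal{A}^{U,f,B;m'}$ suffices, since the sublevel set above is then an intersection of closed sets.
\begin{itemize}
\item Take $Y_n\to Y$ in $L^p$ with $E[U(f(Y_n,m))]\ge B$, and pass to a subsequence converging a.s.
\item Monotonicity and concavity of $U\circ f(\cdot,m)$ in $y$ give continuity on the interior of its domain, hence a.s.\ convergence $U(f(Y_n,m))\to U(f(Y,m))$ away from boundary points.
\item To pass the inequality to the limit one needs a reverse Fatou argument, $\limsup E[U(f(Y_n,m))]\le E[\limsup U(f(Y_n,m))]$. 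This requires a uniform integrable upper bound. Concavity supplies an affine majorant $U(f(y,m))\le a+b|y|$, and $|Y_n|$ is uniformly integrable along an $L^p$-convergent sequence.
\end{itemize}
Upper semi-continuity of the concave function $y\mapsto U(f(y,m))$, after redefining at boundary points if needed, then yields $E[U(f(Y,m))]\ge B$. For $p=\infty$ I would instead work with the weak$^*$/Fatou property, using bounded a.s.\ convergence in the same way.

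Finally, with quasi-convexity, monotonicity and lower semi-continuity in hand, apply the representation in \cite[pp.~38--39]{drapeau-kupper}. This gives $\rho^{U,f,B}(X)=\sup_{Q\in\mathcal{Q}}R(E_Q[-X],Q)$. Inserting $\mathcal{A}^m=\mathcal{A}^{U,f,B;m}$ into $c_{min}$ produces exactly \eqref{eq: R-dual repr gener short}.
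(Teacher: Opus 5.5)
Your route is the paper's. Closedness of each $\mathcal{A}^{U,f,B;m}$ gives lower semi-continuity: the paper cites \cite[Rem.~6]{drapeau-kupper}, and you use $\{\rho^{U,f,B}\le m\}=\bigcap_{m'>m}\mathcal{A}^{U,f,B;m'}$. Then \cite[Thm.~3]{drapeau-kupper} yields \eqref{eq: dual repr gener short}. Your reverse-Fatou argument (supergradient majorant $U(f(y,m))\le a+b|y|$, uniform integrability of $(|Y_n|)$, a.s.\ convergence along a subsequence) is a correct and more careful version of the paper's one-line ``concave, hence continuous''. This holds \emph{provided} $y\mapsto U(f(y,m))$ is real-valued. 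A minor point: Drapeau--Kupper's $c_{min}$ is built from the sublevel sets $\bigcap_{m'>m}\mathcal{A}^{U,f,B;m'}$, not from $\mathcal{A}^{U,f,B;m}$. Since $\mathcal{A}^{U,f,B;m}\subseteq\bigcap_{m'>m}\mathcal{A}^{U,f,B;m'}\subseteq\mathcal{A}^{U,f,B;m'}$ for every $m'>m$, both choices give the same left inverse $R$.

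The step that fails is ``after redefining at boundary points if needed''. Since $U$ (and $f$) may take the value $-\infty$, the concave non-decreasing map $y\mapsto U(f(y,m))$ can jump from $-\infty$ to a finite value at the left endpoint of its domain. So it need not be upper semicontinuous. Replacing it by its upper semicontinuous hull changes $\mathcal{A}^{U,f,B;m}$, and hence $\rho^{U,f,B}$, so this move is not admissible. The defect is real. Take $B=0$, $U(x)=0$ for $x>0$, $U(x)=-\infty$ for $x\le 0$, and $f(y,m)=y+m+1_{\{y>0\}}$. This $f$ is strictly increasing in both variables, and $U(f(\cdot,m))$ is $0$ on an open half-line and $-\infty$ elsewhere, hence concave. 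Then $\rho^{U,f,B}(X)=\inf\{m: f(X,m)>0 \text{ a.s.}\}$, so $\rho^{U,f,B}(1/n)=-1-1/n$ while $\rho^{U,f,B}(0)=0$, and lower semi-continuity fails. An extra hypothesis is therefore needed, e.g.\ that $U(f(\cdot,m))$ is real-valued (or at least upper semicontinuous in $y$) for every $m$. Under it your argument is complete, including your Fatou-property treatment of $p=\infty$. The paper's proof rests on the same unstated assumption when it infers continuity from concavity.
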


\noindent
\begin{proof}
\textit{Lower semi-continuity of $\rho^{U,f,B}$.} Since $(U \circ f) (y,m)$ is concave in $y$ (hence also continuous in $y$), the acceptance sets $\mathcal{A}^{U,f,B;m}$ of $\rho^{U,f,B}$ are closed for any $m \in \mathbb{R}$. Lower semi-continuity follows from \cite[Rem.~6]{drapeau-kupper}.\smallskip

\noindent \textit{Dual representation of $\rho^{U,f,B}$.} By lower semi-continuity of $\rho^{U,f,B}$, \cite[Thm.~3]{drapeau-kupper} establishes that \eqref{eq: dual repr gener short} holds with $R(x,Q)$ being the left inverse of $c_{min}(m,Q)$, i.e.
\begin{equation*}
R(x,Q)= \inf\{ m \in \mathbb{R}:  x \leq c_{min}(m,Q)\}.  
\end{equation*}
Formulation \eqref{eq: R-dual repr gener short} follows combining \eqref{eq: alpha-min-qco} and \eqref{eq: accept-generalized}. 
\end{proof}

\begin{remark} \label{rem: bar-rho-m}
Starting from a general quasi-convex risk measure $\rho$ and its family of acceptance sets $(\mathcal{A}^m)_{m \in \mathbb{R}}$, see \eqref{eq: dk-acceptance} and \eqref{eq: am}, it is always possible to associate a family $(\bar{\rho}_m)_{m \in \mathbb{R}}$ of cash additive convex risk measures. Indeed, we can define $\bar{\rho}_m$ as the risk measure induced by the acceptance set (at level $0$) 
\begin{equation} \label{eq: accept-rho-bar-m}
\bar{\mathcal{A}}^0_m\triangleq \mathcal{A}^m  
\end{equation}
as
\begin{equation} \label{eq: rho-bar-m}
 \bar{\rho}_m(X)= \inf\{ k \in \mathbb{R}: k+X \in \mathcal{A}^m\}, \quad X \in L^p(\mathcal{F}_T). 
\end{equation}
This measure is convex and cash additive (see \cite{follmer-schied-book}).
The family $(\bar{\rho}_m)_{m \in \mathbb{R}}$ is in a one-to-one correspondence with $(\mathcal{A}^m)_{m \in \mathbb{R}}$ and hence with $\rho$.

From \eqref{eq: accept-rho-bar-m}, we note that the minimal penalty $c_{min}^{\bar{\rho}_m}$ of $\bar{\rho}_m$ (see \cite{follmer-schied-book}) coincides with $ c_{min}(m,\cdot)$ in \eqref{eq: alpha-min-qco}. Indeed, 
\begin{equation*} 
c_{min}^{\bar{\rho}_m} (Q) \triangleq \sup_{Y \in \bar{\mathcal{A}}^0_m} E_Q[-Y]= c_{min}(m,Q), \quad  Q \in \mathcal{Q}, m \in \mathbb{R}.
\end{equation*}
\end{remark}

From the previous remark, we can deduce that a generalized shortfall risk measure can be associated one-to-one to a family of cash additive convex risk measures. That is,
\begin{equation*}
\rho^{U,f,B} \quad \longleftrightarrow \quad (\mathcal{A}^{U,f,B;m})_{m \in \mathbb{R}} \quad \longleftrightarrow \quad (\bar{\rho}_m)_{m \in \mathbb{R}}.
\end{equation*}

\begin{proposition}
If $\rho$ is a quasi-convex and cash subadditive risk measure, then the family $(\bar{\rho}_m)_{m \in \mathbb{R}}$ is decreasing in $m$ and satisfies 
\begin{equation*} 
\bar{\rho}_{m+\Delta}(X) \leq \bar{\rho}_m(X) -\Delta \quad \mbox{ for any } m \in \mathbb{R}, \Delta >0, X \in L^p(\mathcal{F}_T).
\end{equation*}
Furthermore, 
\begin{equation*} 
c_{min}(m,Q) \leq c_{min}(m+h,Q)-h, \quad \mbox{ for any } Q \in \mathcal{Q}, m \in \mathbb{R}, h>0.
\end{equation*}
\end{proposition}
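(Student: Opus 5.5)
The plan is to derive everything from the acceptance-set description of cash subadditivity, \cite[Prop.~3]{drapeau-kupper}, already used in the proof of Proposition~\ref{prop: properties generalized short}, 2c). For a quasi-convex $\rho$ with family $(\mathcal{A}^m)_m$ as in \eqref{eq: am}, cash subadditivity is equivalent to
\[
\mathcal{A}^{k} \subseteq \mathcal{A}^{k+\Delta}+\Delta \quad \text{for all } k\in\mathbb{R},\ \Delta>0.
\]
This can also be checked directly. If $\rho(Y)\le k$, then $\rho(Y-\Delta)\le \rho(Y)+\Delta\le k+\Delta$, where the first inequality is cash subadditivity written for $Y-\Delta$ and the positive shift $\Delta$. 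Hence $Y-\Delta\in\mathcal{A}^{k+\Delta}$.

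Step one is monotonicity in $m$. The family $(\mathcal{A}^m)_m$ is increasing in $m$, because $\rho(X)\le m\le m'$ gives $\mathcal{A}^m\subseteq\mathcal{A}^{m'}$. Therefore, by \eqref{eq: accept-rho-bar-m}--\eqref{eq: rho-bar-m}, the set $\{k: k+X\in\mathcal{A}^{m}\}$ is contained in $\{k: k+X\in\mathcal{A}^{m'}\}$. Taking infima gives $\bar\rho_{m'}(X)\le\bar\rho_m(X)$, so the family $(\bar\rho_m)_m$ is decreasing.

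Step two is the sharper inequality. Take $k$ admissible for $\bar\rho_m(X)$, that is, $k+X\in\mathcal{A}^m$. The inclusion above gives $k+X-\Delta\in\mathcal{A}^{m+\Delta}$, so $k-\Delta$ is admissible for $\bar\rho_{m+\Delta}(X)$. Taking the infimum over $k$ yields $\bar\rho_{m+\Delta}(X)\le\bar\rho_m(X)-\Delta$. If the admissible set for $\bar\rho_m(X)$ is empty, then $\bar\rho_m(X)=+\infty$ and the inequality is trivial.

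Step three is the penalty inequality. By Remark~\ref{rem: bar-rho-m}, $c_{min}(m,Q)=\sup_{Y\in\mathcal{A}^m}E_Q[-Y]$. For $Y\in\mathcal{A}^m$ we have $Y-h\in\mathcal{A}^{m+h}$, so
\[
E_Q[-Y]=E_Q[-(Y-h)]-h\le c_{min}(m+h,Q)-h.
\]
Taking the supremum over $Y\in\mathcal{A}^m$ gives $c_{min}(m,Q)\le c_{min}(m+h,Q)-h$.

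The only step needing care is the direction of the shift in the acceptance-set inclusion, together with the conventions for empty sets and infinite values. Everything else is a direct infimum or supremum argument. Quasi-convexity is not actually needed beyond ensuring, via Remark~\ref{rem: bar-rho-m}, that $\bar\rho_m$ and $c_{min}$ are well defined as in the paper.
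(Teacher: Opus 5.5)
Your proposal is correct and takes essentially the same route as the paper: both arguments rest on the inclusion $\mathcal{A}^m \subseteq \mathcal{A}^{m+h}+h$, then shift the variable in the infimum defining $\bar\rho_m$ and in the supremum defining $c_{min}$. The differences are minor: you verify the inclusion directly from cash subadditivity instead of citing \cite[Prop.~3]{drapeau-kupper}, and you get monotonicity in $m$ from the nestedness of the acceptance sets, whereas the paper reads it off the sharper inequality $\bar\rho_m(X)\ge\bar\rho_{m+h}(X)+h$.
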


\noindent
\begin{proof}   
By \cite[Prop.~3]{drapeau-kupper}, if $\rho$ is cash subadditive, then its family of acceptance sets satisfies
\begin{equation} \label{eq: A-csa}
\mathcal{A}^{m+h} +h \supseteq  \mathcal{A}^m, \quad \mbox{ for any }  m \in \mathbb{R}, h >0.
\end{equation}
Consequently, for any $X \in L^p(\mathcal{F}_T)$, $m \in \mathbb{R}$, and $h>0$,
\begin{align*}
\bar{\rho}_m(X)&= \inf\{ k \in \mathbb{R}: k+X \in \mathcal{A}^m\} \\
&\geq \inf\{ k \in \mathbb{R}: k+X-h \in \mathcal{A}^{m+h}\}\\
&= \inf\{ k' \in \mathbb{R}: k'+X \in \mathcal{A}^{m+h}\}+h\\
&= \bar{\rho}_{m+h}(X)+h \geq \bar{\rho}_{m+h}(X).
\end{align*}
Furthermore, by \eqref{eq: A-csa} it holds that 
\begin{align*}
c_{min}(m,Q) &= \sup_{Y \in \mathcal{A}^m} E_Q[-Y] \\
&\leq \sup_{Y \in  \mathcal{A}^{m+h}+h} E_Q[-Y] \\
&= \sup_{Y' \in  \mathcal{A}^{m+h}} E_Q[-Y']-h \\
&=c_{min}(m+h,Q)-h,
\end{align*}
for any $Q \in \mathcal{Q}$, $m \in \mathbb{R}$, $h>0$.
\end{proof}

\subsection{H-generalized shortfall}

Starting from the definition of generalized shortfall risk measures given in \eqref{eq: generalized shortfall}, we now extend such a definition to the dynamic setting as below.

\begin{definition} 
Given a family $(f_u)_{u \in [0,T]}$ of non-constant functions $f_u:\mathbb{R}^2 \to [-\infty, +\infty)$, a family $(U_u)_{u \in [0,T]}$ of concave non-decreasing (non-trivial) utility functions $U_u: [-\infty, +\infty)\longrightarrow \mathbb{[-\infty, +\infty)}$, and a family of deterministic targets $(B_{tu})_{0\leq t \leq u \leq T}$, we define the \emph{h-generalized shortfall risk measure} $(\rho^{U,f,B}_{tu})_{t,u}$ as
\begin{equation} \label{eq: def-dynamic-generalized-shortfall}
\rho^{U,f,B}_{tu}(X) = \essinf\{m_t \in L^p(\mathcal{F}_t): E[U_u(f_u(X,m_t))\vert \mathcal{F}_t]\geq B_{tu} \}, \quad X \in L^{p}(\mathcal{F}_u).
\end{equation}
\end{definition}
The corresponding family of acceptance sets is then given by 
\begin{equation}
\label{eq: acceptance generalized shortfall dynamic}    
\mathcal{A}_{tu}^{U,f,B;m_t} = \{Y \in L^{p}(\mathcal{F}_u): E[U_u(f_u(Y,m_t)) |\mathcal{F}_t]\geq B_{tu} \}.
\end{equation}
\medskip

\noindent
The following result can be easily verified similarly to Proposition \ref{prop: properties generalized short}.

\begin{proposition} \hspace{10cm}\\
\noindent a) If $(U_u \circ f_u) (y,m)$ is non-decreasing in $y$ and $m$, and concave in $y$, then $(\rho^{U,f,B}_{tu})_{t,u}$ is a quasi-convex fully-dynamic risk measure.

\noindent b) If $(U_u \circ f_u) (y,m)$ is also concave in $(y,m)$, then $(\rho^{U,f,B}_{tu})_{t,u}$ is also convex. 

\noindent c) Cash subadditivity is verified when $f_u$ satisfies the further assumption in Proposition \ref{prop: properties generalized short}, item 2c).
\end{proposition}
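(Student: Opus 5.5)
The plan is to argue directly with the conditional acceptance sets \eqref{eq: acceptance generalized shortfall dynamic}, following the proof of Proposition~\ref{prop: properties generalized short}, but working pointwise in $\omega$ instead of citing \cite{drapeau-kupper}, since $m_t$ is now $\mathcal{F}_t$-measurable. Fix $t\le u$ and $X\in L^p(\mathcal{F}_u)$. Write
$$M_t(X)\triangleq\{m_t\in L^p(\mathcal{F}_t): E[U_u(f_u(X,m_t))\vert\mathcal{F}_t]\geq B_{tu}\}.$$

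\textbf{Step 1 (structure of $M_t(X)$).} Since $U_u\circ f_u$ is non-decreasing in $m$, $M_t(X)$ is upward closed: if $m_t\in M_t(X)$ and $m'_t\geq m_t$, then $m'_t\in M_t(X)$. It is also closed under $\mathcal{F}_t$-pasting. Indeed, for $A\in\mathcal{F}_t$ and $m^1,m^2\in M_t(X)$ we have
$$E[U_u(f_u(X,m^11_A+m^21_{A^c}))\vert\mathcal{F}_t]=1_AE[U_u(f_u(X,m^1))\vert\mathcal{F}_t]+1_{A^c}E[U_u(f_u(X,m^2))\vert\mathcal{F}_t],$$
which is $\geq B_{tu}$. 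Taking $A=\{m^1\le m^2\}$ shows that $M_t(X)$ is directed downward. Hence, whenever $\rho^{U,f,B}_{tu}(X)<+\infty$, there is a sequence $m^n\in M_t(X)$ decreasing a.s. to $\rho^{U,f,B}_{tu}(X)$. I expect this step to be the only delicate point: the essential infimum must be approximated by genuinely admissible elements, and all subsequent steps reduce to it. The empty-set convention $\essinf\emptyset=+\infty$ makes the inequalities below trivial whenever a set is empty.

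\textbf{Step 2 (item a).} For monotonicity, take $X\le Y$. For $m_t\in M_t(X)$, monotonicity of $U_u\circ f_u$ in $y$ gives $m_t\in M_t(Y)$. Thus $M_t(X)\subseteq M_t(Y)$, and so $\rho^{U,f,B}_{tu}(Y)\le\rho^{U,f,B}_{tu}(X)$.

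For quasi-convexity, take $\lambda\in[0,1]$ and sequences $m^n\in M_t(X)$, $k^n\in M_t(Y)$ decreasing to $\rho^{U,f,B}_{tu}(X)$ and $\rho^{U,f,B}_{tu}(Y)$ respectively, as in Step 1. By upward closedness, $K^n\triangleq m^n\vee k^n$ lies in both sets. Concavity in $y$ gives
$$E[U_u(f_u(\lambda X+(1-\lambda)Y,K^n))\vert\mathcal{F}_t]\geq\lambda E[U_u(f_u(X,K^n))\vert\mathcal{F}_t]+(1-\lambda)E[U_u(f_u(Y,K^n))\vert\mathcal{F}_t]\geq B_{tu}.$$
Therefore $\rho^{U,f,B}_{tu}(\lambda X+(1-\lambda)Y)\le K^n$. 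Letting $n\to\infty$ yields the bound $\max\{\rho^{U,f,B}_{tu}(X),\rho^{U,f,B}_{tu}(Y)\}$.

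\textbf{Step 3 (item b).} Use the same sequences as in Step 2, but take $\lambda m^n+(1-\lambda)k^n$ as the candidate for $\lambda X+(1-\lambda)Y$. Joint concavity of $U_u\circ f_u$ in $(y,m)$ gives the analogous chain of inequalities, so this candidate is admissible. Hence $\rho^{U,f,B}_{tu}(\lambda X+(1-\lambda)Y)\le\lambda m^n+(1-\lambda)k^n$, and passing to the limit gives convexity.

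\textbf{Step 4 (item c).} The goal is $\rho^{U,f,B}_{tu}(X+m)\geq\rho^{U,f,B}_{tu}(X)-m$ for $\mathcal{F}_t$-measurable $m\ge0$. Let $k_t\in M_t(X+m)$. Apply the 2c) hypothesis $f_u(y,k)\le f_u(y-m,k+m)$ pointwise in $\omega$, with $y=X+m$. On $\{m=0\}$ this holds trivially, and on $\{m>0\}$ it holds by assumption for each fixed value. This gives
$$f_u(X+m,k_t)\le f_u(X,k_t+m).$$
Since $U_u$ is non-decreasing, it follows that
$$E[U_u(f_u(X,k_t+m))\vert\mathcal{F}_t]\geq E[U_u(f_u(X+m,k_t))\vert\mathcal{F}_t]\geq B_{tu},$$
so $k_t+m\in M_t(X)$. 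Hence $\rho^{U,f,B}_{tu}(X)\le k_t+m$. Taking the essential infimum over $k_t\in M_t(X+m)$ gives $\rho^{U,f,B}_{tu}(X)\le\rho^{U,f,B}_{tu}(X+m)+m$, which is cash subadditivity in the standard direction.
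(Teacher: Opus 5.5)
Your proof is correct, but it takes a different route from the paper. The paper proves this proposition in one line, saying it is verified ``similarly to Proposition~\ref{prop: properties generalized short}''. That means checking properties of the acceptance family $(\mathcal{A}^{U,f,B;m_t}_{tu})$:
\begin{itemize}
\item each set is monotone and convex, and the family is non-decreasing in the level;
\item $\alpha\mathcal{A}^{m_1}+(1-\alpha)\mathcal{A}^{m_2}\subseteq\mathcal{A}^{\alpha m_1+(1-\alpha)m_2}$;
\item $\mathcal{A}^{k}\subseteq\mathcal{A}^{k+m}+m$.
\end{itemize}
It then invokes the Drapeau--Kupper correspondence (their Thm.~1, Prop.~1(i), Prop.~3) between such families and quasi-convex, convex and cash subadditive risk measures.

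You instead work directly with the essential infimum. Pasting stability of $M_t(X)$ gives downward directedness, hence admissible sequences decreasing to $\rho^{U,f,B}_{tu}(X)$. Each property then reduces to a pointwise-in-$\omega$ inequality for $U_u\circ f_u$, combined with monotonicity and positive linearity of $E[\cdot\vert\mathcal{F}_t]$.

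Your route has three advantages.
\begin{itemize}
\item It is self-contained and genuinely conditional. The Drapeau--Kupper results concern static risk measures with real levels $m$, so the paper's ``similarly'' tacitly needs a conditional version of that correspondence with $\mathcal{F}_t$-measurable levels $m_t$. Your directedness argument is exactly the ingredient that supplies this.
\item It would also give quasi-convexity and convexity with $\mathcal{F}_t$-measurable weights $\lambda$ at no extra cost.
\item Your proof of c) shows that only monotonicity of $U_u$ and the condition on $f_u$ are needed, not the hypotheses of a).
\end{itemize}

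The paper's route gives characterizations rather than just sufficient conditions: the acceptance-set conditions are equivalences. They also feed directly into the dual representation (through $c_{min}$ and $R(x,Q)$) and into the inclusion $\mathcal{A}^{m+h}+h\supseteq\mathcal{A}^{m}$, which is used later for the family $(\bar{\rho}_m)_m$.
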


\begin{remark}
    The dual representation of Proposition \ref{prop: dual repr- gener shortfall} can be easily extended to the fully-dynamic setting with standard arguments.
\end{remark}

In \cite{DNRG1} we have discussed and characterized horizon longevity for fully-dynamic convex risk measures \cite[Prop.~2.8]{DNRG1}. Here below we prove that, at the level of acceptance sets, a similar result holds for the quasi-convex case and we provide sufficient conditions for h-longevity to hold.

We recall that the acceptance set at level $m_t \in L^p(\mathcal{F}_t)$ of a quasi-convex $\rho_{tu}$ is 
\begin{equation}
\label{eq: accept quasi-convex dyn}
\mathcal{A}^{\rho, m_t}_{tu}\triangleq \{Y \in L^{p}(\mathcal{F}_u): \rho_{tu}(Y) \leq m_t\}.
\end{equation}

\begin{proposition}[horizon longevity]
Let $(\rho_{tu})_{t,u}$ be a fully-dynamic quasi-convex risk measure.

\noindent a) 
Horizon longevity is equivalent to $\mathcal{A}^{\rho, m_t}_{tv} \cap L^{p}(\mathcal{F}_u) \subseteq \mathcal{A}^{\rho, m_t}_{tu}$ for any $t\leq u \leq v$, $m_t \in L^{p}(\mathcal{F}_t)$.\smallskip

\noindent b) In particular, if $(\rho_{tu})_{t,u}$ is a generalized shortfall risk measure \eqref{eq: def-dynamic-generalized-shortfall} and $(U_u \circ f_u - B_{tu})$ is non-increasing in $u$, then horizon longevity holds. 
\end{proposition}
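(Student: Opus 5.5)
For part a), I will argue directly from the definition of the level acceptance sets \eqref{eq: accept quasi-convex dyn}. The definition of h-longevity is $\rho_{tv}(X)\geq \rho_{tu}(X)$ for all $t\leq u\leq v$ and $X\in L^p(\mathcal{F}_u)$, as in \eqref{eq: time-value}.

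\emph{Necessity.} Assume h-longevity and take $Y\in \mathcal{A}^{\rho,m_t}_{tv}\cap L^p(\mathcal{F}_u)$. Then $\rho_{tu}(Y)\leq \rho_{tv}(Y)\leq m_t$, so $Y\in\mathcal{A}^{\rho,m_t}_{tu}$.

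\emph{Sufficiency.} Assume the inclusion holds and fix $X\in L^p(\mathcal{F}_u)$. If $\rho_{tv}(X)$ belongs to $L^p(\mathcal{F}_t)$, choose $m_t=\rho_{tv}(X)$. Then $X\in\mathcal{A}^{\rho,m_t}_{tv}\cap L^p(\mathcal{F}_u)\subseteq \mathcal{A}^{\rho,m_t}_{tu}$, which gives $\rho_{tu}(X)\leq\rho_{tv}(X)$.

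The main technical obstacle is that $\rho_{tv}(X)$ may be infinite, or not in $L^p$, on some event. To handle this I will use a localization. On $\{\rho_{tv}(X)=+\infty\}$ there is nothing to prove. Otherwise, for $n\in\mathbb{N}$ and any $A\in\mathcal{F}_t$, I take the truncated levels $m^n_t=\rho_{tv}(X)1_{\{|\rho_{tv}(X)|\leq n\}}+n1_{\{|\rho_{tv}(X)|>n\}}$ and use the local property of conditional risk measures, $1_A\rho_{tu}(X)=1_A\rho_{tu}(1_AX)$ (and similarly for $\rho_{tv}$), applied on $A_n=\{|\rho_{tv}(X)|\leq n\}$. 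This yields $\rho_{tu}(X)\leq \rho_{tv}(X)$ on each $A_n$. Letting $n\to\infty$ gives the inequality on $\{\rho_{tv}(X)>-\infty\}$. On $\{\rho_{tv}(X)=-\infty\}$, the level $-n$ yields $\rho_{tu}(X)\leq -n$ for every $n$, hence $\rho_{tu}(X)=-\infty$ there as well. I will note that locality holds for our shortfall-type and BSDE-type measures; for a general quasi-convex $\rho$ I will take it as part of the standing assumptions on fully-dynamic risk measures.

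For part b), by a) it suffices to show $\mathcal{A}^{\rho,m_t}_{tv}\cap L^p(\mathcal{F}_u)\subseteq\mathcal{A}^{\rho,m_t}_{tu}$. For the h-generalized shortfall, I will first relate $\mathcal{A}^{\rho,m_t}_{tu}$ to the sets $\mathcal{A}^{U,f,B;m_t}_{tu}$ in \eqref{eq: acceptance generalized shortfall dynamic}. It is simpler, however, to argue directly at the level of the defining sets of the essential infimum in \eqref{eq: def-dynamic-generalized-shortfall}. Let $X\in L^p(\mathcal{F}_u)$ and let $m_t$ be admissible for $\rho_{tv}(X)$, that is, $E[U_v(f_v(X,m_t))|\mathcal{F}_t]\geq B_{tv}$. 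Since $U_u\circ f_u-B_{tu}\geq U_v\circ f_v-B_{tv}$ pointwise, it follows that
$$E[U_u(f_u(X,m_t))|\mathcal{F}_t]-B_{tu}\geq E[U_v(f_v(X,m_t))|\mathcal{F}_t]-B_{tv}\geq 0,$$
using that $B$ is deterministic. Hence $m_t$ is also admissible for $\rho_{tu}(X)$. The admissible set for $\rho_{tv}(X)$ is therefore contained in that for $\rho_{tu}(X)$, and taking essential infima gives $\rho_{tu}(X)\leq\rho_{tv}(X)$. This proves h-longevity directly, mirroring Proposition~\ref{prop: longevity-shortfall-u}, and consequently also the acceptance-set inclusion of a).
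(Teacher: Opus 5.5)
Your part a) is the paper's argument. The paper writes $\rho_{tv}(X)=\essinf\{m_t: X\in\mathcal{A}^{\rho,m_t}_{tv}\}\geq\essinf\{m_t: X\in\mathcal{A}^{\rho,m_t}_{tu}\}=\rho_{tu}(X)$. Its first equality, like your choice $m_t=\rho_{tv}(X)$, silently uses $\rho_{tv}(X)\in L^p(\mathcal{F}_t)$, which is the standard codomain of a fully-dynamic risk measure. In that setting your first sufficiency paragraph is already a complete proof, and you should drop the localization, because as written it does not work:
\begin{itemize}
\item The natural way to use locality is to apply the inclusion to $1_{A_n}X$. But by your own locality property, $\rho_{tv}(1_{A_n}X)=\rho_{tv}(0)$ on $A_n^c$, and nothing ensures $\rho_{tv}(0)\leq n$ there. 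So $1_{A_n}X$ need not lie in $\mathcal{A}^{\rho,m^n_t}_{tv}$. It could be repaired under the extra hypothesis $\rho_{tv}(0)\in L^p(\mathcal{F}_t)$ by using the level $1_{A_n}\rho_{tv}(X)+1_{A_n^c}\rho_{tv}(0)$.
\item More fundamentally, for extended-valued $\rho$ the equivalence in a) can simply fail. Suppose $\rho_{tv}(Y)=+\infty$ on some $C\in\mathcal{F}_t$ with $P(C)>0$, for every $Y$. Then every $\mathcal{A}^{\rho,m_t}_{tv}$ with $m_t\in L^p(\mathcal{F}_t)$ is empty, the inclusion holds vacuously, and it says nothing about $\rho_{tu}$ versus $\rho_{tv}$ on $C^c$. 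Finiteness is therefore a hypothesis, not something localization can supply.
\end{itemize}

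For b) your route genuinely differs from the paper's. The paper deduces b) from a). That implicitly requires identifying the level sets $\{Y:\rho_{tu}(Y)\leq m_t\}$ with the shortfall sets $\mathcal{A}^{U,f,B;m_t}_{tu}$ in \eqref{eq: acceptance generalized shortfall dynamic}, and then using monotonicity in $u$ as in Proposition~\ref{prop: longevity-shortfall-u}. You bypass a): you show that every $m_t$ admissible for $\rho_{tv}(X)$ is admissible for $\rho_{tu}(X)$, and then compare essential infima. In effect you run the paper's sufficiency argument on the defining sets rather than on the level sets. This is cleaner and slightly more general. The identification of the two families of sets needs the admissible sets to be upward closed and the essential infimum to be attained, i.e.\ monotonicity and some continuity of $U_u\circ f_u$ in $m$. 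Your argument needs none of this, nor quasi-convexity, nor finiteness of $\rho$, since inclusion of index sets gives the inequality of essential infima even with the convention $\essinf\emptyset=+\infty$. One minor point: determinism of $B$ is not actually used, since an $\mathcal{F}_t$-measurable $B_{tu}$ would come out of the conditional expectation just as well.
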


\begin{proof}
a) For any $X \in \mathcal{A}^{\rho, m_t}_{tv} \cap L^{p}(\mathcal{F}_u)$, we have $\rho_{tv}(X)\leq m_t$ and, by longevity, also  $\rho_{tu}(X)\leq m_t$. Hence $\mathcal{A}^{\rho, m_t}_{tv} \cap L^{p}(\mathcal{F}_u) \subseteq \mathcal{A}_{tu}^{\rho, m_t}$.
Conversely, for any $X \in L^{p}(\mathcal{F}_u)$ and $t\leq u\leq v$, we have
\begin{eqnarray*}
\rho_{tv}(X)&=& \essinf \{m_t \in L^{p}(\mathcal{F}_t)| \; X \in \mathcal{A}^{ \rho, m_t}_{tv} \} \\
&\geq & \essinf \{m_t \in L^{p}(\mathcal{F}_t)| \; X \in \mathcal{A}^{\rho, m_t}_{tu} \}=\rho_{tu}(X),
\end{eqnarray*}
where the sets are given in \eqref{eq: accept quasi-convex dyn}.

\noindent Item b) is a straightforward consequence of item a) with \eqref{eq: acceptance generalized shortfall dynamic} and Proposition \ref{prop: longevity-shortfall-u}.
\end{proof}

%%%%%%%
\section{Certainty equivalent as generalized shortfall}
\label{Sec CE}

Going back to the classical entropic risk measure \eqref{eq: entropica-expl}, we know that it can be generated by a BSDE approach, a shortfall approach, and also in terms of certainty equivalent approach.
Typically, the two last are strongly related. In this section, we study the relationship between a certainty equivalent risk measure  and our newly introduced generalized shortfall.

We recall that the certainty equivalent risk measure is defined as the solution (if it exists and is unique) of 
\begin{equation} \label{eq: certainty}
\tilde{U} (-\rho^{\tilde U}(X))=E[\tilde{U} (X)], \quad  X \in L^{p}(\mathcal{F}_T),
\end{equation}
for a given concave non-decreasing (non-trivial) function $\tilde{U}: [-\infty, +\infty)\rightarrow [-\infty, +\infty)$.
See \cite{follmer-schied-book}. Hence, when $\tilde{U}$ is strictly increasing,
\begin{equation*} %\label{eq: certainty-inverse}
\rho^{\tilde{U}}(X) \triangleq - \tilde{U}^{-1}\left( E[\tilde{U} (X)] \right), \quad  X \in L^{p}(\mathcal{F}_T).
\end{equation*}
The risk measure $\rho^{\tilde U}$ is quasi-convex, see \cite[Lemma~5.2]{cerreia} with $l(x) = - \tilde U(-x)$.
In terms of acceptance sets, $\rho^{\tilde{U}}$ can be rewritten as \eqref{eq: dk-acceptance} 
with
\begin{eqnarray*}
\mathcal{A}^{\rho^{\tilde{U}}, m} &\triangleq& \{Y \in L^{p}(\mathcal{F}_T): \rho^{\tilde{U}}(Y) \leq m\} \notag\\
&=&\{Y \in L^{p}(\mathcal{F}_T):  \tilde{U}(-\rho^{\tilde{U}}(Y) )  \geq \tilde{U}(-m)\} \notag\\
&=&\{Y \in L^{p}(\mathcal{F}_T):  E[\tilde{U} (Y)]  \geq \tilde{U}(-m)\}, 
\end{eqnarray*}
for any $m \in \mathbb{R}$. Here the last equality is due to \eqref{eq: certainty}.

\begin{proposition}  \label{prop: certainty-shortfall}
The certainty equivalent risk measure \eqref{eq: certainty} with $\tilde{U}$ is equal to the generalized shortfall \eqref{eq: generalized shortfall} with $(U,f,B)$ if an only if
\begin{equation} \label{eq: U-tilde-f}
U(f(y,m))-B= \tilde{U} (y)- \tilde{U}(-m), \quad  y,m \in \mathbb{R}.
\end{equation}
Furthermore, when $U$ is invertible, this reduces to
\begin{equation}\label{eq: condition-f-cce}
f(y,m)= U^{-1}\left(\tilde{U} (y)- \tilde{U}(-m)+B \right), \quad  y,m \in \mathbb{R}. 
\end{equation}
\end{proposition}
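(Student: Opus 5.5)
The plan is to compare the two risk measures through their families of acceptance sets, in the sense of \eqref{eq: dk-acceptance}. The certainty equivalent $\rho^{\tilde U}$ has acceptance sets $\mathcal{A}^{\rho^{\tilde U},m}=\{Y: E[\tilde U(Y)]\geq \tilde U(-m)\}$, as computed just before the statement. The generalized shortfall has acceptance sets $\mathcal{A}^{U,f,B;m}=\{Y: E[U(f(Y,m))]\geq B\}$, see \eqref{eq: accept-generalized}. Both measures are of the form $\inf\{m: X\in\mathcal{A}^m\}$.

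\emph{Sufficiency.} Assume \eqref{eq: U-tilde-f} holds. For every $Y\in L^p(\mathcal{F}_T)$ and $m\in\mathbb{R}$, taking expectations gives
\[
E[U(f(Y,m))]-B=E[\tilde U(Y)]-\tilde U(-m).
\]
Since $\tilde U(-m)$ is deterministic, no integrability issue arises, and the two defining inequalities are equivalent. Hence $\mathcal{A}^{U,f,B;m}=\mathcal{A}^{\rho^{\tilde U},m}$ for every $m$. Taking the infimum over $m$ yields $\rho^{U,f,B}=\rho^{\tilde U}$.

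\emph{Necessity.} Assume $\rho^{U,f,B}(X)=\rho^{\tilde U}(X)$ for all $X$. I will test the identity on deterministic positions $X\equiv y$. For a constant, $\rho^{\tilde U}(y)=-y$, so the set $\{m: \tilde U(y)\geq\tilde U(-m)\}$ is the half-line $[-y,\infty)$ (using strict monotonicity of $\tilde U$ where needed). The generalized shortfall gives $\inf\{m: U(f(y,m))\geq B\}=-y$. This only says that the zero sets of $U(f(y,m))-B$ and of $\tilde U(y)-\tilde U(-m)$ coincide, not that the two functions coincide.

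To obtain full equality of the functions, I will use the one-to-one correspondence between quasi-convex risk measures and their canonical families $\mathcal{A}^m=\{\rho\leq m\}$, see \eqref{eq: am}. If the family $\mathcal{A}^{U,f,B;m}$ is the canonical one of the common measure (this requires monotonicity and right-continuity in $m$), then the sets coincide for all $m$. Testing on two-point random variables $Y=y_1 1_A+y_2 1_{A^c}$ with $P(A)=\lambda$ then turns the equivalence
\[
\lambda U(f(y_1,m))+(1-\lambda)U(f(y_2,m))\geq B \iff \lambda \tilde U(y_1)+(1-\lambda)\tilde U(y_2)\geq \tilde U(-m)
\]
into the statement that $y\mapsto U(f(y,m))-B$ and $y\mapsto \tilde U(y)-\tilde U(-m)$ are two functions with the same ``nonnegative mean'' cone under all two-point distributions. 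A standard argument then shows they are positive multiples of each other, with a factor $c(m)>0$.

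This is the main obstacle: a scalar factor $c(m)$ cannot be excluded purely from the definitions. The equivalence in the statement should therefore be read up to such a normalization, or the factor can be absorbed into $U\circ f$, since $U(f(y,m))-B$ enters only through its sign. I would present necessity in this normalized form, noting that $c(m)=1$ can always be chosen for the representative, and that this choice gives exactly \eqref{eq: U-tilde-f}.

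\emph{Invertible case.} When $U$ is invertible, apply $U^{-1}$ to $U(f(y,m))=\tilde U(y)-\tilde U(-m)+B$ to obtain \eqref{eq: condition-f-cce}. This requires the right-hand side to lie in the range of $U$, which I would state as implicit in the hypothesis.
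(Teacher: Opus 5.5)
\textbf{Verdict.} Your sufficiency argument is exactly the paper's. For necessity you are right not to claim the literal statement, because the ``only if'' direction is false, and the paper's proof of it contains invalid steps.

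\textbf{Sufficiency.} Under \eqref{eq: U-tilde-f}, $\mathcal{A}^{U,f,B;m}=\mathcal{A}^{\rho^{\tilde U},m}$ for every $m$. Taking infima gives $\rho^{U,f,B}=\rho^{\tilde U}$.

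\textbf{Counterexample to necessity.} Take $U(x)=x$, $B=0$ and $f(y,m)=2\bigl(\tilde U(y)-\tilde U(-m)\bigr)$. Then $\mathcal{A}^{U,f,B;m}=\{Y: E[\tilde U(Y)]\geq \tilde U(-m)\}=\mathcal{A}^{\rho^{\tilde U},m}$ for every $m$, so the two measures coincide. Yet $U(f(y,m))-B=2\bigl(\tilde U(y)-\tilde U(-m)\bigr)$ violates \eqref{eq: U-tilde-f} at every $y\neq -m$ when $\tilde U$ is strictly increasing.

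\textbf{Where the paper's argument breaks.} It fails at two points.
First, it asserts that $\rho^{U,f,B}=\rho^{\tilde U}$ is equivalent to equality of the acceptance sets for all $m$. The forward implication needs $(\mathcal{A}^{U,f,B;m})_m$ to be the canonical family $\{\rho\leq m\}$, which is exactly your monotonicity and right-continuity caveat.
Second, from $\{\tilde g\geq 0\}=\{g\geq 0\}$ it concludes $\tilde g=\min(\tilde g,g)=g$. This is a non sequitur: equal zero superlevel sets say nothing about values, as the factor $2$ shows.

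\textbf{What you should state instead.} Do not present necessity as a ``normalized reading'' of the same claim; state a corrected result with explicit hypotheses. Assume $(\Omega,\mathcal{F}_T,P)$ is atomless (true in the Brownian setting), $\tilde U$ is real-valued and strictly increasing, and $m\mapsto\mathcal{A}^{U,f,B;m}$ is nondecreasing and right-continuous, so that $\mathcal{A}^{U,f,B;m}=\{\rho^{U,f,B}\leq m\}$. Then $\rho^{U,f,B}=\rho^{\tilde U}$ if and only if
\[
U(f(y,m))-B=c(m)\bigl(\tilde U(y)-\tilde U(-m)\bigr),\qquad y,m\in\mathbb{R},
\]
for some $c(m)>0$.

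\textbf{Your two-point argument proves this.} Write out the ``standard argument'' as follows. Set $\phi=U(f(\cdot,m))-B$ and $\psi=\tilde U(\cdot)-\tilde U(-m)$.
Constant positions give $\{\phi\geq 0\}=\{\psi\geq 0\}$.
For $y_1>-m>y_2$, the sets of weights $\lambda$ giving a nonnegative mixture are $[\lambda_\phi,1]$ and $[\lambda_\psi,1]$, with $\lambda_\psi<1$. Equality of these sets forces $\phi(y_1)>0$, $\phi(y_2)>-\infty$, and $\phi(y_2)/\phi(y_1)=\psi(y_2)/\psi(y_1)$. Hence $\phi/\psi$ is a positive constant away from $y=-m$.
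Mixing $y=-m$ with $y_2$ then gives $\phi(-m)=0$.

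\textbf{The canonical-family hypothesis cannot be dropped.} Without it, even this weaker form fails. With $U(x)=x$ and $B=0$, let $f(y,m)=\tilde U(y)-\tilde U(-m)$ for rational $m$ and $f(y,m)=\tilde U(y)-\tilde U(-m)-1$ for irrational $m$. The irrational-level sets shrink into $\{\rho^{\tilde U}<m\}$, while rational levels still approach $\rho^{\tilde U}(X)$ from above, so $\rho^{U,f,B}=\rho^{\tilde U}$. Yet $U\circ f-B$ is not a positive multiple of $\tilde U(y)-\tilde U(-m)$ at irrational $m$.

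\textbf{Two further remarks.} Your observation that $c(m)=1$ ``can be chosen'' is only an existence statement about \emph{some} representing triple. It also needs $\tilde U(y)-\tilde U(-m)+B$ to lie in the range of $U$, as you note for the invertible case. It is not a characterization of a given $(U,f,B)$. The same correction applies to the dynamic analogue, Proposition~\ref{prop: certainty-shortfall-dyn}, which inherits this argument.
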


\vspace{2mm}\noindent
Note that, taking $U(x)=x$ for any $x \in \mathbb{R}$, \eqref{eq: condition-f-cce} becomes
\begin{equation*}
f(y,m)= \tilde{U} (y)- \tilde{U}(-m)+B, \quad \mbox{ for any } y,m \in \mathbb{R}. 
\end{equation*}

\noindent
\begin{proof}
In view of \eqref{eq: def-generalized-shortfall}-\eqref{eq: accept-generalized}, the certainty equivalent risk measure coincides with the generalized shortfall if and only if, for all $m$,
\begin{equation}\label{eq: eqv} 
\mathcal{A}^{\rho^{\tilde{U}},m}=\mathcal{A}^{U,f,B;m},
\end{equation}
namely,
\begin{equation*}
\{Y \in L^{p}(\mathcal{F}_T):  \tilde g(Y)  \geq 0 \}=\{Y \in L^{p}(\mathcal{F}_T): g(Y) \geq 0\},
\end{equation*}
where $\tilde g(Y) =E[\tilde{U} (Y)- \tilde{U}(-m)] $ and $g(Y) = E[U(f(Y,m))-B]$.\\
If \eqref{eq: U-tilde-f} holds, it is immediate to see that \eqref{eq: eqv} is true.
Conversely, 
if \eqref{eq: eqv} holds then 
$$
\mathcal{A}^{\rho^{\tilde{U}},m} \:= \:
\mathcal{A}^{\rho^{\tilde{U}},m}\cap \mathcal{A}^{U,f,B;m}
\: =\:  \mathcal{A}^{U,f,B;m}.
$$
Also 
$$
\mathcal{A}^{\rho^{\tilde{U}},m}\cap \mathcal{A}^{U,f,B;m} = \{Y \in L^{p}(\mathcal{F}_T): \min (\tilde g(Y),g(Y)) \geq 0\}.
$$
Hence, for $Y \in \mathcal{A}^{\rho^{\tilde{U}},m} \subseteq \mathcal{A}^{\rho^{\tilde{U}},m}\cap \mathcal{A}^{U,f,B;m}$,
we have that 
$$ 
\tilde g(Y) \geq 0 \quad \Longrightarrow \quad \min (\tilde g(Y),g(Y)) \geq 0,
$$ 
thus $\tilde g(Y) = \min (\tilde g(Y),g(Y))$.
Similarly, we get  $g(Y) = \min (\tilde g(Y),g(Y))$, for $Y\in \mathcal{A}^{U,f,B;m}$.
Hence $\tilde g(Y)= g(Y)$. The relationship holds for all $Y$ and $m$, thus we have \eqref{eq: U-tilde-f}.
\end{proof}

\vspace{4mm} \noindent
Due to the assumptions on $U$ and on $\tilde{U}$, it is immediate to see that
\begin{itemize}
    \item $f$ is non-decreasing in $y$ and in $m$;
    \item The concavity of $(U \circ f)$ in $y$ holds, when $\tilde{U}$ is concave;
    \item The concavity of $(U \circ f)$ in $(y,m)$ is not guaranteed in general. 
\end{itemize}

\bigskip

From a dynamic perspective, \cite{frittelli-maggis} introduced the conditional certainty equivalent risk measure. See also \cite{maggis-maran} for a further study on intertemporal preferences.
Here we provide the fully-dynamic version of the conditional certainty equivalent risk measures.

\begin{definition} 
A fully-dynamic certainty equivalent risk measure is the family $(\rho^{\tilde U_u}_{tu})_{t,u}$ with $\rho^{\tilde U_u}_{tu}$ solution (if it exists and is unique $P$-a.s.) of 
\begin{equation} \label{eq: certainty-dyn}
\tilde{U}_u (-\rho^{\tilde U_u}_{tu}(X))=E[\tilde{U}_u (X) |\mathcal{F}_t], \quad  X \in L^{p}(\mathcal{F}_u), 0 \leq t \leq u,
\end{equation}
for $(\tilde{U}_u)_{u }$ of concave non-decreasing (non-trivial) functions $\tilde{U}_u:[-\infty, +\infty)\rightarrow [-\infty, +\infty)$.
Hence,
\begin{equation} \label{eq: certainty-inverse-dyn}
\rho_{tu}^{\tilde{U}_u}(X)\triangleq - \tilde{U}_u^{-1}\left( E[\tilde{U}_u (X) | \mathcal{F}_t] \right), \quad \mbox{ for any } X \in L^{p}(\mathcal{F}_u), 0 \leq t \leq u,
\end{equation}
when $\tilde{U}_u(\cdot)$ is strictly increasing.
\end{definition}

As in Proposition \ref{prop: certainty-shortfall}, we can study the relation between conditional certainty equivalent and generalized shortfall risk measures.

\begin{proposition} \label{prop: certainty-shortfall-dyn}
The fully-dynamic certainty equivalent risk measure \eqref{eq: certainty-dyn} with $(\tilde{U}_u)_u$ is equal to the h-generalized shortfall \eqref{eq: def-dynamic-generalized-shortfall} with $(U_u,f_u,B_{tu})_{t,u}$ if an only if
\begin{equation}\label{eq: equiv B}
B_{tu}=B_{0u}, \quad   t \in [0,u],
\end{equation}
and 
\begin{equation} \label{eq: U-tilde-f-dyn}
U_u(f_u(y,m))-B_{0u}= \tilde{U}_u (y)- \tilde{U}_u(-m), \quad  y,m \in \mathbb{R}, u \in [0,T].
\end{equation}
Furthermore, when $U_u$ is invertible, this reduces to
\begin{equation*}
f_u(y,m)= U_u^{-1}\left(\tilde{U}_u (y)- \tilde{U}_u(-m)+B_{0u} \right), \quad  y,m \in \mathbb{R}. 
%\label{eq: condition-f-cce-dyn}
\end{equation*}
\end{proposition}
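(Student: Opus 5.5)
We argue exactly as in the proof of Proposition \ref{prop: certainty-shortfall}, but at the level of the conditional acceptance sets. For the fully-dynamic certainty equivalent we first note, as in the static case, that for $m_t\in L^p(\mathcal{F}_t)$
\begin{equation*}
\mathcal{A}^{\rho^{\tilde U_u},m_t}_{tu}=\{Y\in L^p(\mathcal{F}_u): E[\tilde U_u(Y)\vert\mathcal{F}_t]\geq \tilde U_u(-m_t)\}
=\{Y: E[\tilde U_u(Y)-\tilde U_u(-m_t)\vert\mathcal{F}_t]\geq 0\},
\end{equation*}
using \eqref{eq: certainty-dyn} and monotonicity of $\tilde U_u$, and the fact that $\tilde U_u(-m_t)$ is $\mathcal{F}_t$-measurable. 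The h-generalized shortfall has acceptance sets \eqref{eq: acceptance generalized shortfall dynamic}, i.e. $\{Y: E[U_u(f_u(Y,m_t))-B_{tu}\vert\mathcal{F}_t]\geq 0\}$. Equality of the two fully-dynamic risk measures is equivalent to equality of these families of acceptance sets for all $t\le u$ and all $m_t$.

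\emph{Sufficiency.} If \eqref{eq: equiv B} and \eqref{eq: U-tilde-f-dyn} hold, then $U_u(f_u(y,m))-B_{tu}=\tilde U_u(y)-\tilde U_u(-m)$ for all $t\le u$ and all $y,m$. Plugging in $y=Y(\omega)$ and $m=m_t(\omega)$ and taking conditional expectations shows that the two acceptance sets coincide. Hence the essential infima coincide.

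\emph{Necessity.} This is the main step. We repeat the $\min$-argument of Proposition \ref{prop: certainty-shortfall} with the conditional functionals
\begin{equation*}
\tilde g_t(Y)=E[\tilde U_u(Y)-\tilde U_u(-m_t)\vert\mathcal{F}_t], \qquad g_t(Y)=E[U_u(f_u(Y,m_t))-B_{tu}\vert\mathcal{F}_t],
\end{equation*}
and deduce $\tilde g_t(Y)=g_t(Y)$ a.s. for all $Y$ and $m_t$. We then specialize to deterministic $Y\equiv y$ and $m_t\equiv m$, which gives
\begin{equation*}
U_u(f_u(y,m))-B_{tu}=\tilde U_u(y)-\tilde U_u(-m) \qquad \text{for all } t\le u,\; y,m.
\end{equation*}
The right-hand side does not depend on $t$, so $B_{tu}$ cannot depend on $t$. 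Comparing $t$ with $t=0$ yields \eqref{eq: equiv B}, and then the case $t=0$ is exactly \eqref{eq: U-tilde-f-dyn}.

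The delicate point is the conditional version of the $\min$-argument. Set equality only gives a pointwise implication on sets of the form $\{\tilde g_t(Y)\ge0\}$, and $Y$ must be kept fixed across levels. I would handle this by localization: for $A\in\mathcal{F}_t$, pair $Y1_A+Y'1_{A^c}$ with suitable shifts $m_t$. Equivalently, reduce to constant $y$ and $m$ varying over a dense set, and use the monotonicity in $m$ to turn equality of the superlevel sets into equality of the functions, as in the static case. Finally, when $U_u$ is invertible, applying $U_u^{-1}$ to \eqref{eq: U-tilde-f-dyn} gives the stated formula for $f_u$.
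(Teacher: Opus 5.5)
Your outline reproduces the paper's own argument step by step: conditional acceptance sets, the $\min$-argument of Proposition~\ref{prop: certainty-shortfall}, specialization to constants, and $t$-independence of $B_{tu}$. Your sufficiency half is fine. The necessity half, however, inherits the paper's flaw, and it fails exactly at ``deduce $\tilde g_t(Y)=g_t(Y)$''.

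Equality of the sets $\{\tilde g_t(Y)\ge 0\}$ and $\{g_t(Y)\ge 0\}$ says nothing about the size of the functionals. Replacing $g_t$ by $c\,g_t$ with $c>0$ leaves every such set unchanged. The $\min$ step you borrow (here and in the static proof) is a non sequitur: knowing $\tilde g(Y)\ge0$ and $g(Y)\ge0$ does not give $\tilde g(Y)=\min(\tilde g(Y),g(Y))$, and it says nothing about $Y$ outside the acceptance sets. You rightly flagged this as the delicate point. But neither localization over $A\in\mathcal{F}_t$ nor monotonicity in $m$ can repair it, because the ``only if'' is false as stated. Take $\tilde U_u$ strictly increasing, $U_u(x)=x$, $B_{tu}=0$ and $f_u(y,m)=2(\tilde U_u(y)-\tilde U_u(-m))$. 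Then $\mathcal{A}^{U,f,B;m_t}_{tu}=\{Y: E[\tilde U_u(Y)\vert\mathcal{F}_t]\ge \tilde U_u(-m_t)\}$. These are the same acceptance sets as for the factor-$1$ triple that your sufficiency argument covers. So this h-generalized shortfall equals the certainty equivalent, yet \eqref{eq: U-tilde-f-dyn} fails because $\tilde U_u$ is non-constant.

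Your strategy can prove something weaker, once equality of the acceptance sets at every deterministic level $m$ is granted. For $t<u$, test with $Y$ taking values $y_1>-m>y_2$ on an event in $\mathcal{F}_u$ independent of $\mathcal{F}_t$ (Brownian increments on $(t,u]$), with arbitrary conditional probability $p$. Matching the threshold values of $p$ forces $U_u(f_u(y,m))-B_{tu}=c_{tu}(m)\,(\tilde U_u(y)-\tilde U_u(-m))$ for some $c_{tu}(m)>0$. Setting $y=-m$ then gives $B_{tu}=U_u(f_u(-m,m))$, so \eqref{eq: equiv B} does follow. However, \eqref{eq: U-tilde-f-dyn} holds only up to the positive factor $c_u(m)$, so the statement needs weakening or an extra normalization.

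Your first reduction is also valid in one direction only. A risk measure determines its sublevel sets $\{\rho_{tu}\le m_t\}$, not the sets $\mathcal{A}^{U,f,B;m_t}_{tu}$. The latter can be altered at a single level $m_0$ without changing the essential infimum, for example by subtracting $1$ from $U_u\circ f_u$ at $m=m_0$; recall that $f_u$ is not even assumed monotone in $m$. This gives further counterexamples to necessity.
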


\noindent
\begin{proof}
    The argument is similar to Proposition \ref{prop: certainty-shortfall} with the natural adjustments to the dynamic setting. 
    Additionally, we see that we obtain the equation
    \begin{equation*} 
U_u(f_u(y,m))-B_{tu}= \tilde{U}_u (y)- \tilde{U}_u(-m),
\end{equation*}
which must hold for any $t \leq u$. Thus \eqref{eq: equiv B} follows together with \eqref{eq: U-tilde-f-dyn}.
\end{proof}

\begin{proposition}[horizon longevity] 
Given a family $(\tilde{U}_u)_{u }$ of concave and strictly increasing functions $\tilde{U}_u(\cdot)$, the associated fully-dynamic certainty equivalent risk measure satisfies horizon longevity whenever $(\tilde{U}_u)_{u }$ is non-increasing in $u$.
\end{proposition}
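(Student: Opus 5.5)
The plan is to reduce the statement to the acceptance-set characterization of horizon longevity for quasi-convex fully-dynamic risk measures proved above (item a) of the horizon longevity proposition for generalized shortfalls). Fix $t\le u\le v$ and $m_t\in L^p(\mathcal{F}_t)$. Since each $\tilde U_u$ is strictly increasing, \eqref{eq: certainty-inverse-dyn} holds. The conditional analogue of the static computation preceding Proposition \ref{prop: certainty-shortfall} then gives
\begin{equation*}
\mathcal{A}^{\rho,m_t}_{tu}=\{Y\in L^p(\mathcal{F}_u):\ E[\tilde U_u(Y)\vert\mathcal{F}_t]\ge \tilde U_u(-m_t)\},
\end{equation*}
and similarly for $v$. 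By item a), it is then enough to prove the inclusion $\mathcal{A}^{\rho,m_t}_{tv}\cap L^p(\mathcal{F}_u)\subseteq\mathcal{A}^{\rho,m_t}_{tu}$.

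Equivalently, I will use Proposition \ref{prop: certainty-shortfall-dyn} to view $(\rho^{\tilde U_u}_{tu})_{t,u}$ as an h-generalized shortfall. I take $U_u=\tilde U_u$, $B_{tu}\equiv B_{0u}$, and $f_u$ defined through \eqref{eq: U-tilde-f-dyn}, so that
\begin{equation*}
U_u\circ f_u(y,m)-B_{tu}=\tilde U_u(y)-\tilde U_u(-m).
\end{equation*}
Item b) of the horizon longevity proposition then applies as soon as $u\mapsto \tilde U_u(y)-\tilde U_u(-m)$ is non-increasing. Concretely, take $Y$ that is $\mathcal{F}_u$-measurable with $E[\tilde U_v(Y)\vert\mathcal{F}_t]\ge\tilde U_v(-m_t)$. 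Subtracting, we get $E[\tilde U_v(Y)-\tilde U_v(-m_t)\vert\mathcal{F}_t]\ge0$. The monotonicity in $u$ of the family $(\tilde U_u)_u$ then upgrades this to $E[\tilde U_u(Y)-\tilde U_u(-m_t)\vert\mathcal{F}_t]\ge0$, that is, $Y\in\mathcal{A}^{\rho,m_t}_{tu}$. Here one must handle that $m_t$ is random. I will approximate $m_t$ by simple $\mathcal{F}_t$-measurable random variables and use locality of conditional expectation on the atoms $\{m_t=m_k\}$, so the deterministic comparison can be applied set by set.

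The main obstacle is precisely the monotonicity step. Pointwise $\tilde U_v\le\tilde U_u$ alone controls only $E[\tilde U_v(Y)\vert\mathcal{F}_t]\le E[\tilde U_u(Y)\vert\mathcal{F}_t]$, but it also lowers the threshold $\tilde U_v(-m_t)\le \tilde U_u(-m_t)$, so the two effects compete. The direct chain $\tilde U_v(-\rho_{tv})=E[\tilde U_v(X)\vert\mathcal{F}_t]\le E[\tilde U_u(X)\vert\mathcal{F}_t]=\tilde U_u(-\rho_{tu})$ also stalls at comparing $\tilde U_v$ and $\tilde U_u$ at different points.

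I will therefore read ``$(\tilde U_u)_u$ non-increasing in $u$'' in the sense in which the paper uses it for shortfalls in Proposition \ref{prop: longevity-shortfall-u}, namely that the acceptance functional $\tilde U_u(\cdot)-\tilde U_u(-m)$ is non-increasing in $u$ for each fixed level $m$. This is the condition $(U_u\circ f_u-B_{tu})$ non-increasing required in item b). I will make this interpretation explicit in the proof and check that it is satisfied, for instance, by the h-entropic example. Once it is in place, the inclusion of acceptance sets and hence $\rho^{\tilde U_v}_{tv}(X)\ge\rho^{\tilde U_u}_{tu}(X)$ follow immediately.
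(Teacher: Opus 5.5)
Your diagnosis of the competing effects is correct. However, your resolution does not prove the stated proposition. You replace the hypothesis ``$\tilde U_v\le\tilde U_u$ pointwise for $u\le v$'' (which is what the statement says, and what the paper's chain uses) by ``$\tilde U_v(y)-\tilde U_v(-m)\le\tilde U_u(y)-\tilde U_u(-m)$ for all $y,m$''. That condition is degenerate. Applying it at $(y,-m)=(b,a)$ and at $(y,-m)=(a,b)$ gives $\tilde U_v(b)-\tilde U_v(a)=\tilde U_u(b)-\tilde U_u(a)$ for all $a,b$, so $\tilde U_v-\tilde U_u$ is constant. Since a certainty equivalent is unchanged when a constant is added to the utility, this forces $\rho_{tv}=\rho_{tu}$ on $L^p(\mathcal{F}_u)$. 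You would be proving the restriction property ($\gamma\equiv 0$), not h-longevity.

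Your proposed sanity check also fails. Used as a certainty equivalent, $\tilde U_u(x)=1-e^{-x+A(0,u)}$ gives the plain entropic measure $\ln E[e^{-X}\vert\mathcal{F}_t]$ for every $u$. Moreover, $\tilde U_u(y)-\tilde U_u(-m)=e^{A(0,u)}(e^{m}-e^{-y})$ increases in $u$ when $y>-m$. The h-entropic measure is a shortfall with $t$-dependent target $B_{tu}$, which Proposition~\ref{prop: certainty-shortfall-dyn} excludes from the certainty-equivalent class.

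The gap cannot be closed, because the proposition is false as stated. The paper's proof breaks exactly at the comparison you flagged: its second inequality $-\tilde U_u^{-1}(z)\le-\tilde U_v^{-1}(z)$ is reversed, since $\tilde U_v\le\tilde U_u$ with both strictly increasing gives $\tilde U_v^{-1}\ge\tilde U_u^{-1}$. Here is a counterexample:
\begin{itemize}
\item Let $U(x)=\min(x,2x)$ and take $t<u<v$.
\item Set $\tilde U_s=U$ for $s<v$ and $\tilde U_s=U(\cdot-1)$ for $s\ge v$. This family is non-increasing in $s$, because $U$ is increasing.
\item Let $X=1$ on $\{B^1_u>B^1_t\}$ and $X=-1$ otherwise.
\item Then $E[U(X)\vert\mathcal{F}_t]=-\frac12=U(-\frac14)$, so $\rho_{tu}(X)=\frac14$.
\item On the other hand, $U(X-1)\in\{-4,0\}$ has conditional mean $-2=\tilde U_v(0)$, so $\rho_{tv}(X)=0<\rho_{tu}(X)$.
\end{itemize}
A pointwise order between utilities is not invariant under the positive affine changes $\tilde U\mapsto a\tilde U+b$ that leave certainty equivalents unchanged, so it is the wrong hypothesis.

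The right hypothesis is the Arrow--Pratt ordering. Suppose $\tilde U_v=\phi\circ\tilde U_u$ with $\phi$ concave ($\phi$ is then automatically increasing). Conditional Jensen gives
\[
\tilde U_v(-\rho_{tv}(X))=E[\phi(\tilde U_u(X))\vert\mathcal{F}_t]\le\phi\big(E[\tilde U_u(X)\vert\mathcal{F}_t]\big)=\tilde U_v(-\rho_{tu}(X)),
\]
hence $\rho_{tv}(X)\ge\rho_{tu}(X)$. This is the version worth proving, and it needs none of the acceptance-set machinery.
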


\noindent
\begin{proof}
From \eqref{eq: certainty-inverse-dyn}, it holds that
\begin{eqnarray*}
\rho_{tu}^{\tilde{U}_u}(X) &=& - \tilde{U}_u^{-1}\left( E[\tilde{U}_u (X) | \mathcal{F}_t] \right) \\
&\leq &- \tilde{U}_u^{-1}\left( E[\tilde{U}_v (X) | \mathcal{F}_t] \right) \\
&\leq &- \tilde{U}_v^{-1}\left( E[\tilde{U}_v (X) | \mathcal{F}_t] \right) \\
&=&\rho_{tv}^{\tilde{U}_v}(X),
\end{eqnarray*}
for $X \in L^{p}(\mathcal{F}_u)$ and $0 \leq t \leq u\leq v$. Horizon longevity is then verified.
\end{proof}

%%%%%%%%%%%

\section{Hq-entropic on losses as a h-generalized shortfall}

As conclusion to this work, it is natural to study if the hq-entropic risk measure on losses has relationship with h-generalized shortfall and fully-dynamic certainty equivalent risk measures. We recall that
\begin{equation*}
\rho_{tu}^{hq}(X) =  \ln_q E  \left[\left. \exp_q \left((X+\beta)^- + \alpha_q + A(t,u)  \right) \right| \mathcal{F}_t \right],
\end{equation*}
with $A(t,u) \geq 0$, for all $t \leq u$, increasing in $u$, see \eqref{eq: hq-entropic}. 

\begin{theorem}
The hq-entropic risk measure on losses is a h-generalized shortfall with representation via $(U_u,f_u, B_{tu})_{t,u}$ satisfying:
\begin{equation}
\label{eq: hq entropic as shortfall}
 U_u(f_u(y,m)) - B_{tu} = \exp_q (m)- \exp_q \left((y+\beta)^- + \alpha_q+ A(t,u)  \right) , \qquad y, m \in \mathbb{R}.
\end{equation}
\end{theorem}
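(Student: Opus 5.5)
The plan is to work directly at the level of acceptance sets, i.e.\ with the conditional inequality defining \eqref{eq: def-dynamic-generalized-shortfall}. The first step is to show that $\rho^{hq}_{tu}(X)\le m_t$ if and only if
\[
E\big[\exp_q(m_t)-\exp_q\big((X+\beta)^-+\alpha_q+A(t,u)\big)\,\big|\,\mathcal{F}_t\big]\ge 0 .
\]
This rests on two properties of $\exp_q$ for $q\in(0,1)$. On $[\frac{1}{q-1},\infty)$ it is continuous and strictly increasing, and $\ln_q$ is its inverse on the range $[0,\infty)$. Also $\exp_q(m)$ is $\mathcal{F}_t$-measurable whenever $m=m_t$ is, so it can be pulled out of the conditional expectation.

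For the equivalence itself, start from the explicit formula \eqref{eq: hq-entropic}:
\[
\rho^{hq}_{tu}(X)=\ln_q E\big[\exp_q\big((X+\beta)^-+\alpha_q+A(t,u)\big)\,\big|\,\mathcal{F}_t\big].
\]
Since $(X+\beta)^-+\alpha_q+A(t,u)\ge \alpha_q\ge \frac{1}{q-1}$, the argument of $\exp_q$ stays in the admissible domain, so the conditional expectation lies in $[0,\infty)$ and $\ln_q$ of it is well defined. Applying the increasing map $\exp_q$ to both sides of $\rho^{hq}_{tu}(X)\le m_t$ then gives the displayed inequality. To make this step clean, I would define $\exp_q(m)$ for $m<\frac{1}{q-1}$ by the natural extension $\exp_q(m)=0$ (or $-\infty$). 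With that convention, values of $m_t$ below the lower bound of $\rho^{hq}_{tu}$ are automatically non-acceptable.

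Next I would identify the triple. Choose $U_u(x)=x$ and $B_{tu}=0$, and define $f_u$ through
\[
f_u(y,m)=\exp_q(m)-\exp_q\big((y+\beta)^-+\alpha_q+A(t,u)\big).
\]
There is a subtlety here: $f_u$ then depends on $t$ through $A(t,u)$, while the definition requires $f_u$ to depend on $u$ only. I would fix this with $A(t,u)=A(0,u)-A(0,t)$, valid in the BSDE case $A(t,u)=\int_t^u a(s)\,ds$. Rather than moving this term into $B_{tu}$ (which is not directly possible, because it sits inside the nonlinear $\exp_q$), I would absorb it into $m$: replace $m_t$ by $m_t+A(0,t)$, or equivalently write $\exp_q(m)$ with a shifted argument. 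In general I would only claim the identity \eqref{eq: hq entropic as shortfall} for the composite $U_u\circ f_u - B_{tu}$, which is exactly how the theorem is phrased. The honest way to say it is that any triple with the composite equal to the right-hand side works, since the acceptance sets \eqref{eq: acceptance generalized shortfall dynamic} depend only on $U_u(f_u(Y,m_t))-B_{tu}$.

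With this choice the set in \eqref{eq: def-dynamic-generalized-shortfall} is exactly $\{m_t: \rho^{hq}_{tu}(X)\le m_t\}$. Its essential infimum is $\rho^{hq}_{tu}(X)$, and it is attained provided $\rho^{hq}_{tu}(X)\in L^p(\mathcal{F}_t)$; this integrability comes from $X\in L^2$ and the growth of $\ln_q$.

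The main obstacle I expect is the combination of this $t$-dependence with the domain issues of $\exp_q$ and $\ln_q$ at the boundary $\frac{1}{q-1}$, which is where the equivalence ``$\le$ iff $\exp_q$ of it is $\le$'' can fail. I would handle it with the convention above and by noting that every $m_t$ in the set automatically satisfies $m_t\ge \ln_q(\cdot)\ge \alpha_q\ge \frac{1}{q-1}$.
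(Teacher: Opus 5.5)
Your main line is the paper's own proof. You use that $\exp_q$ is strictly increasing on $[\frac{1}{q-1},\infty)$ with inverse $\ln_q$ to rewrite $\rho^{hq}_{tu}(Y)\le m_t$ as $E[\exp_q((Y+\beta)^-+\alpha_q+A(t,u))\,|\,\mathcal{F}_t]\le \exp_q(m_t)$. You then observe that the sets \eqref{eq: acceptance generalized shortfall dynamic} depend only on $U_u\circ f_u-B_{tu}$, so any triple satisfying \eqref{eq: hq entropic as shortfall} (e.g.\ $U_u(x)=x$) reproduces $\rho^{hq}_{tu}$. That part is fine.

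\textbf{The repair of the $t$-dependence fails.} Moving $A(0,t)$ from inside $\exp_q$ into the level is an exponential-specific trick. It is what makes $B_{tu}=1-e^{A(0,t)}$ work in the h-entropic example, since $e^{x-c}=e^{-c}e^{x}$. For $q\in(0,1)$ one only has $\exp_q(x+c)=\exp_q(c)\,\exp_q\big(x/(1+(1-q)c)\big)$, so a shift also rescales the random part. Concretely, put $Z=(X+\beta)^-+\alpha_q+A(t,u)$, $W=1+(1-q)Z\ge 0$, $k=\frac{1}{1-q}>1$, $\delta=(1-q)A(0,t)$ and $\|\xi\|_{k,t}=(E[|\xi|^k\,|\,\mathcal{F}_t])^{1/k}$. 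Your shifted shortfall returns
\[
\ln_q E\big[\exp_q(Z+A(0,t))\,\big|\,\mathcal{F}_t\big]-A(0,t)=\tfrac{1}{1-q}\big(\|W+\delta\|_{k,t}-\delta-1\big)\le \tfrac{1}{1-q}\big(\|W\|_{k,t}-1\big)=\rho^{hq}_{tu}(X)
\]
by the conditional Minkowski inequality. The inequality is strict on a non-null set whenever $A(0,t)>0$ and $(X+\beta)^-$ is not $\mathcal{F}_t$-measurable. So your construction gives a different risk measure, and the shifted level $m_t+A(0,t)$ is $t$-dependent anyway.

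\textbf{Falling back on ``only the composite matters'' does not help either.} For fixed $u$, the left side of \eqref{eq: hq entropic as shortfall} depends on $t$ only through the additive constant $B_{tu}$. But for $A(t,u)\ne A(t',u)$ the map $z\mapsto \exp_q(z+A(t,u))-\exp_q(z+A(t',u))$ is strictly monotone on $[\alpha_q,\infty)$, because $\exp_q''(x)=q\,(1+(1-q)x)^{\frac{1}{1-q}-2}>0$. So with $U_u$ and $f_u$ genuinely independent of $t$, the identity cannot hold once $a\not\equiv 0$. The correct resolution is the one the paper uses tacitly in \eqref{eq: f-hq}, where $A(t,u)$ and $B_{tu}$ sit inside $f_u$. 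Either let the aggregator be indexed by $(t,u)$, in which case your acceptance-set argument goes through verbatim, or take $A$ independent of $t$. State that explicitly instead of proposing the shift.

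\textbf{Two smaller points.}
\begin{itemize}
\item $\exp_q(z)$ grows like $z^{1/(1-q)}$, so $X\in L^2$ guarantees finiteness of $E[\exp_q(\cdot)\,|\,\mathcal{F}_t]$ only when $q\le\frac12$. Likewise, $\rho^{hq}_{tu}(X)\in L^p(\mathcal{F}_t)$ needs matching integrability of $X^-$, not just $X\in L^2$.
\item The convention $\exp_q(m)=0$ for $m<\frac{1}{q-1}$ breaks the equivalence when $\alpha_q=\frac{1}{q-1}$ and $A(t,u)=0$ (e.g.\ $t=u$). On $\{X\ge-\beta\}$ every level then becomes acceptable, so the essential infimum drops to $-\infty$. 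Use $-\infty$ instead of $0$.
\end{itemize}
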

In particular, we can choose $U_u(x) = x$ and 
\begin{align}
f_u(y, m) 
& =  B_{tu} + \exp_q (m) - \exp_q ((y+\beta)^- +\alpha_q+ A(t,u)) \label{eq: f-hq}\\
&=  B_{tu} + \exp_q (m) \left[ 1 - \exp_q\left( \frac{(y+\beta)^- +\alpha_q+ A(t,u) - m}{1+(1-q)m}\right) \right]\qquad  y,m \in \mathbb{R}, \notag
\end{align}
for some target $(B_{tu})_{t,u}$.

\vspace{2mm}
\noindent
\begin{proof}
We consider the acceptance sets of the cash non-additive risk measure \eqref{eq: accept quasi-convex dyn}. We compare
$$
\mathcal{A}_{tu}^{\rho^{hq}, m_t} = \big\{ Y\:: \rho_{tu}^{hq}(Y) \leq m_t\big\}.
$$
and
$$  
\mathcal{A}_{tu}^{U,f,B;m_t} = \{Y \in L^{p}(\mathcal{F}_u): E[U_u(f_u(Y,m_t)) |\mathcal{F}_t]\geq B_{tu} \}.
$$
Observe that the relation $\rho_{tu}^{hq}(Y) \leq m_t$ is actually 
$$
\ln_q E  \left[\left. \exp_q \left((Y+\beta)^- +\alpha_q+ A(t,u)  \right) \right| \mathcal{F}_t \right] \leq m_t = \ln_q \tilde m_t \quad (\tilde m_t \geq 0),
$$
which corresponds to 
$$
E  \left[\left. \exp_q \left((Y+\beta)^- +\alpha_q+ A(t,u)  \right) \right| \mathcal{F}_t \right]\leq\tilde m_t.
$$
A sufficient condition to have 
$$
\mathcal{A}_{tu}^{\rho^{hq}, \tilde m_t} = \big\{ Y\::  E  \left[\left. \exp_q \left((Y+\beta)^- + \alpha_q + A(t,u)  \right) \right| \mathcal{F}_t \right]\leq\tilde m_t\big\} =  \mathcal{A}_{tu}^{U,f,B; \tilde m_t}
$$
is then that for some $(U_u, f_u, B_{tu})_{t,u}$, the relationship
\begin{equation*}
 U_u(f_u(y,m)) - B_{tu} = \exp_q (m)- \exp_q \left((y+\beta)^- + \alpha_q + A(t,u)  \right) 
\end{equation*}
holds.
\end{proof}

\vspace{2mm}
\noindent
Note that $(U_u\circ f_u - B_{tu})_u$ is monotonically non-increasing in $u$, which is coherent with the representation of a shortfall risk measure with h-longevity, see Proposition \ref{prop: properties generalized short}.

\smallskip
\begin{remark}
   The hq-entropic risk measure on losses is not a fully-dynamic certainty equivalent risk measure.
For this, we refer to Proposition \ref{prop: certainty-shortfall-dyn} and we can see that, the two relationships 
\eqref{eq: hq entropic as shortfall} and \eqref{eq: U-tilde-f-dyn} cannot be the same for all $y,m \in \mathbb{R}$ even in the cases $\beta = 0$, $\alpha_q =0$, and $A \equiv 0$. 
\end{remark}

\smallskip

Hereafter, we discuss the dual representation of the hq-entropic risk measure on losses and the associated family $(\bar{\rho}_m)_{m \in \mathbb{R}}$ defined \eqref{eq: rho-bar-m}. 

\begin{remark}
Consider now, for simplicity, the static case where $t=0$ and $u=T$.
The $(U,f)$ with $U(x) = x$ and $f$ as in \eqref{eq: f-hq} satisfy the assumptions of Proposition \ref{prop: dual repr- gener shortfall}, hence the dual representation \eqref{eq: dual repr gener short}-\eqref{eq: R-dual repr gener short} applies also to the hq-entropic risk measure on losses. 
Observe that, by \eqref{eq: f-hq},
\begin{align*}
 \mathcal{A}^{U,f,B;m} &= \{ Y \in L^{p}(\mathcal{F}_T): E[\exp_q (m)- \exp_q \left((Y+\beta)^- + \alpha_q + A(0,T)  \right)] \geq 0 \} \notag \\
 &=\{ Y \in L^{p}(\mathcal{F}_T):  E[ \exp_q \left((Y+\beta)^- + \alpha_q + A(0,T)  \right)]  \leq \exp_q (m) \} \notag \\
 &=\{ Y \in L^{p}(\mathcal{F}_T):  \ln_q E[ \exp_q \left((Y+\beta)^- +\alpha_q +  A(0,T)  \right)]\leq m \}. \label{eq: A-m-hq}
\end{align*}
Proceeding as in Remark \ref{rem: bar-rho-m}, we can associate a family $(\bar{\rho}_m)_{m}$ of convex cash additive risk measures to the hq-entropic risk measure on losses $\rho^{hq}$. In the present case, such a family reduces, for any $X \in L^p(\mathcal{F}_T)$, to
\begin{align*} 
 \bar{\rho}_m(X)&= \inf\{ k \in \mathbb{R}: k+X \in \mathcal{A}^{U,f,B;m}\} \\
 &= \inf\{ k \in \mathbb{R}: \ln_q E[ \exp_q \left((X+ k + \beta)^- + \alpha_q + A(0,T)  \right)]\leq m\} .
\end{align*}
In other works, the new family of risk measures depends on hq-entropic risk measures with an ''optimized'' target instead of the simple target $\beta$, in the sense that one is minimizing over the translation parameter $k$.
\end{remark}


\begin{thebibliography}{99}

\bibitem{bahlali-et-al} Bahlali K., Eddahbi M.H., and Ouknine Y. Quadratic BSDE with $L^2$-terminal data: Krylov's estimate, It\^{o}-Krylov's formula and existence results. {\it Ann. Appl. Probab.}, 2017, 45, 2377--2397.

\bibitem{bahlali-tangpi} Bahlali, K. and Tangpi, L. BSDEs driven by $| z|^ 2/y $ and applications to PDEs and decision theory. arXiv:1810.05664.

\bibitem{barrieu-el-karoui} Barrieu P., and El Karoui N. Pricing, hedging and optimally designing derivatives via minimization of risk measures. In: {\it Indifference Pricing}, edited by R. Carmona, 2009 (Princeton University Press).

\bibitem{BLR} Bellini, F., Laeven, R. J., and Rosazza Gianin, E. Robust return risk measures. {\it Math. Financ. Econ.}, 2018, 12, 5--32.

\bibitem{biagini-etal-1} Biagini, F., Fouque, J. P., Frittelli, M., and Meyer-Brandis, T. A unified approach to systemic risk measures via acceptance sets. {\it Math. Finance}, 2019, 29, 329--367.

\bibitem{biagini-etal-2} Biagini, F., Fouque, J. P., Frittelli, M., and Meyer-Brandis, T. On fairness of systemic risk measures. {\it Finance Stoch.}, 2020, 24, 513--564.

\bibitem{bion-nadal-di-nunno} Bion-Nadal J. and Di Nunno G. Fully-dynamic risk-indifference pricing and no-good-deal bounds. {\it SIAM J. Financial Math.}, 2020, 11, 620--658.

\bibitem{BCHMPeng} Briand P., Coquet F., Hu Y., M\'{e}min J., and Peng S. A converse comparison theorem for BSDEs and related properties of g-expectation. {\it Electron. Commun. Probab.}, 2000, 5, 101--117.

\bibitem{cerreia}
Cerreia‐Vioglio, S., Maccheroni, F., Marinacci, M., and Montrucchio, L. Risk measures: rationality and diversification. {\it Math. Finance}, 2011, 21, 743--774.

\bibitem{cheridito-stadje} Cheridito, P. and Stadje, M. Time-inconsistency of VaR and time-consistent alternatives. \textit{Finance Research Letters}, 2009, 6, 40--46.

\bibitem{DNRG1} Di Nunno G. and Rosazza Gianin E. Fully-dynamic risk measures: horizon risk, time-consistency, and relations with BSDEs and BSVIEs. {\it SIAM J. Financial Math.}, 2024, 15, 399--435.

\bibitem{doldi-fritt} Doldi, A. and Frittelli, M. Conditional systemic risk measures. \textit{SIAM J. Financial Math.}, 2021, 12, 1459--1507.

\bibitem{drapeau-kupper} Drapeau, S. and Kupper, M. Risk preferences and their robust representation. \textit{Math. Oper. Res.}, 2013, 38, 28--62.


\bibitem{EK-rav} El Karoui N. and Ravanelli C. Cash subadditive risk measures and interest rate ambiguity. {\it Math. Finance} 2009, 19, 561--590.

\bibitem{farkas}
Farkas W., Koch-Medina P., and Munari C. Beyond cash-additive risk measures: when changing the num\'{e}raire fails. {\it Finance Stoch.}, 2014, 18, 145--173.

\bibitem{filipovic}
Filipovic D. Optimal numeraires for risk measures. {\it Math. Finance}, 2008, 18, 333--336.

\bibitem{follmer-schied-book} F\"{o}llmer, H. and Schied, A. {\it Stochastic Finance}, third edition, 2011 (De Gruyter: Berlin).

\bibitem{frittelli-maggis} Frittelli, M. and Maggis, M. Conditional certainty equivalent. \textit{Int. J. Theor. Appl. Finance}, 2011, 14, 41--59.

\bibitem{wang-csa}
Han X., Wang Q., Wang R., and Xia J. Cash-subadditive risk measures without quasi-convexity. arXiv:2110.12198.

\bibitem{jiang} 
Jiang L. Convexity, translation invariance and subadditivity for g-expectations and related risk measures. {\it Ann. Appl. Probab.}, 2008, 18, 245--258.

\bibitem{kolylanski} Kobylanski M. Backward stochastic differential equations and partial differential equations with quadratic growth. {\it Ann. Probab.}, 2000, 28, 558--602.

\bibitem{laeven-et-al} Laeven R.J., Rosazza Gianin E., and Zullino M. Dynamic return and star-shaped risk measures via BSDEs. Forthcoming on \textit{Finance Stoch.}, 2023.

\bibitem{ma-tian} Ma H. and Tian D. Generalized entropic risk measures and related BSDEs. {\it Statist. Probab. Lett.}, 2021, 174, 109110.

\bibitem{maggis-maran} Maggis, M. and Maran, A. Stochastic dynamic utilities and intertemporal preferences. {\it Math. Financ. Econ.}, 2021, 15, 611--638.

\bibitem{mastrogiacomo-rg}
Mastrogiacomo E. and Rosazza Gianin E. Time-consistency of cash-subadditive risk measures. arXiv:1512.03641.

\bibitem{peng97} Peng S. BSDE and related g-expectations. In  {\it Pitman Research Notes in Mathematics Series}, edited by N. El Karoui and L. Mazliak, 1997, 364, 141--159.

\bibitem{revuz-yor} Revuz~D. and Yor,~M. {\it Continuous martingales and Brownian motion}, 2013 (Springer: Berlin).

\bibitem{rg} Rosazza~Gianin E. Risk measures via g-expectations. {\it Insurance Math. Econom}, 2006, 39, 19--34.

\bibitem{tian} Tian, D. Pricing principle via Tsallis relative entropy in incomplete markets. \textit{SIAM J. Financial Math.}, 2023, 14, 250--278.

\bibitem{tsallis1}
Tsallis C. Possible generalization of Boltzmann-Gibbs statistics. {\it J. Stat. Phys.}, 1988, 52, 479--487.

\bibitem{tsallis2}
Tsallis C. {\it Introduction to nonextensive statistical mechanics: approaching a complex world}, 2009 (Springer: New York).

\bibitem{zheng-li}
Zheng, S.Q. amd Li, S.M. Representation theorem for generators of quadratic BSDEs. {\it Acta Math. Appl. Sinica (English Ser.)}, 2018, 34, 622--635.

\end{thebibliography}
\end{document}